\newtheorem{theorem}{{Theorem}}
\newtheorem{lemma}[theorem]{{Lemma}}
\newtheorem{proposition}[theorem]{{Proposition}}
\newtheorem{corollary}[theorem]{{Corollary}}
\newtheorem{definition}{{Definition}}
\newtheorem{Result}{{Result}}
\theoremstyle{remark}
\newtheorem{remark}{{Remark}}
\newcommand{\cC}{{\cal C}}
\newcommand{\cF}{{\cal F}}
\DeclareMathAlphabet{\mathbfsl}{OT1}{ppl}{b}{it} 
\newcommand{\bu}{\mathbfsl{u}}
\newcommand{\bx}{\mathbfsl{x}}
\newcommand{\mathset}[1]{\left\{#1\right\}}
\newcommand{\abs}[1]{\left|#1\right|}
\newcommand{\floor}[1]{\left\lfloor #1 \right\rfloor}
\def\QEDclosed{\mbox{\rule[0pt]{1.3ex}{1.3ex}}} 
\def\QED{\QEDclosed} 
\def\proof{\noindent\hspace{2em}{\itshape Proof: }}
\def\endproof{\hspace*{\fill}~\QED\par\endtrivlist\unskip}
\newcommand{\be}[1]{\begin{equation}\label{#1}}
\newcommand{\ee}{\end{equation}}
\renewcommand{\leq}{\leqslant}
\renewcommand{\geq}{\geqslant}
\renewcommand{\Bbb}{\mathbb}
\newcommand{\N}{{\Bbb N}}
\newcommand{\R}{{\Bbb R}}
\newcommand{\Cref}[1]{Co\-ro\-lla\-ry\,\ref{#1}}
\newcommand{\Ftwo}{{{\Bbb F}}_{\!2}}
\newcommand{\zero}{{\mathbf 0}}
\newcommand{\bit}{\ensuremath{\mathset{0,1}}}
\newcommand{\RNum}[1]{\uppercase\expandafter{\romannumeral #1\relax}}
\newcommand{\goto}{{\ \rightarrow\ }}
\begin{document}

\title{Capacity-achieving Polar-based LDGM Codes}

\author{
James Chin-Jen Pang, Hessam Mahdavifar, and S. Sandeep Pradhan  
\thanks{This work was supported by the National Science Foundation under grants CCF--1717299, CCF--1763348, and CCF--1909771.
This paper was presented in part at the 2020 IEEE International Symposium on Information Theory.
}%
\thanks{The authors are with the Department of Electrical Engineering and Computer Science, University of Michigan, Ann Arbor, MI 48109, USA (e-mail: cjpang@umich.edu; hessam@umich.edu; pradhanv@umich.edu).}
}

\maketitle

\begin{abstract}
In this paper, we study codes with sparse generator matrices. More specifically, low-density generator matrix (LDGM) codes with a certain constraint on the weight of all the columns in the generator matrix are considered. 
Previous works have shown the existence of capacity-achieving LDGM codes with column weight bounded above by any linear function of the block length $N$ over the BSC, and by the logarithm of $N$ over the BEC.
In this paper, it is first shown that when a binary-input memoryless symmetric (BMS) channel $W$ and a constant $s>0$ are given, there exists a polarization kernel such that the corresponding polar code is capacity-achieving and the column weights of the generator matrix (GM) are bounded from above by $N^s$. 

Then, a general construction based on a concatenation of polar codes and a rate-$1$ code, and a new column-splitting algorithm that guarantees a much sparser GM, is given. More specifically, for any BMS channel and any $\epsilon > 2\epsilon^*$, where $\epsilon^* \approx 0.085$, an existence of a sequence of capacity-achieving codes with all the GM column weights upper bounded by $(\log N)^{1+\epsilon}$ is shown.
Furthermore, two coding schemes for BEC and BMS channels, based on a second column-splitting algorithm, are devised with low-complexity decoding that uses successive-cancellation. 
The  second splitting algorithm allows for the use of a low-complexity decoder  by preserving the reliability of the bit-channels observed by the source bits, and by increasing the code block length.
In particular, for any BEC and any $\lambda >\lambda^* = 0.5+\epsilon^*$, the existence of a sequence of capacity-achieving  codes where all the GM column weights are bounded from above by $(\log N)^{2\lambda}$ and  with decoding complexity $O(N\log \log N)$ is shown. 
The existence of similar capacity-achieving LDGM codes with low-complexity decoding is shown for any BMS channel, and for  any $\lambda >\lambda^{\dagger} \approx 0.631$.
The concatenation-based construction can also be applied to the random linear code ensemble to yield capacity-achieving codes with all the GM column weights being $O(\log N)$ and with (large-degree) polynomial decoding complexity.

\end{abstract}
\tableofcontents
\newpage

\section{Introduction}

Capacity-approaching error-correcting codes such as low-density parity-check (LDPC) codes \cite{gallager1962low} and polar codes \cite{arikan2009channel} have been extensively studied for applications in wireless and storage systems. Besides conventional applications of codes for error correction, a surge of new applications has also emerged in the past decade including crowdsourcing \cite{karger2011iterative,vempaty2014reliable}, distributed storage \cite{dimakis2010network}, and speeding up distributed machine learning \cite{lee2018speeding, SGDviaLDGM2019}. To this end, new motivations have arisen to study codes with sparsity constraints on their generator and/or parity-check matrices.
For instance, the stored data in a failed server needs to be recovered by downloading data from a few servers only, due to bandwidth constraints, imposing sparsity constraints in the decoding process in a distributed storage system. In crowdsourcing applications, e.g., when workers are asked to label items in a dataset, each worker can be assigned only a few items due to capability limitations, imposing sparsity constraints in the encoding process. More specifically, low-density generator matrix (LDGM) codes become relevant for such applications \cite{mazumdar2017semisupervised,pang2019coding}. 

\subsection{LDGM and Related Works}

LDGM codes, often regarded as the dual of LDPC codes, are associated with sparse factor graphs.
The sparsity of the generator matrices of LDGM codes leads to a low encoding complexity. However, unlike LDPC and polar codes, LDGM codes have not received significant attention. In \cite{mackay1999good, mackay1995good} it was pointed out that certain constructions of LDGM codes are not asymptotically \textit{good}, a behavior which is also studied using an error floor analysis in \cite{zhong2005approaching, garcia2003approaching}. 
Several prior works, e.g., \cite{zhong2005approaching, garcia2003approaching, zhong2005ldgm}, adopt concatenation of two LDGM codes to construct systematic capacity-approaching LDGM codes with significantly lower error floors in simulations. As a sub-class of LDPC codes, the systematic LDGM codes are advantageous for their low encoding and decoding complexity. 

In terms of the sparsity of the GM, the authors of \cite{LDGM_capAchieving2011} showed the existence of capacity-achieving codes over binary symmetric channels (BSC) using random linear coding arguments when the column weights of the GM are bounded by $\epsilon N$, for any $\epsilon > 0$, where $N$ is the code block length. Also, it is conjectured in \cite{LDGM_capAchieving2011} that column weight upper bounds that scale polynomially sublinear in $N$ suffice to achieve the capacity. For binary erasure channels (BEC), bounds that scale as $O(\log N)$ suffice for achieving the capacity, again using random linear coding arguments \cite{LDGM_capAchieving2011}. Furthermore, the scaling exponent of such random linear codes are studied in \cite{mahdavifar2017scaling}.
Later, in \cite{lin2018coding}, the existence of capacity-achieving systematic LDGM ensembles over any BMS channel with the expected value of the weight of the entire GM bounded by $\epsilon N^2$, for any $\epsilon > 0$, is shown.

In \cite{mazumdar2017semisupervised, pang2019coding},  the problem of label learning through queries from  a crowd of workers was formulated as a coding theory problem. Due to practical constraints in such crowdsourcing scenarios, each query can only contain a small number of items. 
When some workers do not respond, resembling a binary erasure channel, the authors showed that a combination of LDPC codes and LDGM codes gives a query scheme where the number of queries approaches the information-theoretic lower bound \cite{pang2019coding}. 

In the realm of quantum error correction, quantum low-density-generator-matrix (QLDGM) codes, quantum low-density-parity-check (QLDPC) codes, and other sparse-graph schemes have been extensively studied due to  the small numbers of quantum interactions per qubit during the encoding and/or error correction procedure, avoiding additional quantum gate errors and facilitating fault-tolerant decoding.
Amongst these schemes, the error correction performance of the LDGM-based codes proposed in \cite{PhysRevA.102.012423} was shown to outperform all other Calderbank-Steane-Shor (CSS) and non-CSS codes of similar complexity. 
In \cite{PhysRevA.103.022617}, Fuentes et. al. showed how codes tailored to the symmetric Pauli channel are not well suited to asymmetric quantum
channels, and derived quantum CSS LDGM codes that perform well for the latter case.

In the realm of machine learning, gradient-based methods, such as the gradient descent (GD) algorithm, are one of the most commonly used algorithms to fit the machine learning models over the training data. 
Using LDGM codes, authors of \cite{horii2019distributed} proposed a distributed implementation of a stochastic GD scheme, which allowed the master node to recover a
``high-quality unbiased'' estimate of the gradient at low computational cost and provided overall performance improvement over the GD scheme with gradient coding. 

In all the three applications highlighted above, the benefit of the LDGM codes follows from a certain upper bound on the column weights of the GM ensuring the columns are relatively low weight. 
Motivated by these applications, 
the main goal of this work is to construct sequences of LDGM codes where all or the vast majority of the columns of the GM are \textit{low weight}, where certain upper bounds on the weight will be specified later.

\subsection{Our Contributions}
In this paper, we focus on studying capacity-achieving LDGM codes over BMS channels with sparsity constraints on column weights. 
A type of code is said to be capacity achieving over a BMS channel $W$ with capacity $C= I(W) > 0$ if, for any given constant $R <C$, there exists a sequence of codes  with rate $R$ and the block-error probability  for the codes under maximum likelihood (ML) decoding vanishes as the block length $N$ grows large. 
Leveraging polar codes, invented by Ar{\i}kan \cite{arikan2009channel}, and their extensions to large kernels, with errors exponents studied in \cite{ korada2010polar}, we show that capacity-achieving polar codes with column weights bounded by any polynomial of $N$ exist. However, we show that  for rate-$1$ polar codes, the fraction of columns of the GM with weight $O( (\log N)^r)$ vanishes as $n$ grows large, for any $r>0$.
A new construction of LDGM codes is devised in such a way that most of the column weights can be bounded by a degree-$(1+\delta)$ polynomial of $\log N$, where $\delta>0$ can be chosen to be arbitrarily small.
A drawback of  the new construction is the existence of, though only small in fraction, \textit{heavy} columns in the GM. In order to resolve this, we propose a splitting algorithm which, roughly speaking, splits \textit{heavy} columns into several \textit{light} columns, a process which will be clarified later. The rate loss due to this modification is characterized, and is shown to approach zero as $N$ grows large. Hence, the proposed modification leads to capacity-achieving codes with column weights of the GM upper bounded by $(\log N)^{1+\epsilon}$, for any $\epsilon > 2\epsilon^*$, where $\epsilon^* = \log 3 - \frac32 \approx 0.085$. 
For the constructed LDGM codes, 
one cannot use the successive cancellation decoding of the underlying polar codes, and hence we show achievability of the capacity using an ML decoder.

Leveraging moderate-deviation results for the random linear code ensemble (RLE) and polar codes, where the gap between the channel capacity and code rate vanishes as blocklength grows, the new construction is applied to yield two families of LDGM codes. In both codes, the trade-off between the gap-to-capacity and the GM sparsity and the decoding time complexity can be observed. 
In general, the RLE-based codes achieve better GM sparsity at the cost of higher (quasi-polynomial) decoding complexity compared to the polar-based codes. When the optimal GM column sparsity is desired, we have a fixed rate $R <C$, and the GM column weights can be bounded by $O(\log N)$ and $(\log N)^2$  for RLE-based and polar-based codes, respectively.
Note that the latter is larger than the aforementioned $(\log N)^{1+\epsilon}$ bound since the column-splitting algorithm is not employed. 
One caveat of the RLE-based codes is the hidden constant in the $O(\log N)$ bound, which may be arbitrarily large for $R$ close to $C$.  
Another caveat for RLE-based codes is the decoding time complexity - polynomial in $N$ with arbitrarily large degree for $R$ close to $C$ - when compared to almost-linear complexity of the polar-based codes. 

To  address  the  decoding  complexity in the polar-based codes with  $(\log N)^{1+\epsilon}$ GM sparsity, we first consider BEC,  and propose a different splitting algorithm, termed \textit{decoder-respecting splitting (DRS)} algorithm.
Leveraging the fact that the polarization transform of a BEC leads to BECs, the DRS algorithm converts the encoder of a standard polar code into an encoder defined by a sparse GM which does not decrease the reliability of the bit-channels observed by the source bits.  As a consequence, the information bits can be recovered by a low-complexity successive cancellation decoder in a recursive manner. 
In particular, we show a sequence of codes defined by GMs with column weights upper bounded by $(\log N)^{2\lambda}$, for any $\lambda > \lambda^* = \log3 -1 \approx 0.585$ and the existence of a decoder with computation complexity $O(N \log\log N)$
under which the sequence is capacity achieving.

Next, for general BMS channels, a scheme that modifies the encoder of a standard polar code in a slightly different way from the DRS algorithm, referred to as \textit{augmented-DRS (A-DRS)} scheme, is devised which requires additional channel uses and decoding complexity.  
In spite of these limitations, we show that there exists a sequence of capacity-achieving A-DRS codes.
The sequence  of codes is defined by GMs with column weights upper bounded by $(\log N)^{2\lambda}$, for any $\lambda > \lambda^{\dagger} = (\log3)^{ -1} \approx 0.631$, and can be decoded with complexity $O(N \log\log N)$. 
 
The main contribution of this work can be summarized as follows:
\begin{itemize}
    \item Show a sequence of capacity-achieving polar codes with GM column weights upper bounded by a  polynomial sublinear in $N$, with any given degree $s \in (0, 1)$. This verifies the conjecture given in \cite{LDGM_capAchieving2011}.
    \item Propose a construction of polar-based capacity-achieving LDGM codes with efficient decoding where almost all GM column weights are upper bounded by $ (\log N)^{1+ \delta} $ for any given $\delta >0$. The ratio of columns not bounded in this way vanishes as $N$ grows large. 
    \item Show the GMs of above LDGM code sequence can be modified by a column-splitting algorithm so that the resulting code sequence is capacity-achieving under ML decoding while all the GM column weights are $ O( (\log N)^{1.17} )$. 
    \item Show a sequence of capacity-achieving RLE-based codes with all the GM column weights bounded by $ O( \log N )$ under polynomial-time decoding algorithm.
    \item For BEC transmission, propose a second splitting algorithm, such that  the resulting code sequence is capacity-achieving under an efficient decoder while all the GM column weights are $ O( (\log N)^{1.17} )$
    \item For general BMS channels, propose a third splitting algorithm to generate code sequence, which is capacity-achieving under an efficient decoder while all the GM column weights are $ O( (\log N)^{1.262} )$.
\end{itemize}

\subsection{Organization}
The rest of the paper is organized as follows. In Section \ref{sec:Prelim}, we introduce basic notations and definitions for channel polarization and polar codes. 
In Section \ref{Sec:AllMain}, the statements of all the main results are given formally.
In Section \ref{sec:SparseLDGM}, we show a sparsity result for polar codes with general kernels. Also, the new construction and the first splitting algorithm are described, and the corresponding code is analyzed in terms of the sparsity of the GM and the probability of error. 
Furthermore, the RLE-based and polar-based code families that allow moderate deviation trade-off are constructed and compared with each other. 
In Section \ref{sec:LDGMwithDEC}, we introduce the DRS algorithm and the A-DRS scheme, and the corresponding code constructions over the BEC and BMS channels, respectively. The successive cancellation decoders are also described and shown to be of low computation complexity.
In Section \ref{sec:L>2}, the relationship between the polarization kernel and the sparsity of the GM is evaluated. 
Finally, Section \ref{sec:Conclusion} concludes the paper. 
The proofs for the results in Sections \ref{sec:SparseLDGM}, \ref{sec:LDGMwithDEC}, \ref{sec:L>2} are included in the Appendix.

\section{Preliminaries}\label{sec:Prelim}

Let $h_b(\cdot)$ denote the binary entropy function, $\exp_2{(x)}$ denotes the function $2^x$,  $\ln(\cdot)$ be the logarithmic function with base $e$,  
and $\log(\cdot)$ be the logarithmic function with base $2$. $Z(W)$ denote the Bhattacharyya parameter of a channel $W$. 
A binary memoryless symmetric channel (BMS)  $W: \mathcal{X} \rightarrow \mathcal{Y} $ is a noisy channel with binary input alphabet $ \mathcal{X}$, and channel output alphabet $ \mathcal{Y}$, (we use $ \mathcal{X}  = \mathset{0, 1}$, and assume the output alphabet $ \mathcal{Y}$ is finite, in this paper.) satisfying two conditions:
\begin{enumerate}
    \item The channel output is conditionally independent on past channel inputs, given the input at the same time.
    \item There exist a bijective mapping $\phi :  \mathcal{Y} \rightarrow  \mathcal{Y}$ satisfying  $ \phi^2(y) = y$ for all $y \in  \mathcal{Y}$ such that, the probability of receiving $y \in S$ on input $0$ is equal to the probability of receiving $y \in \phi(S)$ on input $1$ for any set $S \subset \mathcal{Y}$.
\end{enumerate}

\subsection{Channel Polarization and Polar Codes}\label{Prelim:polar}

The \emph{channel polarization} phenomenon was discovered by Ar{\i}kan \cite{arikan2009channel} and is based on a $2 \times 2$ polarization transform as the building block.
Let $A \otimes B$ denote the Kronecker product of matrices $A$ and $B$, and $A^{\otimes n}$ denote the $n$-fold Kronecker product of $A$, i.e., $A^{\otimes n} = A \otimes A^{\otimes (n-1)}$ for $n\geq 2$ and $A^{\otimes 1} = A.$
Let $\mathcal{W}$ denote the class of all BMS channels.
Consider a channel transform $W \mapsto (W^{-}, W^{+})$  that maps $\mathcal{W}$ to $\mathcal{W}^2$. 
Suppose the transform operates on an input channel $W: \mathcal{X} \rightarrow \mathcal{Y}$ to generate the channels 
$W^{-} :\mathcal{X} \rightarrow  \mathcal{Y}^2$ and $W^{+} :\mathcal{X} \rightarrow  \mathcal{Y}^2 \times \mathcal{X}$ with transition probabilities
\begin{align*}
 W^{-}(y_1,y_2|x_1) &= \sum_{x_2 \in \mathcal{X}}\frac{1}{2}W(y_1|x_1 \oplus x_2)W(y_2|x_2), \\
 W^{+}(y_1,y_2,x_1|x_2) &= \frac{1}{2}W(y_1|x_1 \oplus x_2)W(y_2|x_2), 
\end{align*} 
where $\oplus $ denotes mod-$2$ addition. 
This transformation $W \mapsto (W^{-}, W^{+})$ is referred to as a polarization recursion. 
Then a channel $W^{s_1, s_2, \ldots , s_n}$ is defined for $s_i \in \mathset{-,+}, i= 1,2,\ldots, n$, resulting from applying the channel transform $n$ times recursively as 
\begin{equation*}
    W^{s_1, s_2, \ldots ,s_n} = \begin{cases}
    (W^{s_1, s_2, \ldots, s_{n-1}})^{-} &\mbox{ if } s_n = -, \\
    (W^{s_1, s_2, \ldots, s_{n-1}})^{+} &\mbox{ if } s_n = +.
    \end{cases}
\end{equation*} 
 
For $N=2^n$, the polarization transform is obtained from the $N \times N$ matrix $G_2^{\otimes n}$, where $G_2=\begin{bmatrix}
		1 & 0 \\
		1 & 1 \\
		\end{bmatrix}$~\cite{arikan2009channel}. 
A polar code of length $N$ is constructed by selecting certain rows of $G_2^{\otimes n}$ as its generator matrix.
More specifically, let $K$ denote the code dimension. Then all the $N$ bit-channels in the set $\{W^{s_1, s_2,\ldots, s_n}:$ $s_i\in \mathset{-,+} \mbox{ for } i =1,2,\ldots, n\}$, resulting from the polarization transform, are sorted with respect to an associated parameter, e.g., their probability of error (or Bhattacharyya parameter), the best $K$ of them with the lowest probability of error are selected, and then the corresponding rows from $G_2^{\otimes n}$ are selected to form the GM. 
Hence, the GM of an $(N,K)$ polar code is a $K \times N$ sub-matrix of $G_2^{\otimes n}$. Then the probability of error of this code, under successive cancellation decoding, is upper bounded by the sum of probabilities of error of the selected $K$ best bit-channels \cite{arikan2009channel}. Polar codes and polarization phenomenon have been successfully applied to a wide range of problems including data compression~\cite{Arikan2,abbe2011polarization}, broadcast channels~\cite{mondelli2015achieving,goela2015polar}, multiple access channels~\cite{STY,MELK}, physical layer security~\cite{MV,andersson2010nested}, and coded modulations \cite{mahdavifar2015polar}.

\subsection{General Kernels and Error Exponent}\label{Prelim:polarExp}
It is shown in \cite{korada2010polar} that if $G_2$ is replaced by an $l \times l$ matrix $G$, then polarization still occurs if and only if $G$ is an invertible matrix in $\mathbb{F}_2$ and it is not upper triangular, in which case the matrix $G$ is called a \textit{polarization kernel}. Furthermore, the authors of \cite{korada2010polar} provided a general formula for the error exponent of polar codes constructed based on an arbitrary $l\times l$ polarization kernel $G$. More specifically, let $N= l^n$ denote the block length and $C$ denote the capacity of the channel. For any fixed $\beta < E(G)$ and fixed code rate $R< C$, there is a sequence of polar codes based on $G$ with probability of error $P_e$ bounded by 
$$
P_e(n) \leq 2^{-N^{\beta}},
$$
for all sufficiently large $n$, where  
the rate of polarization (see \cite[Definition 7]{korada2010polar}) $E(G)$ is given by 
\begin{equation}
	E(G) = \frac{1}{l}\sum_{i=1}^l \log_l D_i, \label{eq:EGformula}
\end{equation}
and $\mathset{D_i}_{i=1}^l$ in \eqref{eq:EGformula} are the \textit{partial distances} of $G$. 
More specifically, for $G= [g_1^T, g_2^T, \ldots, g_l^T]^T$, the partial distances $D_i$ are defined as follows:
 \begin{align}
    &D_i \triangleq d_H(g_i, \mbox{span}(g_{i+1}, \ldots, g_l) ), \qquad i= 1,2,\ldots, l-1 \label{eq:DiDef}\\
    &D_l \triangleq d_H(g_l, 0)= w_H(g_l) \label{eq:DlDef}, 
\end{align} where $d_H(a,b)$ is the Hamming distance between two vectors $a$ and $b$, and $d_H(a, U)$ is the minimum distance between a vector $a $ and a subspace $U$, i.e., $d_H(a, U)= \min_{u\in U}d_H(a,u)$.

\subsection{Moderate Deviation Results}\label{subsec:prelim:moderate}
We briefly review the moderate deviation results of two sequences of codes, the random code
ensemble (RCE) and the polar code in this section. 
\begin{theorem}[Thm 2.1 of \cite{altuug2010moderate}]\label{Thm:RCEmoderate}
For any discrete memoryless channel (DMC) $W$ with dispersion $\sigma^2(W) >0$, for any sequence of real numbers $\mathset{\epsilon_N}_{N\geq 1}$ satisfying 
\begin{enumerate}[label=(\roman*)]
    \item $\epsilon_N \goto 0$, as $N \goto \infty$, 
    \item $\epsilon_N \sqrt{N} \goto \infty$, as $N \goto \infty$,
\end{enumerate}
there exists a sequence of codes $\mathset{C_N}$ with rate $R_N \geq I(W)  -\epsilon_N$ for all $N$ and 
\begin{equation*}
    \limsup_{N\goto \infty} \frac{1}{N \epsilon_n^2} \log P_{e, N} \leq - \frac{1}{2 \sigma^2(W) },
\end{equation*} where $P_{e,N}$ denotes the maximal probability of error among all messages in code $C_N$.
\end{theorem}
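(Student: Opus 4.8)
The plan is to establish this achievability statement by a random-coding argument whose heart is a \emph{moderate-deviations} estimate for the normalized information density. Fix a capacity-achieving input distribution $P^*$ on $\mathcal{X}$, let $P_Y$ be the output distribution it induces through $W$, and set $i(x;y) = \log \frac{W(y|x)}{P_Y(y)}$, so that under $(X,Y)\sim P^*\times W$ one has $\EE[i(X;Y)] = I(W)$ and $\var(i(X;Y)) = \sigma^2(W)$. For block length $N$ and target rate $R_N = I(W) - \epsilon_N$, I would draw $M = \lceil 2^{N R_N}\rceil$ codewords i.i.d.\ from $(P^*)^{\otimes N}$ (a constant-composition variant avoids fluctuations of the empirical input type). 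The first step is to invoke a one-shot achievability bound---Feinstein's threshold-decoding lemma or the dependence-testing bound---to reduce the analysis to a single lower-tail event for a sum of i.i.d.\ terms.

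Concretely, with a threshold decoder that accepts the unique codeword whose accumulated information density $i^N(x^N;y^N) = \sum_{k=1}^N i(x_k;y_k)$ exceeds $N\gamma_N$, the average error probability splits into an atypicality term $\PP\!\left(\tfrac1N i^N(X^N;Y^N) \le \gamma_N\right)$ and a confusion term at most $M\,2^{-N\gamma_N}$. Taking $\gamma_N = R_N + \delta_N$ with a slack obeying $\epsilon_N^2 = o(\delta_N)$ and $\delta_N = o(\epsilon_N)$ (e.g.\ $\delta_N = \epsilon_N^{3/2}$), the confusion term is at most $2^{-N\delta_N}$, whose exponent on the $\tfrac{1}{N\epsilon_N^2}\log(\cdot)$ scale tends to $-\infty$ and is thus negligible, while the atypicality term becomes $\PP\!\left(\tfrac1N\sum_k i(X_k;Y_k) \le I(W) - (\epsilon_N - \delta_N)\right)$, a lower-tail deviation of order $\epsilon_N$ below the mean $I(W)$.

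The key step is then the classical moderate-deviations principle for sums of i.i.d.\ bounded random variables. Since $\mathcal{X}$ and $\mathcal{Y}$ are finite, $i(X;Y)$ is bounded on the support of the joint law and hence satisfies Cram\'er's condition; under $\epsilon_N\to 0$ and $\epsilon_N\sqrt N\to\infty$ this yields
\begin{equation*}
    \lim_{N\to\infty} \frac{1}{N\epsilon_N^2}\log \PP\!\left(\frac1N\sum_{k=1}^N i(X_k;Y_k) \le I(W) - \epsilon_N\right) = -\frac{1}{2\sigma^2(W)}.
\end{equation*}
Replacing $\epsilon_N$ by $\epsilon_N-\delta_N$ scales the exponent by $(1-\delta_N/\epsilon_N)^2\to 1$, so the atypicality term dominates and pins down the claimed limit. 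A standard expurgation argument---discarding the worst half of the codewords, at a rate cost of $O(1/N)$---then upgrades the bound on the \emph{average} error probability to a single deterministic codebook $C_N$ of rate $R_N \ge I(W)-\epsilon_N$ whose \emph{maximal} error probability attains $\limsup_{N\to\infty} \tfrac1{N\epsilon_N^2}\log P_{e,N} \le -\tfrac1{2\sigma^2(W)}$.

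I expect the main obstacle to be the balancing that isolates the \emph{exact} constant $\tfrac1{2\sigma^2(W)}$ rather than a weaker bound. The moderate regime lies strictly between the fixed-rate error-exponent regime and the $O(1/\sqrt N)$ dispersion regime, so neither Gallager's exponent nor a Berry--Esseen expansion applies directly; one must use the sharp moderate-deviations asymptotics. The delicate points are choosing $\delta_N$ large enough to suppress the confusion term yet small enough not to perturb the exponent, and verifying the Cram\'er/third-moment conditions on $i(X;Y)$ uniformly---including handling pairs $(x,y)$ with $W(y|x)=0$, which lie off the support and contribute with vanishing probability but require care so that the relevant moment generating function is finite in a neighborhood of the origin.
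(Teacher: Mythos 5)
Your proof is correct in outline, but it takes a genuinely different route from the one the paper relies on. The paper (following Altu\u{g} and Wagner) obtains this theorem from Gallager's random-coding exponent: Corollary 2 on p.~140 of Gallager's book gives, for every $N$ and $R$, a code whose \emph{per-message} error probability satisfies $P_{e,m} < 4\exp[-N E_r(R)]$, and the moderate-deviations constant then comes from a Taylor expansion of $E_r$ near capacity --- maximizing $E_0(\rho,\mathbf{Q})-\rho R_N$ at $\rho_N \approx \epsilon_N/\sigma^2(W)$ yields $E_r\big(I(W)-\epsilon_N\big) = \frac{\epsilon_N^2}{2\sigma^2(W)}\,(1+o(1))$, with condition (i) making the expansion valid and condition (ii) making $N\epsilon_N^2\to\infty$ so the bound is meaningful. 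You instead use information-density threshold decoding plus the classical moderate-deviations principle for i.i.d.\ bounded sums, which isolates the same constant. Your route is more self-contained probabilistically and sidesteps the $E_0$ machinery; what the Gallager route buys --- and the reason this paper needs it --- is that the per-message exponential bound transfers to the random \emph{linear} ensemble over BMS channels (see the remark immediately after the theorem), which is what later allows the LDGM constructions to use linear codes. Your Feinstein/DT-plus-expurgation argument produces an unstructured codebook and would not yield linearity, though that is irrelevant to the statement itself. Two small points in your write-up need care: (a) expurgating half the codewords leaves rate $R_N - 1/N$, so to meet the stated rate you should start from $\lceil 2^{N(R_N+1/N)}\rceil$ codewords, which is harmless since condition (ii) gives $1/N = o(\epsilon_N)$; (b) the constant $\sigma^2(W)$ is the channel dispersion, i.e.\ the conditional information variance, so you must choose $P^*$ to be a capacity-achieving input whose unconditional variance $\mathrm{Var}(i(X;Y))$ equals it --- this holds because the KKT conditions force $\mathbb{E}[\,i(X;Y)\mid X=x\,]=I(W)$ on the support of $P^*$, making the unconditional and conditional variances coincide.
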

The proof of Theorem \ref{Thm:RCEmoderate} hinges on Corollary 2 on page 140 of \cite{gallager1968information}, which establishes the existence of an $(N,R)$ block code for the transmission over a DMC such that the probability of error for each message $m$, denotedy by $P_{e,m}$, is bounded as follows
\begin{equation*}
    P_{e,m} < 4 \exp{[-N E_r(R)]}, \mbox{ for each } m, 1\leq m \leq M = 2^{NR}. 
\end{equation*}
Let $K$ and $J$ denote the sizes of the input and output alphabets of $W$, respectively. 
The random coding exponent  $E_r(R)$ \cite{gallager1968information} is given  by 
\begin{align}
    E_r(R) &= \max_{ 0 \leq \rho \leq 1}\max_{\mathbf{Q}} [E_0(\rho, \mathbf{Q} ) - \rho R], \mbox{ where} \label{eq:Er_1} \\
    E_0(\rho, \mathbf{Q} ) &= - \ln \sum_{j= 0}^{J-1} [\sum_{k=0}^{K-1} Q(k) W(j | k)^{1/ (1+\rho) } ]^{1+\rho}, \mbox{ and }
\end{align}
$\mathbf{Q}$ denotes a probability distribution on the input alphabet of $W$. 
\begin{remark}\label{rem:RLEmoderate}
The random linear ensemble (RLE), as described in \cite[Chapter 6.2]{gallager1968information}, is given by 
\[
\mathbf{x} = \mathbf{u} G \oplus \mathbf{v}
\]
where $G$ is an $L = NR$ by $N$ binary matrix with each element chosen independently according to the Bernoulli($0.5$) distribution, and each entry of $ \mathbf{v} \in \bit^{N}$ is also given by independent Bernoulli($0.5$) distribution. 

For transmission over a BMS channel, the maximizing channel input distribution for \eqref{eq:Er_1} is the Bernoulli($0.5$) distribution. It is shown in \cite[p. 208]{gallager1968information} that there exists a linear code in RLE for which, over a BMS $W$, the block error probability can be bounded as 
\begin{equation}
    P_e \leq \exp{[-N E_r(R)]}.
\end{equation}
With argument similar to that for the Corollary in \cite[p. 207]{gallager1968information}, one may show the existence of a binary linear code for which, over a BMS $W$, the error probability for message $m$ is bounded as 
\begin{equation*}
    P_{e,m} < \exp{[-N E_r(R)]}, \mbox{ for each }  1\leq m \leq M = 2^{NR},
\end{equation*}
Adopting the same proof technique as in \cite{altuug2010moderate},  Theorem \ref{Thm:RCEmoderate} can be shown to hold where each code in the sequence $\mathset{C_N}$ is linear.
\end{remark}

For polar code, \cite{mondelli2016unified} shows the following moderate deviation result for the transmission over a BMS $W$ with capacity $I (W)$ by using a polar code of rate $R < I (W)$ with polarization kernel $G_2$. For a family of codes used for the transmission of a BMS channel, the \textit{scaling exponent} is defined \cite{mondelli2016unified} as the value of $\mu$ such that
\begin{equation*}
    \lim_{\N \rightarrow \infty, N^{1/\mu}(C-R) =z } P_e(N, R, C) = f(z) 
\end{equation*}
for some function $f(z)$.

\begin{theorem}[Theorem 3 of \cite{mondelli2016unified}]\label{Thm:PolarModerate}
Let $\mu$ denote the scaling exponent of the channel $W$ using a polar code, bounded by $3.579 \leq \mu \leq 4.714$. 
Consider the transmission over a BMS $W$ with capacity $I(W)$ by
using a polar code of rate $R < I(W)$.
Then for any $\gamma \in (1/ (1+\mu), 1)$, the block length $N$ and the block error probability under successive cancellation (SC) decoding $P_e$ are such that
\begin{align*}
    P_e &\leq N \cdot \exp_2{ (-N^{\gamma \cdot h_2^{-1} (\frac{\gamma(\mu +1) -1}{\gamma \mu}) } )},\\
    N &\leq \frac{\beta_3}{ (I(W) -R )^{\mu / (1-\gamma)}  },
\end{align*} where $\beta_3$ is a universal constant that does not depend on $W$ or $\gamma$. 
\end{theorem}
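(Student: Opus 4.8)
The plan is to track the Bhattacharyya process of the polarized bit-channels and to split the $n=\log_2 N$ polarization steps into two phases: a first phase of length $n_1=(1-\gamma)n$ whose job is to polarize the channels fast enough that the surviving good fraction comes within $I(W)-R$ of capacity (this is where the scaling exponent $\mu$ enters and pins down the bound on $N$), and a second phase of length $n_2=\gamma n$ whose job is to drive the Bhattacharyya parameters of the good channels doubly-exponentially small (this is where the Ar{\i}kan--Telatar error exponent $1/2$ is deformed into $\gamma\,h_2^{-1}(\cdot)$). The free parameter $\gamma$ records the location of the split.

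\emph{Setup and Phase 1 (scaling exponent).} Let $W_n$ be the bit-channel obtained after $n$ independent, uniformly random $\{-,+\}$ transforms and write $Z_n=Z(W_n)$, so that the uniform law on the $2^n$ sign sequences is exactly the fraction of bit-channels of each type. I would use the one-step recursions $Z(W^+)=Z(W)^2$ and $Z(W)\le Z(W^-)\le 2Z(W)-Z(W)^2$, which give $\E[Z_{n+1}\mid Z_n]\le Z_n$, so $Z_n$ is a bounded supermartingale. The scaling-exponent step then invokes the potential-function method: one constructs a concave $h\colon[0,1]\to\R_{\ge0}$ with $h(0)=h(1)=0$ obeying a one-step contraction $\E[h(Z_{n+1})\mid Z_n]\le 2^{-1/\mu}\,h(Z_n)$, which iterates to $\PP(\,\delta<Z_{n_1}<1-\delta\,)\le C_0\,2^{-n_1/\mu}=C_0\,N_1^{-1/\mu}$ with $N_1=2^{n_1}$; the numerical window $3.579\le\mu\le4.714$ comes from exhibiting such an $h$ (upper bound) and a worst-case channel family (lower bound). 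Summing this unpolarized mass bounds the rate deficit after Phase 1 by a constant times $N_1^{-1/\mu}=N^{-(1-\gamma)/\mu}$, and forcing this to be of order $I(W)-R$ yields $N\le \beta_3\,(I(W)-R)^{-\mu/(1-\gamma)}$.

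\emph{Phase 2 (error exponent via large deviations) and balancing.} After Phase 1 the good channels satisfy $Z_{n_1}\le z_0$ for some constant $z_0<1$; along a Phase-2 sign path with $k$ plus-steps the recursion $Z(W^+)=Z(W)^2$ squares the parameter $k$ times while each minus-step costs at most a factor $2$, so $Z_n\lesssim z_0^{\,2^{k}}$ up to the controllable minus-step correction. Reaching $Z_n\le 2^{-N^{\beta}}=2^{-2^{n\beta}}$ therefore requires $k\ge n_2\,(\beta/\gamma)$, and the fraction of Phase-2 paths with fewer plus-steps than this --- the channels we must discard --- is $\sum_{k<n_2\beta/\gamma}\binom{n_2}{k}2^{-n_2}\approx 2^{-n_2(1-h_2(\beta/\gamma))}=N^{-\gamma(1-h_2(\beta/\gamma))}$. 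Balancing this Phase-2 rate loss against the Phase-1 gap, i.e.\ imposing $\gamma\bigl(1-h_2(\beta/\gamma)\bigr)=(1-\gamma)/\mu$ and solving for $\beta$, gives exactly $\beta=\gamma\,h_2^{-1}\!\bigl(\tfrac{\gamma(\mu+1)-1}{\gamma\mu}\bigr)$, the claimed exponent; the prefactor $N$ in the bound on $P_e$ absorbs the union bound over the information set.

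\emph{The hard part.} Two estimates carry the real weight. First, making the scaling step both rigorous and sharp: the contraction $\E[h(Z_{n+1})\mid Z_n]\le 2^{-1/\mu}\,h(Z_n)$ must hold for the actual channel-dependent minus-transform of $Z$, not merely for the extremal bounds, and extracting the explicit constant $\mu\le 4.714$ demands a delicate, partly numerical, choice of $h$ that is uniform over all BMS channels. Second, in Phase 2 the clean ``repeated squaring'' picture is only heuristic: one must prove that along \emph{every} path with sufficiently many plus-steps the interspersed minus-steps cannot stop $Z_n$ from becoming doubly-exponentially small --- controlling the whole trajectory rather than its endpoints --- and then glue the phases so that ``good after Phase 1'' and ``enough plus-steps in Phase 2'' intersect in a set whose measure still meets the target rate. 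These two controls, together with the large-deviation count that produces the $h_2^{-1}$ term, are where essentially all the difficulty lies.
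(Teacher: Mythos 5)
This statement is not proved in the paper at all: it is imported verbatim as Theorem~3 of \cite{mondelli2016unified} and used as a black box, so there is no in-paper proof to compare against. Your sketch is, however, a faithful reconstruction of the argument in that cited source: the two-phase split of the $n$ polarization steps, the phase-one scaling-exponent contraction $\E[h(Z_{n+1})\mid Z_n]\le 2^{-1/\mu}h(Z_n)$ giving the blocklength bound $N\le \beta_3 (I(W)-R)^{-\mu/(1-\gamma)}$, and the phase-two bootstrapping in which the binomial count of plus-steps produces the $h_2^{-1}$ term, with the balancing $\gamma\bigl(1-h_2(\beta/\gamma)\bigr)=(1-\gamma)/\mu$ yielding exactly the claimed exponent $\beta=\gamma\, h_2^{-1}\bigl(\tfrac{\gamma(\mu+1)-1}{\gamma\mu}\bigr)$. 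So your proposal is correct in approach and matches the original proof; the only caveat is that, within this paper, the expected answer would simply be a citation.
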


\section{Main Result: Statement and Discussion} \label{Sec:AllMain}
We state the main results of this work informally in this section. Let $N$ denote the blocklength of the code, $R$ the code rate, and $C$ the capacity of the given channel. 

Our first result is about the construction of capacity achieving polar codes with sparse generator matrices for BMS channels. 
\begin{Result}[Polynomial Bound over BMS][Formally stated in Theorem \ref{prop:polarWithPolyColumnWeight}] \label{thm:0-1}
	For  any fixed $s>0$, there exists capacity-achieving polar codes for BMS channels, where the Hamming weight of all the columns in the generator matrices are upper bounded by $N^s$.
\end{Result}

Our second result shows that a concatenation of polar codes with rate-1 code yields a sequence of capacity achieving codes with sparse generator matrices and almost-linear decoding complexity under SC-based decoding. 
\begin{Result}[Polar-based Moderate Deviation][Formally stated in Proposition \ref{prop:PolarExtended}]  \label{thm:0-4}
Let a BMS channel with scaling exponent $\mu$ with respect to the polar code family with kernel $G_2$, and two constants $\lambda \in (0, \frac{1}{1+ \mu})$ and $\alpha \in (0, \frac12)$  be given. 
Denote by $\beta_1$ and $\beta_2$ the constants $\beta_1 = \frac{-\lambda}{(1-\lambda \mu)  \cdot h_2^{-1} ( 1- \frac{\lambda}{ 1- \lambda \mu}) }$ and $\beta_2 = \frac{0.585}{(1-\lambda \mu)  \cdot h_2^{-1} ( 1- \frac{\lambda}{ 1- \lambda \mu})}$.
There exists a sequence  of (polar-based) linear codes  with rate $R = R(N)$ for which the following scaling behaviour holds
\begin{enumerate}
    \item The gap to capacity ${C - R(N)} = O( (\log N)^{\beta_1 } )$,
    \item The block error probability $P_e \leq 2^{- O( \log N ) } $,
    \item The decoding time complexity $Comp =(N)^{1+ o(1)}$,  
    \item The average GM column weight 
    $ W_{col,avg} =  O( (\log N)^{ \beta_2} ).$
\end{enumerate}
\end{Result}
Our third result shows that a concatenation of capacity achieving RLE-based codes and a rate-1 code gives a sequence of capacity achieving codes with sparse generator matrices and with sub-exponential-time decoding complexity under ML decoding.

\begin{Result}[RLE-based Moderate Deviation][Formally stated in Proposition \ref{prop:RLEextended}] 
\label{thm:0-3}
Let a BMS channel and a constant $\alpha \in (0, \frac12)$  be given. 
There exists a sequence  of (RLE-based) linear codes with rate $R = R(N)$ for which the following scaling behaviour holds
\begin{enumerate}
    \item The gap to capacity ${C - R(N)} = O( (\log N)^{\frac{-\alpha}{1-2\alpha}} )$,
    \item The maximal probability of error $P_e = 2^{- O( \log N ) } $,
    \item The decoding time complexity $    Comp \leq \exp_2{( O( (\log N)^{\frac{1}{1-2\alpha}} ) )}$,  
    \item The average GM column weight 
    $    W_{col,avg} = O( (\log N)^{\frac{1}{1-2\alpha}} ).$
\end{enumerate}
\end{Result}

A novel DRS column-splitting algorithm is proposed, using which the following result shows the existence of a sequence of capacity achieving polar-based codes over BEC with sparse generator matrices and with almost-linear-time decoding. In comparison to the previous result, in this setting, we have a better bound on each column weight. 
\begin{Result}[LDGM with Efficient Decoding over BEC][Formally stated in Theorem \ref{coro:DRS_lowComp_sparse_BEC}] 
\label{thm:0-5}
    Let $\beta  \in (0, 0.5) $, $d > 1.17$ and a BEC be given.
    There exists a sequence of capacity-achieving codes satisfying the following properties:
    \begin{enumerate}
        \item 
        The error probability  $P_e \leq  \exp_2{( -(\log N)^{2\beta} )}$,
        \item
        Each GM column weight is upper bounded by $(\log N)^{d}$,
        \item 
        There is a SC-based decoder with time complexity $O(N \log\log N)$.
    \end{enumerate}    
\end{Result}

The above result on BEC is extended to general BMS channels in the following result. This is done by proposing a second column-splitting algorithm called A-DRS.
\begin{Result}[LDGM with Efficient Decoding over BMS][Formally stated in Theorem \ref{prop:Sparse_ADRScompl}]     \label{thm:0-6}
   Let $\beta   \in (0, 0.5)  $,  $d > 1.262$, and a BMS channel be given. There exists a sequence of capacity-achieving codes with the following properties:
    \begin{enumerate}
        \item 
        The error probability is upper bounded by $ \exp_2{(-(\log N)^{2\beta})}$,
        \item
        Each GM column weight is upper bounded by $(\log N)^{d}$,
        \item
        The decoding time complexity is upper bounded by $O(N \log\log N).$
    \end{enumerate}
\end{Result}


\section{Sparse LDGM Code Constructions}\label{sec:SparseLDGM}
In Section \ref{sec:sparsityConstraint}, we prove the existence of capacity-achieving polar codes with large kernels with polynomially sublinear column weights in the GM. In Section \ref{sec:ConstructionNewApproach}, a coding scheme which expands a polar code into one with larger blocklength, and a \textit{splitting algorithm} which reduces the maximal column weight by slightly increasing the blocklength, are given. In Section \ref{sec:PeForNewApproach}, we provide an upper bound for the block error probability of the proposed polar-based coding scheme, under a SC-based decoding algorithm. In Section \ref{subsection:mainL2}, we analyze the sparsity in the GM when the underlying polarization kernel is the $G_2$ matrix. 
In Section \ref{subsec:constructionA} the coding scheme is generalized to general binary linear codes. In Sections \ref{subsec:RLE}  and \ref{subsec:polar} the relationship between the scaling of gap-to-capacity, block error probability, decoding complexity, and GM column weight bound are investigated for two families of capacity-achieving codes obtained by applying the scheme to RLE and polar codes. In Section \ref{subsec:Compare} we compare the performance of RLE-based and polar-based codes and showed that the former can achieve $O( \log(N) )$ GM column weight at the cost of a high decoding complexity. 

\subsection{Generator Matrix Sparsity} \label{sec:sparsityConstraint}

In this section we first show the existence of capacity-achieving polar codes with generator matrices for which all column weights scale at most polynomially with arbitrarily small degree in the block length $N$, hence validating
the conjecture in \cite{LDGM_capAchieving2011}. 
Second, we show that, for any polar code of rate $1$, \textit{almost} all of the column weights of the GM are larger than polylogarithmic in $N$.
\begin{theorem}\label{prop:polarWithPolyColumnWeight}
	For  any fixed $s>0$ and any BMS channel, there are capacity-achieving polar codes with generator matrices having column weights bounded by $N^s$, where $N$ denotes the block length of the code.
\end{theorem}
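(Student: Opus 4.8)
The plan is to reduce the theorem to a statement about the polarization kernel alone, and then to exhibit a suitably sparse kernel. First I would record the elementary fact that Hamming weights multiply under the Kronecker product: if the columns of an $l\times l$ kernel $G$ have weights $w_1,\dots,w_l$, then every column of $\Gn$ is a tensor product $c_{i_1}\otimes\cdots\otimes c_{i_n}$ of columns of $G$ and therefore has weight $\prod_{j=1}^{n} w_{i_j}$. Hence the maximum column weight of $\Gn$ is exactly $w_{\max}^{\,n}$, where $w_{\max}=\max_i w_i$. Since the generator matrix of the polar code is obtained by deleting rows of $\Gn$, and deleting rows can only lower column weights, it suffices to produce, for the given $s$, a polarizing kernel $G$ with $w_{\max}^{\,n}\le N^{s}=l^{ns}$ for every $n$. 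Taking logarithms, this collapses to the purely dimensional condition $\log_l w_{\max}\le s$, i.e. $l\ge w_{\max}^{1/s}$, which no longer involves $n$.

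Second, I would construct a kernel whose maximum column weight is as small as possible, so that this condition can be met by enlarging $l$ while keeping $w_{\max}$ fixed. A permutation matrix has all columns of weight $1$ but is invertible and fails to polarize, so the smallest achievable value is $w_{\max}=2$; I would take the lower bidiagonal matrix $G$ over $\Ftwo$ with ones on the main diagonal and on the subdiagonal, so that $g_1=e_1$ and $g_i=e_{i-1}+e_i$ for $i\ge 2$. Every column of $G$ then has weight at most $2$, and $G$ is nonsingular (unit lower triangular). Moreover $G$ is not upper triangular even after permuting columns: no column of $G$ is supported in row $1$ alone, so the first column of any column-permutation of $G$ must carry a nonzero strictly below the diagonal. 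By the characterization recalled from \cite{korada2010polar}, $G$ is therefore a genuine polarization kernel.

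Third, I would control the error exponent and finish. The last partial distance is $D_l=w_H(g_l)=w_H(e_{l-1}+e_l)=2$, and since every $D_i\ge 1$ the formula \eqref{eq:EGformula} gives $E(G)\ge \tfrac1l\log_l 2>0$. Fixing any rate $R<C$ and any $\beta\in(0,E(G))$, the large-kernel theorem of \cite{korada2010polar} yields a sequence of polar codes based on $G$ with $P_e\le 2^{-N^{\beta}}\to 0$, i.e. capacity-achieving. Finally, given $s>0$ I would fix any integer $l\ge 2^{1/s}$ and use the size-$l$ bidiagonal kernel: then $\log_l 2\le s$, so every column of $\Gn$, and a fortiori of the generator matrix, has weight at most $2^{n}=N^{\log_l 2}\le N^{s}$, as required.

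The main obstacle is isolating the right sparse kernel and certifying that it polarizes, since everything else is routine once the two requirements are decoupled. Column sparsity, through the Kronecker identity, becomes the size-only constraint $l\ge w_{\max}^{1/s}$, while polarization needs only a single partial distance exceeding $1$; the crux is recognizing that a constant column-weight ($w_{\max}=2$) polarizing kernel exists for every $l$, after which the bound $N^{s}$ follows merely by taking $l$ large, with no delicate estimate. Note that $E(G)\approx \tfrac{\ln 2}{\ln l}$ shrinks as $l$ grows, but this is harmless: capacity-achievability only requires $E(G)$ to be positive for the single fixed kernel actually used.
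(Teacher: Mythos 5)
Your proof is correct and takes essentially the same route as the paper: fix once and for all a polarizing kernel of size $l \geq 2^{1/s}$ all of whose columns have weight at most $2$, note that column weights multiply under Kronecker products so every column of $G^{\otimes n}$ (and of any row-submatrix of it) has weight at most $2^n \leq (l^s)^n = N^s$, and invoke the error-exponent theorem of \cite{korada2010polar} with $E(G)>0$ established through the partial distances. The only difference is the choice of the weight-$2$ kernel --- you use the lower bidiagonal matrix, while the paper uses $\left[\begin{smallmatrix} I_{l/2} & 0 \\ I_{l/2} & I_{l/2} \end{smallmatrix}\right]$ --- which is immaterial, since both are invertible, are not column permutations of upper triangular matrices, and have positive rate of polarization.
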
 
    \begin{proposition}\label{prop:noLogColumns}
	Given any $l \geq 2$, $l\times l $  polarizing kernel $G$, and $r>0$,
	the fraction of columns in $G^{\otimes n}$ with $O({(\log N)}^r)$ Hamming weight vanishes as $n$ grows large.
\end{proposition}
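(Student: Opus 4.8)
The plan is to exploit the multiplicative behaviour of Hamming weights under the Kronecker product, reducing the claim to an elementary large-deviation estimate for a binomial random variable.

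First I would index the columns of $\Gn = G^{\otimes n}$ by tuples $(j_1,\dots,j_n) \in \{1,\dots,l\}^n$, so that the column labelled by $(j_1,\dots,j_n)$ equals $g_{j_1}\otimes g_{j_2}\otimes\cdots\otimes g_{j_n}$, where $g_j$ denotes the $j$-th column of $G$. Since $\supp(a\otimes b)=\supp(a)\times\supp(b)$, the weight of this column is exactly $\prod_{i=1}^n w_{j_i}$, where $w_j := w_H(g_j)$. Thus the column weights of $\Gn$ are precisely the products of $n$ column weights of $G$, one factor per coordinate.

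Next I would extract from the kernel hypothesis the existence of a \emph{heavy} column of $G$. Invertibility of $G$ forces every column to be nonzero, so $w_j\ge 1$ for all $j$. If \emph{every} column had weight $1$, then $G$ would be a permutation matrix, which a column permutation turns into the identity---an upper-triangular matrix---so $G$ would fail to polarize, contradicting the kernel assumption. Hence $T := \{\,j : w_j\ge 2\,\}$ is nonempty and $p := |T|/l$ is a fixed constant in $(0,1]$. (If $p=1$ every column of $\Gn$ has weight at least $2^n$ and the claim is immediate, so the interesting regime is $0<p<1$.)

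Now I would translate the weight bound into a constraint on coordinates. Fix a constant $C$ and suppose a column indexed by $(j_1,\dots,j_n)$ has weight at most $C(\log N)^r = C(n\log l)^r$. Writing $k := |\{\,i : j_i\in T\,\}|$ for the number of heavy coordinates, its weight is at least $2^{k}$, so $2^{k}\le C(n\log l)^r$, giving $k\le r\log n + O(1) =: k_0 = O(\log n)$. Consequently the number of columns of $\Gn$ of weight $O((\log N)^r)$ is at most the number of tuples with at most $k_0$ heavy coordinates, and dividing this count by the total number $l^n=N$ of columns yields exactly
\[
\PP\!\left[\,\mathrm{Bin}(n,p)\le k_0\,\right]=\sum_{k=0}^{k_0}\binom{n}{k}p^{k}(1-p)^{n-k}.
\]

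Finally, since $p>0$ is fixed and $k_0=O(\log n)=o(np)$, a Chernoff bound on the lower tail of the binomial distribution gives $\PP[\mathrm{Bin}(n,p)\le k_0]\le \exp(-\Omega(n))\goto 0$ as $n\goto\infty$, which is precisely the claimed vanishing of the fraction of light columns. I expect essentially no difficulty in the large-deviation step; the one point that genuinely requires care is the kernel-structure argument showing $T\neq\emptyset$, because the statement is \emph{false} for permutation ``kernels'' (all column weights equal $1$), so the proof must invoke the fact that a polarization kernel cannot be made upper-triangular by column permutations.
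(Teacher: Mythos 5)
Your proposal is correct and follows essentially the same route as the paper: both identify at least one column of $G$ of weight $\geq 2$ via the permutation-matrix contradiction, reduce the weight of a uniformly random column of $G^{\otimes n}$ to (at least) $2^{K}$ with $K\sim\mathrm{Binomial}(n,k/l)$ counting heavy Kronecker factors, and show the relevant lower tail vanishes. The only cosmetic differences are that the paper rules out the permutation case by computing $E(G)=0$ from the partial distances rather than by the column-permutation/upper-triangularity characterization, and it invokes the law of large numbers where you use a Chernoff bound (your version gives an explicit exponential rate, but this extra strength is not needed for the claim).
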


\subsection{Construction and Splitting Algorithm} \label{sec:ConstructionNewApproach}
In this section, we propose a new construction of polar-based codes with even sparser GMs than those given in Section  \ref{sec:sparsityConstraint}. In particular, \textit{almost all} the column weights of the GMs of such codes scale at most logarithmically in the code block length, and there is an upper bound $w_{u.b.}$, polynomial in the logarithm of the block length, on \textit{all} the column weights.

Formally, let $G= G_l^{\otimes n}\otimes I_{n'} $, where $G_l $ is an $l\times l$ polarization kernel
and $I_{n'} $ is the $n' \times n' $ identity matrix. 
The matrix $G$ has the following form: 
\begin{equation}
G=
  \begin{bmatrix}
    G_l^{\otimes n}       & \zero_{l^n} & \zero_{l^n} & \dots & \zero_{l^n} \\
    \zero_{l^n}       & G_l^{\otimes n} & \zero_{l^n} & \dots & \zero_{l^n} \\
    \zero_{l^n} & \zero_{l^n}      & G_l^{\otimes n}  & \dots & \zero_{l^n} \\
    \vdots & \vdots & \vdots & \ddots & \vdots \\
    \zero_{l^n}       & \zero_{l^n} & \zero_{l^n} & \dots & G_l^{\otimes n}
  \end{bmatrix}. \label{eq:Gdef}
\end{equation} 
Let $N= l^n$,  $N' = N\times n'$ denote the overall block length, and $K'= n'K$ denote the code dimension.
Then $\frac{K'}{N'}= \frac{K}{N}$ is the code rate. To construct the polar-based code, we use the $K'$ bit-channels with the lowest probability of error (or Bhattacharyya parameter), and the GM of the resulting $(N',K')$ code is the corresponding $K' \times N'$ sub-matrix of $G$ same as how it is done for regular polar codes. The code constructed in this fashion is referred to as a \textit{polar-based code corresponding to} $G$, also referred to as a PB($G$) code, in this paper.

When all columns are required to be sparse, that is, have low Hamming weights, a \textit{splitting algorithm} is applied. 
Given a column weight threshold  $w_{u.b.}$, the splitting algorithm splits any column in $G$ with weight exceeding $w_{u.b.}$ into columns that  sum to the original column both in $\Ftwo$ and in $\R$, and that have weights no larger than $w_{u.b.}$. 
Hence, if the $i$-th entry of the original column has a one, then exactly one of the new columns has a one in the $i$-th entry. If the $i$-th entry of the original column is zero, then the $i$-th entries of all the new columns are zeros.
That is, for a column in $G$ with weight $w$, if $w \leq w_{u.b.} $, the algorithm keeps the column as is. 
If $w= m\cdot w_{u.b.} + r$ for some $m\in \N$ and some integer $0<  r \leq w_{u.b.}$, the algorithm replaces the column with $m+1$ columns, such that each of the new columns has at most $w_{u.b.}$ ones. 
Let $G'$ denote the resulting $N'\times N'(1+\gamma) $ matrix. A new code\textit{ based on}
$G'$ selects the same $K'$ rows as the polar-based code corresponding to  $G$ to form the generator matrix, where all the column weights are bounded by $w_{u.b.}$. 
Such a code is referred to as a \textit{polar-based code corresponding to} $G'$, or a PB$(G')$ code, in this paper.

\begin{figure}
     \centering
     \begin{subfigure}[b]{0.35\textwidth}
         \centering
        \includegraphics[trim=1cm 1cm 17.5cm 10cm, clip, width= \textwidth]{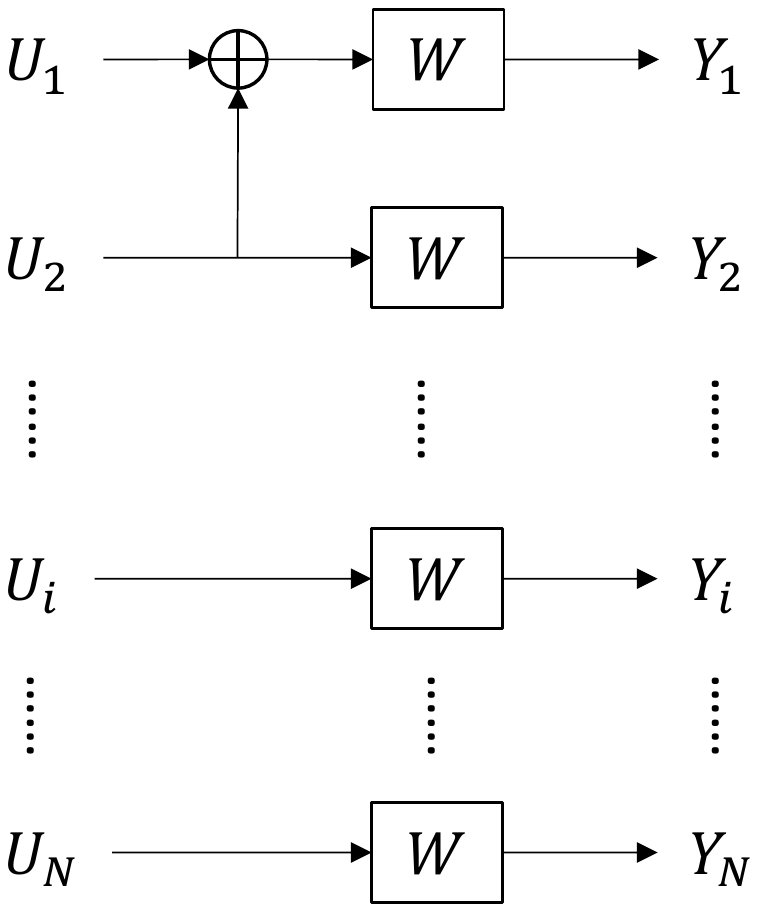}
         \caption{Encoding block diagram for $G$}
         \label{fig:Example_beforeSplit}
     \end{subfigure}
     \begin{subfigure}[b]{0.35\textwidth}
         \centering
        \includegraphics[trim=1cm 1cm 18cm 10.5cm, clip, width= \textwidth]{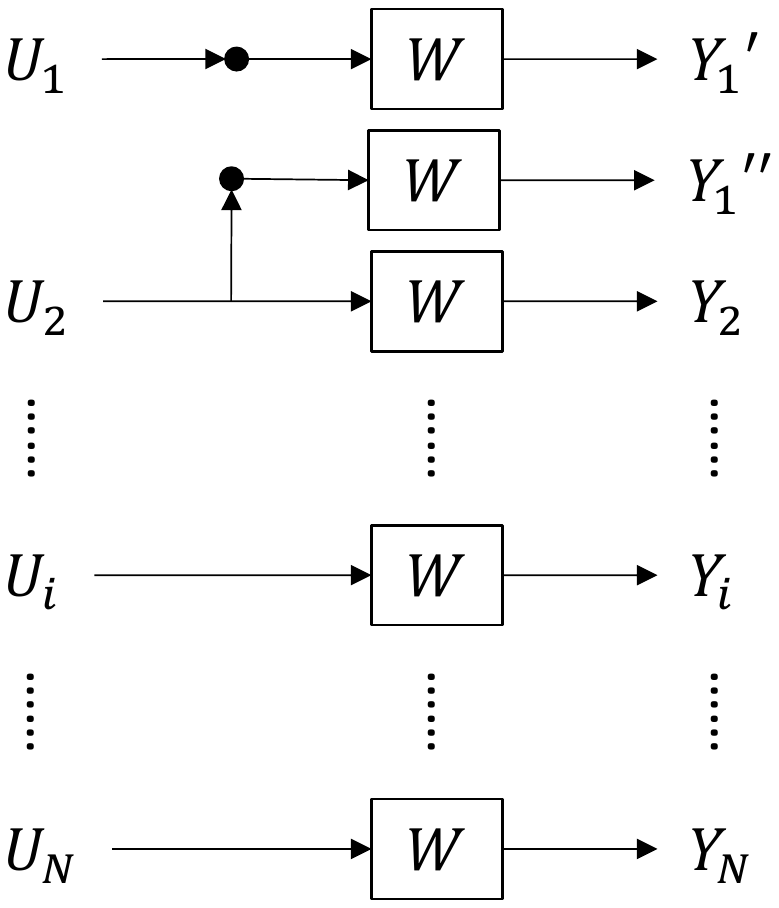}
         \caption{Encoding block diagram for $G'$}
         \label{fig:Example_afterSplit}
     \end{subfigure}
    \caption{An example for splitting algorithm with  $w_{u.b.}=1$ and $v_j =e_j\in \Ftwo^N$ for $j=2,3,\ldots, N$, where $e_j$ has a one at and only at the $j$-th entry.}
    \label{fig:Example_split}
\vspace{-7mm}
\end{figure}

The operation of the splitting algorithm is demonstrated through an example. 
Suppose that the threshold $w_{u.b.}$ is chosen to be $1$, and the first column of an $N$-column matrix $G$ is  $(1,1, 0, \ldots, 0)^T$. Then this column will be split into two new columns, $(1,0, 0, \ldots, 0)^T$ and $(0,1, 0, \ldots, 0)^T$, denoted by $v_1'$ and $v_1''$ here. 
Assuming all the other columns of $G$ have weights $0$ or $1$,  the resulting $G'$ will be  
\begin{equation*}
G'=[v_1', v_1'', v_2, \ldots, v_N],
\end{equation*} where $v_i$ denotes the $i$-th column of $G$. In this example, we see that $\gamma = 1/N$. The encoding block diagrams for the generator matrices $G$ and $G'$ are shown in Figure\,\ref{fig:Example_split}.

\subsection{Analysis of the Error Probability} \label{sec:PeForNewApproach}

In this section, we first show that for a fixed $G_l$, PB($G$) codes have vanishing probability of error as $n$ grows large, where $n'$ is chosen as a function of $n$ and $G_l$, as stated in the sequel, and $G$ is constructed as in \eqref{eq:Gdef}.
Second, if $G'$ is the output of the splitting algorithm described in Section \ref{sec:ConstructionNewApproach}, we show that the probability of error for the PB($G'$) codes can be bounded from above in the same fashion as the   PB($G$) codes.
A \emph{sparsity benchmark} for the generator matrix of a code, in terms of the block length, as well as the 
geometric mean column weight and  maximum column weight, are also defined in this section.

Given a BMS channel $W$ with capacity $C$, let $R < C$ and a parameter $\beta < E(G_l) $ be fixed. 
Then there are polar codes with rate $R$ constructed using the kernel $G_l$ with the probability of error upper bounded by $2^{-N^\beta}$  under SC decoding for sufficiently large $n$\cite{korada2010polar}.
Using the union bound,  it can be observed that the probability of error of a corresponding PB($G$) code is upper bounded by $n' \cdot 2^{-N^\beta}$.
Hence, we choose
    \begin{equation}
    n' = 2^{N^{(1-\delta)E(G_l)}}\label{eq:n'def},
    \end{equation} for any constant $1 > \delta > 1- \frac{\beta}{E(G_l)}$. This choice of $n'$ is used throughout the paper. We then have the following lemma.
    \begin{lemma}\label{Lemma:newG_PeBound}
    	Let $G$ be as in \eqref{eq:Gdef} and $n'$ be as in \eqref{eq:n'def}. Consider the transmission over a BMS channel $W$ with capacity $C$.  Then for any  $R < C$ and any $\beta < E(G_l) $, there exists a sequence of PB($G$) codes with code rate $R$ with the probability of error upper bounded by  $\exp_2{(-(\log N') ^{ \frac{\beta}{E(G_l) }} )} $, for all sufficiently large $n$. 
    \end{lemma}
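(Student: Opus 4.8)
The plan is to exploit the block-diagonal structure of $G$ in \eqref{eq:Gdef}, apply a union bound over the $n'$ constituent polar codes, and then show that the choice of $n'$ in \eqref{eq:n'def}, together with the lower bound imposed on $\delta$, forces the resulting bound below the claimed quantity. First I would note that since $G = G_l^{\otimes n}\otimes I_{n'}$ is block diagonal with $n'$ identical diagonal blocks $G_l^{\otimes n}$, the PB($G$) code is, up to reindexing, a direct sum of $n'$ statistically identical copies of the length-$N$ polar code built on the kernel $G_l$. Selecting the $K' = n'K$ bit-channels of lowest error probability across $G$ therefore amounts to selecting, in each copy, the same $K$ best bit-channels, so every copy has rate $K/N = R$. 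By \cite{korada2010polar}, for any $\beta < E(G_l)$ and $R < C$ each copy has block error probability at most $2^{-N^\beta}$ for all sufficiently large $n$, and the union bound over the $n'$ copies (already recorded above) gives $P_e \leq n'\cdot 2^{-N^\beta}$.

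Next I would substitute $n' = 2^{N^{(1-\delta)E(G_l)}}$ and pass to the overall block length $N' = N n'$. Taking base-$2$ logarithms yields $\log N' = \log N + N^{(1-\delta)E(G_l)}$, so the second term dominates and $\log N' = \Theta\!\bigl(N^{(1-\delta)E(G_l)}\bigr)$ as $n\to\infty$; consequently $(\log N')^{\beta/E(G_l)} = \Theta\!\bigl(N^{(1-\delta)\beta}\bigr)$. Taking the logarithm of the union bound, the target $P_e \leq \exp_2\!\bigl(-(\log N')^{\beta/E(G_l)}\bigr)$ reduces to verifying
\[
(\log N')^{\beta/E(G_l)} \;\leq\; N^\beta - N^{(1-\delta)E(G_l)}
\]
for all sufficiently large $n$.

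The crux is then a comparison of exponents of $N$. The hypothesis $\delta > 1 - \beta/E(G_l)$ is exactly equivalent to $(1-\delta)E(G_l) < \beta$, which makes the subtracted term $N^{(1-\delta)E(G_l)}$ negligible against $N^\beta$; and since $\delta > 0$ gives $(1-\delta)\beta < \beta$, the left-hand term $\Theta\!\bigl(N^{(1-\delta)\beta}\bigr)$ is likewise $o(N^\beta)$. Hence the right-hand side is eventually at least $\tfrac12 N^\beta$, dominating the left, and the inequality holds for large $n$. I expect the only delicate point to be the bookkeeping of the lower-order $\log N$ contribution to $\log N'$ and confirming that the constant hidden in the $\Theta(\cdot)$ does not spoil the exponent comparison; this is routine once the \emph{strict} inequality $(1-\delta)E(G_l) < \beta$ guaranteed by the assumption on $\delta$ is in hand, which is precisely where both the choice of $n'$ and the constraint on $\delta$ are used.
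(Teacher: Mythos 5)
Your proof is correct and follows essentially the same route as the paper's: decompose PB($G$) into $n'$ identical polar-code copies, apply the union bound $P_e \le n'\cdot 2^{-N^\beta}$, and use the constraint $\delta > 1 - \beta/E(G_l)$ together with $\log N' = N^{(1-\delta)E(G_l)} + \log N$ to win the exponent comparison. The only bookkeeping difference is that the paper first establishes the stronger intermediate bound $P_e \le 2^{-N^\beta}$ by invoking the polar error exponent with a strictly larger parameter $(1-\eta/2)E(G_l)$ so that the factor $n'$ is absorbed, and only then converts to the $(\log N')^{\beta/E(G_l)}$ form, whereas you keep the exponent $\beta$ throughout and absorb both the $n'$ factor and the conversion at once using the slack $(1-\delta)\beta < \beta$; both arguments are valid.
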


Suppose that the splitting algorithm with a certain given threshold, as discussed in Section\,\ref{sec:ConstructionNewApproach}, is applied to $G$ and that $G'$ is obtained. We show in the following proposition that the probability of error of the codes corresponding to $G'$ and $G$ can be bounded in the same way.

\begin{proposition}\label{prop:codeWithG'}
	 Consider the transmission over a BMS channel $W$ with capacity $C$. For any $\beta < E(G_l)$ and any rate $R < C$, there is a decoding scheme based on successive cancellation (SC) decoding such that the probability of error of the PB$(G')$ code with dimension $K' = N' R$ is bounded from above by $\exp_2{(-N^\beta)} \leq \exp_2{(-(\log N') ^{ \frac{\beta}{E(G_l) }} )} $, for sufficiently large $n$.
\end{proposition}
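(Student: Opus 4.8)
The plan is to build on \Lref{Lemma:newG_PeBound}, which already bounds the error probability of the unsplit code PB($G$), and to show that the column-splitting can be accompanied by a decoder whose error is no larger. First I would exploit the block-diagonal form $G=G_l^{\otimes n}\otimes I_{n'}$: every column of $G$ is supported in a single diagonal block, so the splitting acts independently inside the $n'$ identical length-$N$ polar codes. It therefore suffices to bound the error of a single split block by $\exp_2(-N^\beta)$ and then apply the union bound over the $n'$ blocks exactly as in the proof of \Lref{Lemma:newG_PeBound}; with $n'$ as in \eqref{eq:n'def} and $\beta<E(G_l)$ one has $\log N'=\log N+N^{(1-\delta)E(G_l)}$, so $N^\beta$ dominates $N^{(1-\delta)E(G_l)}$ and the per-block bound collapses to the displayed $\exp_2(-(\log N')^{\beta/E(G_l)})$.

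Two structural facts drive the single-block argument. Because the pieces of a split column sum to that column over $\Ftwo$, there is a fixed $0/1$ matrix $T$ (summing each group of split coordinates and acting as the identity elsewhere) with $G=G'T$; hence the underlying polar codeword $\mathbf{c}=\mathbf{u} G$ is recoverable from $\mathbf{c}'=\mathbf{u} G'$ as $\mathbf{c}=\mathbf{c}'T$. Because the pieces also have disjoint supports (they sum to the original column over $\R$), each nonzero entry of a row of $G$ is moved into exactly one piece, so splitting preserves every row weight of $G$. I would use these two facts to define a decoder based on the successive-cancellation decoder of the underlying $G_l^{\otimes n}$ polar code, feeding it soft information about the transform outputs $c_j$ distilled from the channel observations of the split pieces, and then analyze it through the standard first-error decomposition, so that the block error is controlled by the Bhattacharyya parameters of the selected bit-channels; the goal is to show these parameters are not inflated by splitting, which would transfer the $\exp_2(-N^\beta)$ bound of \cite{korada2010polar}.

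The step I expect to be the main obstacle is precisely this reliability comparison. The pieces are noisy observations of partial message-XORs that do not attach locally to the polar computation graph, and the naive remedy of XOR-combining a group back into an observation of $c_j$ realizes $c_j$ only through a repeated minus-transform of $W$, which is strictly degraded; by \Pref{prop:noLogColumns} almost every column is split once the threshold is polylogarithmic, and the number of pieces per column can be polynomially large in $N$, so applying such degradation coordinate-wise would drive the effective capacity to zero and destroy the code. The decoder must instead use the finer partial-sum observations constructively, and the crux is to show that, thanks to the preservation of row weights, the reliabilities seen by the source bits are not reduced, so that an appropriately ordered SC-based decoder attains the same per-bit-channel guarantees as PB($G$). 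Once the per-block bound $\exp_2(-N^\beta)$ is established in this way, the optimality of the ML decoder (the decoder actually used for this construction) gives the same bound for it as well, and the union bound over the $n'$ blocks completes the proof.
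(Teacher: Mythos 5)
Your proposal has the same skeleton as the paper's proof: use the block-diagonal structure of $G'$ (equation \eqref{eq:G'matrixForm}) to reduce to a single split block $(G_l^{\otimes n})^{sp}$, compare that block to the unsplit polar code, and finish with the union bound and the choice of $n'$ exactly as in Lemma \ref{Lemma:newG_PeBound}. You also correctly diagnose why the naive decoder fails: XOR-recombining the pieces of a split column (your $\mathbf{c}=\mathbf{c}'T$) turns each observation of a coded bit of the unsplit code into a repeated minus-transform of $W$, which is degraded.

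However, there is a genuine gap: the central claim --- that an SC-based decoder for the split code performs at least as well as the SC decoder for the unsplit polar code, which is the paper's Lemma \ref{lemma:SCDforBlock} --- is labeled by you as ``the crux'' but never proved. The mechanism you invoke, preservation of row weights, plays no role in the paper's argument and does not by itself yield any reliability comparison; indeed such bit-channel comparisons are delicate for general BMS channels (cf.\ Remark \ref{rem:DRSnotForBMS}, where the analogous Bhattacharyya-parameter monotonicity, which you also propose to use, fails outside the BEC). The paper's actual proof of the key lemma is substantive: for a single column split $v_1 = v_1' + v_1''$ it applies an invertible row-operation matrix $E$ mapping $v_1',v_1''$ to $e_1,e_2$ (so $Ev_1=e_1+e_2$), shows that SC error probability is invariant under $M \mapsto EM$, observes that the bit-channels seen by $U_i$ for $i\geq 2$ are identical in the two codes, and thereby reduces everything to comparing the MAP error of $U_1$ in two explicit configurations --- one where $U_1\oplus U_2$ is observed through one use of $W$, and one where $U_1$ and $U_2$ are observed through two independent uses of $W$. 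That comparison is carried out term by term and rests on the inequality \eqref{eq:PeW'bound_Bij}, which the paper verifies numerically; the general case then follows by induction on the number of splits. Without this step (or a substitute for it), the per-block bound $\exp_2(-N^\beta)$ is not established, so your proposal is incomplete precisely at the point where the real work lies.
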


While there are no lower bounds on the sparsity limit of capacity-achieving LDGM codes, 
the best known results for the column weights are $O(\log N)$ and $O(N)$ for BEC and general BMS channels, respectively. Theorem \ref{prop:polarWithPolyColumnWeight} improves the result for general BMS channels by showing the existence of capacity-achieving LDGM codes with $O(N^s)$ column weights for any fixed $s>0.$ 
Nevertheless, the achievability with $O(\log N)$ sparsity remains unknown and hence, 
we use $ \log(N')$ as \textit{sparsity benchmark} in this paper, where
\begin{equation}\label{eq:N'def}
N' =n' N= 2^{N^{(1-\delta)E(G_l)}}  N 
\end{equation}
denotes the blocklength of the PB($G$) code, and  $ log(N')$ scales polynomially in $N$ as 
\begin{align}
	log(N')&= N^{(1-\delta)E(G_l)}+\log{N}  =   N^{(1-\delta)E(G_l) + o(1)} 
	\label{eq:logN'formula}
\end{align}{for all sufficiently large $n$.}

We analyze the column weights of $G$ compared to $log(N')$ in two main scenarios: (1) geometric mean column weight and (2) maximum column weight. 
The geometric mean is of interest because, as we show in Section \ref{subsection:mainL2}, the logarithm of the column weights of a polar code generator matrix concentrate around its (arithmetic) mean, which equals the logarithm of the `geometric mean column weight'. In other words, for a polar or a polar-based code, the scaling behaviour of the geometric mean column weight represents that the weights of typical columns. Hence in order to improve in the maximum column weight scenario, it suffices to study the `outliers' - the columns with weights much larger than the geometric mean column weight.  

\begin{definition}
For a binary matrix $G$ with $m$ columns, whose weights are denoted by $w_1, w_2,$ $\ldots,$ $w_m $, the geometric mean column weight $ w_{GM}(G) $ and the maximum column weight $ w_{max}(G) $ are defined as follows:
	\begin{align}
	 w_{GM}(G) &\triangleq  (w_1\times w_2 \times \ldots \times w_m)^\frac{1}{m} ,\\
	 w_{max}(G) &\triangleq  \max_{i}{w_i}.
	\end{align}  
\end{definition}

When $G$ is constructed as in \eqref{eq:Gdef}, the geometric mean column weight and the maximum column weight of $G$ are equal to those of $G_l^{\otimes n}$, respectively. 
Since the values do not depend on $n'$, we set the following definitions: 
\begin{align}
    & w_{GM}(n, G_l)  \, \triangleq \, w_{GM}(G_l^{\otimes n}) = w_{GM}(G_l^{\otimes n}\otimes I_{n'} ). \label{eq:wMCdef} \\
    & w_{max}(n, G_l ) \, \triangleq \, w_{max}(G_l^{\otimes n}) = w_{max}(G_l^{\otimes n}\otimes I_{n'} ). \label{eq:wmaxDef}
\end{align}
Note that $w_{GM}(n, G_l) =w_{GM}(G_l)^n$  
and $ w_{max}(n, G_l )= (max_i(w_i) )^n \leq l^n $.

\subsection{Sparsity with Kernel $G_2$}\label{subsection:mainL2}
In this section, we study the GM sparsity when the polarization kernel is chosen as $G_2$. 
Let $G= G_2^{\otimes n}\otimes I_{n'} $\,  with 
$n'$ chosen as in \eqref{eq:n'def}.
We show two things in this section: 
$w_{GM}(n, G_2)$ $ \approx \log N'$ and, after careful splitting we get a matrix $G'$ such that $w_{max}(G') \leq (\log N')^{1+2\epsilon^*}$ for a constant $\epsilon^* \approx 0.085$ with vanishing loss of rate compared to $G$. 

Given any BMS channel $W$, the following proposition gives a sequence of capacity-achieving PB$(G)$ codes over  $W$ with the geometric mean column weight \textit{almost} logarithmic in the block length.
\begin{proposition}\label{prop:wMCloginN}
For any fixed $\delta' >0$, $n'$ in \eqref{eq:n'def} can be chosen such that 
	\begin{align}
	w_{GM}(n, G_2)=[\log(N')]^{1+\delta'} \label{eq:wMCforG2}
	\end{align} for all sufficiently large $n$. \label{prop:wMC_l=2}
\end{proposition}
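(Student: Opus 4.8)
The plan is to reduce \eqref{eq:wMCforG2} to two explicit kernel computations and then match the exponent of $N$ on both sides through the formula \eqref{eq:logN'formula} for $\log(N')$. First I would pin down the two relevant constants for $G_2$. Its rows are $g_1=(1,0)$ and $g_2=(1,1)$, so the partial distances of \eqref{eq:DiDef}--\eqref{eq:DlDef} are $D_1=d_H(g_1,\mathrm{span}(g_2))=1$ and $D_2=w_H(g_2)=2$; hence by \eqref{eq:EGformula} the rate of polarization is $E(G_2)=\tfrac12(\log_2 1+\log_2 2)=\tfrac12$. The two columns of $G_2$, namely $(1,1)^T$ and $(0,1)^T$, have weights $2$ and $1$, so $w_{GM}(G_2)=(2\cdot 1)^{1/2}=\sqrt2$. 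Using the identity $w_{GM}(n,G_2)=w_{GM}(G_2)^n$ recorded just before the statement (which itself follows because the multiset of column weights of a Kronecker power is the multiset of products of kernel-column weights, each appearing with equal multiplicity), I obtain
\begin{equation*}
w_{GM}(n,G_2)=(\sqrt2)^{\,n}=2^{n/2}=N^{1/2}.
\end{equation*}
Crucially, this quantity does not depend on $n'$, so the whole argument consists of tuning the free parameter in \eqref{eq:n'def} to match this fixed left-hand side.

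Next I would substitute $E(G_2)=\tfrac12$ into \eqref{eq:logN'formula}, giving $\log(N')=N^{(1-\delta)/2}+\log N=N^{(1-\delta)/2}(1+o(1))$, where the $o(1)$ absorbs the lower-order $\log N=n$ contribution. Raising to the power $1+\delta'$ yields
\begin{equation*}
[\log(N')]^{1+\delta'}=N^{(1-\delta)(1+\delta')/2}\,(1+o(1)).
\end{equation*}
Comparing exponents with $w_{GM}(n,G_2)=N^{1/2}$, the powers of $N$ agree exactly when $(1-\delta)(1+\delta')=1$, i.e. when I choose $\delta=\delta'/(1+\delta')$, equivalently $1+\delta'=1/(1-\delta)$. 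This single equation is the entire conceptual content of the proposition and is the only place the parameter $\delta$ gets pinned down.

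It then remains to verify that this choice is admissible, i.e. that $\delta=\delta'/(1+\delta')$ lies in the range $(1-\beta/E(G_2),1)=(1-2\beta,1)$ demanded by \eqref{eq:n'def}. Since $\delta'>0$ the chosen $\delta$ lies in $(0,1)$, and since $\beta<E(G_2)=\tfrac12$ is only constrained from above, I am free to pick $\beta$ close enough to $\tfrac12$ that $1-2\beta<\delta'/(1+\delta')$; the interval $\bigl((1-\delta)/2,\tfrac12\bigr)$ of admissible $\beta$ is nonempty precisely because $\delta>0$. Thus a valid $\beta$ (and hence the error bound of Lemma \ref{Lemma:newG_PeBound}) coexists with the required $\delta$, and the two displayed identities give $w_{GM}(n,G_2)=[\log(N')]^{1+\delta'}(1+o(1))$, which is \eqref{eq:wMCforG2}.

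The step I expect to need the most care is the bookkeeping of the lower-order term: \eqref{eq:logN'formula} shows $\log(N')$ equals $N^{(1-\delta)/2}$ only up to the additive $\log N$ correction, so \eqref{eq:wMCforG2} is naturally read as an asymptotic (leading-exponent) identity. Making it an exact equality for every large $n$ would require either absorbing the correction into the integer choice of $n'$ (e.g. replacing $2^{N^{(1-\delta)/2}}$ by $2^{N^{(1-\delta)/2}}/N$ and rounding, which solves $\log(n'N)=N^{1/(2(1+\delta'))}$ on the nose) or stating the conclusion up to a $1+o(1)$ factor. Everything else is direct substitution, so no genuinely hard estimate arises beyond the exponent-matching equation $(1-\delta)(1+\delta')=1$.
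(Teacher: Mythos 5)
Your proposal is correct and follows essentially the same route as the paper's proof: compute $w_{GM}(n,G_2)=2^{n/2}=\sqrt{N}$, use \eqref{eq:logN'formula} to write $\log(N')=(N^{1/2})^{(1-\delta)+o(1)}$, and match exponents via $\delta=\delta'/(1+\delta')$ (the paper states this as $\delta'=\delta/(1-\delta)$), with the same admissibility argument that choosing $\beta$ close enough to $E(G_2)=\tfrac12$ makes this $\delta$ an allowed choice in \eqref{eq:n'def}. Your closing caveat about the lower-order term is apt but not a gap: the paper's own proof likewise only establishes $w_{GM}(n,G_2)=[\log(N')]^{1+\delta'+o(1)}$, so the stated equality is to be read in exactly the asymptotic sense you describe.
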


Based on Lemma \ref{Lemma:newG_PeBound} and Proposition \ref{prop:wMCloginN}, given a BMS channel $W$, there is a sequence of capacity-achieving PB$(G)$ codes over $W$ with their geometric mean column weight upper bounded by $[\log(N')]^{1+\delta'}$ for large $n$.

\begin{lemma}\label{lem:wMClogN_most}
For any fixed $\delta' >0$ and any $\delta'' >\delta'$,  $n'$ in \eqref{eq:n'def} can be chosen such that the ratio of columns with weights exceeding $[\log(N')]^{1+\delta''}$ is vanishing as $n$ grows large.
\end{lemma}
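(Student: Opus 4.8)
The plan is to combine the exact combinatorial description of the column weights of $G_2^{\otimes n}$ with a large-deviations (Chernoff/entropy) estimate for the binomial distribution. First I would record the structure of the weights. Since tensoring with $I_{n'}$ only duplicates columns, the multiset of column weights of $G=G_2^{\otimes n}\otimes I_{n'}$ equals that of $G_2^{\otimes n}$. Each column of $G_2^{\otimes n}$ is a Kronecker product of $n$ columns of $G_2$, whose two columns have weights $2$ and $1$; because the weight of a Kronecker product of vectors is the product of the individual weights, a column has weight $2^k$ exactly when $k$ of the $n$ factors are the weight-$2$ column. Hence precisely $\binom{n}{k}$ of the $N=2^n$ columns have weight $2^k$, for $k=0,1,\ldots,n$, so the (normalized) weight profile is governed by a $\mathrm{Binomial}(n,\tfrac12)$ law on $k$.

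Next I would fix the choice of $n'$ (equivalently the constant $\delta$ in \eqref{eq:n'def}) exactly as in Proposition \ref{prop:wMCloginN}, so that $w_{GM}(n,G_2)=w_{GM}(G_2)^n=2^{n/2}=[\log(N')]^{1+\delta'}$. Using $E(G_2)=\tfrac12$ this forces $1-\delta=1/(1+\delta')$. I would then translate the threshold into a condition on $k$: since $\log(N')=2^{n(1-\delta)/2}+\log N$, the dominant term gives $\log_2\log(N')=\tfrac{n(1-\delta)}{2}+o(1)$, so a column of weight $2^k$ exceeds $[\log(N')]^{1+\delta''}$ if and only if $k>\tfrac{(1+\delta'')n}{2(1+\delta')}+o(1)$. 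Writing $t:=\tfrac{1+\delta''}{1+\delta'}>1$ (here $\delta''>\delta'$ is used crucially), the offending columns are exactly those with $k>\tfrac{tn}{2}+o(1)$, a threshold that sits a fixed multiplicative factor above the binomial mean $n/2$.

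Finally I would bound the fraction of such columns. If $t\ge 2$ the threshold is $\ge n$ and there is nothing to prove, so I assume $1<t<2$, i.e.\ $t/2\in(\tfrac12,1)$. The fraction in question is $2^{-n}\sum_{k>tn/2}\binom{n}{k}$, and the standard upper-tail estimate $\sum_{k\ge\alpha n}\binom{n}{k}\le 2^{n h_b(\alpha)}$ valid for $\alpha\ge\tfrac12$ yields a bound of $2^{-n(1-h_b(t/2))}$. Since $h_b$ is strictly decreasing on $(\tfrac12,1)$ and $t/2>\tfrac12$, the exponent $1-h_b(t/2)$ is strictly positive, so the fraction decays exponentially in $n$ and in particular vanishes as $n\to\infty$, which is the claim.

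The argument is essentially routine once the combinatorial picture is in place; the main obstacle is ensuring that the hypothesis $\delta''>\delta'$ really produces a threshold lying a \emph{constant multiplicative} factor $t>1$ above the mean $n/2$, rather than merely an additive $o(n)$ amount, since it is exactly this multiplicative separation that activates the entropy bound via $h_b(t/2)<1$ and gives genuine exponential concentration. I would therefore take care that the $o(1)$ corrections coming from the subdominant $+\log N$ summand in $\log(N')$ and from rounding in the threshold are absorbed without eroding the separation $t>1$, so that $1-h_b(t/2)$ remains bounded away from $0$ for all large $n$.
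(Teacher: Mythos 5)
Your proof is correct, and it shares the paper's combinatorial starting point: since tensoring with $I_{n'}$ only replicates columns, the weights of $G$ are those of $G_2^{\otimes n}$, i.e., exactly $\binom{n}{k}$ columns of weight $2^k$, so the weight profile is that of $2^{X_1+\cdots+X_n}$ with $X_i$ i.i.d.\ Bernoulli$(1/2)$; and with $n'$ fixed as in Proposition \ref{prop:wMCloginN}, the threshold $[\log(N')]^{1+\delta''}$ becomes the exponent condition $k>\tfrac{(1+\delta'')}{2(1+\delta')}n+o(1)$. The difference is the concentration tool. The paper invokes the strong law of large numbers on the exponent $\sum_i X_i$ and concludes, qualitatively and without a rate, that all but a vanishing fraction of columns satisfy $2^{\sum_i X_i}=[\log(N')]^{1+\delta'+o(1)}\le[\log(N')]^{1+\delta''}$; as written, that argument slides between an almost-sure statement about a single infinite sequence and a per-$n$ statement about fractions of columns, which your exact-counting formulation avoids. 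You instead apply the binomial upper-tail entropy bound $\sum_{k\ge\alpha n}\binom{n}{k}\le 2^{n h_b(\alpha)}$ with $\alpha\to t/2$, where $t=(1+\delta'')/(1+\delta')>1$, obtaining the explicit exponential decay $2^{-n(1-h_b(t/2))}$ for the offending fraction. This is strictly stronger than the lemma requires, is careful about the $o(1)$ corrections (your observation that the multiplicative separation $t>1$ keeps $1-h_b(t/2)$ bounded away from $0$ is exactly the right point, and the $t\ge 2$ edge case is harmless since the fraction is then at most $2^{-n}$), and it is in harmony with the large-deviation estimates the paper itself uses later (Lemma \ref{lemma:aiAsymp} and Proposition \ref{Thm:ratelossAndEpsi'}) to control the splitting rate loss $\gamma$.
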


Although the geometric mean column weight of $G$ and the weights of most columns are almost logarithmic in $N'$, the maximum column weight is $w_{max}(G)= 2^n = [{w_{GM}(G) }]^2$ and is at least ${(\log{N'})}^2$.
However, we show next that a matrix $G'$ can be obtained from the splitting algorithm such that all column weights are below some threshold $w_{u.b.}$ whose power over $\log{N'}$ is smaller than $2$. 

Since polar codes and the code corresponding to  $G$ are capacity-achieving, as shown in Lemma \ref{Lemma:newG_PeBound}, and that the rates of the PB$(G)$ and PB$(G')$ codes differ by a ratio $1+\gamma$, the latter is capacity achieving if $\gamma$ vanishes as $n$ grows large.
In the following, we explore appropriate choices of the column weight threshold for the splitting algorithm that allow the value $\gamma$ goes to $0$ exponentially fast. 

Let $\epsilon>0 $ be given and  \begin{equation}
    w_{u.b.}= (\log N')^{1+\epsilon}= {N}^{  \frac{1}{2}+\epsilon' },   \label{eq:wubDef}
\end{equation} be the upper bound for the column weights, where 
\begin{equation}\label{eq:epsilon'def}
    \epsilon'= (1+\epsilon ) \left( \frac{1-\delta}{2}  +o(1)\right)-\frac{1}{2},
\end{equation} for sufficiently large $n$, according to equation \eqref{eq:logN'formula}. In order to estimate the multiplicative rate loss of $1+\gamma$ in terms of the threshold $w_{u.b.}$, we may study the effect on $G_2^{\otimes n}$.

First note that $\gamma $ is the ratio of the number of extra columns resulting from the splitting algorithm to the number of columns $N$ of $G_2^{\otimes n}$. Let $w_1, w_2, \ldots, w_N$ denote the column weights of $G_2^{\otimes n}$. 
 The term $\gamma$ can be characterized as follows:
\begin{equation}\label{eq:RweightCount_1}
	\gamma =\frac{1}{N} \sum_{k=1}^{k_{max}}\left(
	\big\vert{ 
\{		i:  k\, w_{u.b.}+1 \leq w_i < (k+1)\, w_{u.b.}    \}
	}\big\vert \times k \right),
\end{equation}where $k_{max}=\floor{\nicefrac{2^n}{w_{u.b.}} }$.
Let $X_1, X_2, \ldots$ be a sequence of i.i.d. Bernoulli($\frac{1}{2}$) random variables 
and  $X(n) \triangleq \sum_{i=1}^n X_i \sim \text{Binomial}(n, \frac12)$.  
Then $\gamma$ can be written as a sum of probability terms involving $X(n)$. 
\begin{lemma}\label{lemma:RinXn}
	The ratio $\gamma$, characterized in \eqref{eq:RweightCount_1}, is equal to
	\begin{align}\label{eq:RinXn}
	\gamma= 
    \sum_{k=1}^{k_{max}} \Pr\left(X(n)  > \log{(k\cdot w_{u.b.}  )} \right).
	\end{align} 
\end{lemma}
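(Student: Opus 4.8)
The plan is to reduce the entire claim to the column-weight distribution of $G_2^{\otimes n}$ and a single reindexing of the sum in \eqref{eq:RweightCount_1}. First I would record the structural fact that the weight of a column of $G_2^{\otimes n}$, indexed by $(b_1,\dots,b_n)\in\bit^n$, equals $2^{j}$, where $j$ is the number of coordinates in which the chosen factor of $G_2$ is the weight-$2$ column. This holds because the weight of a column of a Kronecker product $A\otimes B$ is the product of the corresponding column weights of $A$ and $B$, and $G_2$ has exactly one column of weight $2$ and one of weight $1$. Consequently exactly $\binom{n}{j}$ of the $N=2^n$ columns have weight $2^j$, for $j=0,1,\dots,n$, so the empirical distribution of the column weights $w_1,\dots,w_N$ is precisely that of $2^{X(n)}$ with $X(n)\sim\mathrm{Binomial}(n,\tfrac12)$, since $\binom{n}{j}/N=\Pr(X(n)=j)$.

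Next I would rewrite \eqref{eq:RweightCount_1} in tail form. The coefficient $k$ multiplying the count of columns in the $k$-th band is exactly the number of extra columns produced when the splitting algorithm acts on a column of weight in that band, and the weighted band-count telescopes into a tail-count. Writing $a_j$ for the number of columns whose weight lies in $(j\,w_{u.b.},(j+1)\,w_{u.b.}]$ and $b_k=\vert\{i:w_i> k\,w_{u.b.}\}\vert=\sum_{j\ge k}a_j$, the elementary layer-cake identity $\sum_{k\ge1}k\,a_k=\sum_{k\ge1}\sum_{j\ge k}a_j=\sum_{k\ge1}b_k$ gives
\begin{equation*}
\gamma=\frac1N\sum_{k=1}^{k_{max}}\big\vert\{\,i: w_i> k\,w_{u.b.}\,\}\big\vert,
\end{equation*}
where the upper limit $k_{max}=\floor{2^n/w_{u.b.}}$ appears because every column weight satisfies $w_i\le 2^n$, so the count vanishes once $k\,w_{u.b.}\ge 2^n$.

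Finally I would translate each normalized count into a probability using the distribution from the first step: $\tfrac1N\vert\{i:w_i>k\,w_{u.b.}\}\vert$ is the probability that a uniformly random column of $G_2^{\otimes n}$ has weight exceeding $k\,w_{u.b.}$, which equals $\Pr(2^{X(n)}>k\,w_{u.b.})$; taking base-$2$ logarithms on both sides of the event yields $\Pr(X(n)>\log(k\,w_{u.b.}))$. Summing over $k$ from $1$ to $k_{max}$ then produces exactly \eqref{eq:RinXn}.

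The only genuinely delicate point—and the step I would treat most carefully—is the bookkeeping at the band boundaries. The splitting rule assigns $m$ extra columns to a weight $w=m\,w_{u.b.}+r$ with $0<r\le w_{u.b.}$, whereas $w_{u.b.}=(\log N')^{1+\epsilon}$ is in general not an integer. I would therefore fix the band intervals as $(k\,w_{u.b.},(k+1)\,w_{u.b.}]$ and verify the pointwise identity $\lceil w/w_{u.b.}\rceil-1=\sum_{k\ge1}\mathbf{1}\{w> k\,w_{u.b.}\}$ for every real $w>0$ (it holds whether or not $w/w_{u.b.}$ is an integer). This reconciles the band description in \eqref{eq:RweightCount_1} with the tail form, and it also fixes the strict inequality in the target event $\{X(n)>\log(k\,w_{u.b.})\}$, independently of any divisibility of the weights by $w_{u.b.}$.
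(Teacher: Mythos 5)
Your proposal is correct and follows essentially the same route as the paper's proof: identify the empirical column-weight distribution of $G_2^{\otimes n}$ with $2^{X(n)}$ for $X(n)\sim\mathrm{Binomial}(n,\tfrac12)$, telescope the weighted band counts into tail counts, and convert each tail count into $\Pr\left(X(n) > \log(k\, w_{u.b.})\right)$. The only difference is that you make the boundary bookkeeping (non-integer $w_{u.b.}$, strict versus non-strict inequalities) explicit, which the paper glosses over; this is a welcome refinement rather than a departure.
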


Suppose that $ \log{(w_{u.b.})}$ is an integer denoted by $ n_{lub}$. (Otherwise, one may use $ \floor{ \log{(w_{u.b.})} }$ and the analysis still holds.) By grouping the $k_{max}$ terms in \eqref{eq:RinXn}, the ratio $\gamma$ can be expressed as a sum of  $\log{k_{max}}$ terms, 
as stated in the following lemma. Note that $  \log{k_{max}} = n-n_{lub}$.
\begin{lemma}\label{lemma:RasSum_ai}
We have $
	\gamma = a_0+a_1+\ldots +a_{n-n_{lub}-1}, 
$ where $a_i \triangleq    2^i \, \Pr{\left( X(n) \geq 1+i+n_{lub}\right)}$.	
\end{lemma}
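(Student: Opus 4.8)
The plan is to start from the expression for $\gamma$ given in Lemma \ref{lemma:RinXn} and regroup the $k_{max}$ summands according to the integer part of $\log k$. Since we assume $\log w_{u.b.} = n_{lub} \in \N$, the argument inside each probability simplifies to $\log(k \cdot w_{u.b.}) = \log k + n_{lub}$, so that
\begin{equation*}
\gamma = \sum_{k=1}^{k_{max}} \Pr\left( X(n) > \log k + n_{lub} \right).
\end{equation*}

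The key observation is that $X(n)$ takes only integer values, so each summand depends on $k$ only through $\floor{\log k}$. First I would fix an integer $i \geq 0$ and consider all $k$ with $2^i \leq k < 2^{i+1}$; there are exactly $2^i$ such $k$, and for each of them $\log k + n_{lub} \in [i + n_{lub}, i + n_{lub} + 1)$. Because $X(n) \in \Z$, the event $\{X(n) > \log k + n_{lub}\}$ coincides with $\{X(n) \geq i + n_{lub} + 1\}$ for every such $k$ --- including the boundary value $k = 2^i$, where $\log k + n_{lub}$ is itself the integer $i + n_{lub}$ and the strict inequality again forces $X(n) \geq i + n_{lub} + 1$. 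Summing over the $2^i$ values of $k$ in this block therefore contributes exactly $2^i \Pr(X(n) \geq 1 + i + n_{lub}) = a_i$.

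It then remains to identify which blocks actually appear. Since $w_{u.b.} = 2^{n_{lub}}$, we have $k_{max} = \floor{2^n / w_{u.b.}} = 2^{n - n_{lub}}$, so $k$ ranges over $1, \ldots, 2^{n-n_{lub}}$. The complete blocks indexed by $i = 0, 1, \ldots, n - n_{lub} - 1$ exhaust precisely $k = 1, \ldots, 2^{n-n_{lub}} - 1$ and contribute $a_0 + a_1 + \cdots + a_{n-n_{lub}-1}$. The single leftover index $k = 2^{n-n_{lub}}$ would open the block $i = n - n_{lub}$, but for it $\log k + n_{lub} = n$, and $\Pr(X(n) > n) = 0$ since $X(n) \leq n$ almost surely; this boundary term vanishes, which is exactly why the summation terminates at $a_{n - n_{lub} - 1}$. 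Collecting the block contributions yields the claimed identity.

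I expect the only real obstacle to be the careful bookkeeping at the two boundaries: handling the strict-versus-nonstrict inequality at each power-of-two value $k = 2^i$, where the real threshold $\log k + n_{lub}$ is itself an integer, and verifying that the final index $k = k_{max}$ contributes zero so that the sum stops one step early. Both points are immediate once the integer-valuedness of $X(n)$ is invoked, so I anticipate no substantive difficulty beyond this indexing.
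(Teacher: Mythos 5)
Your proposal is correct and follows essentially the same route as the paper: starting from Lemma \ref{lemma:RinXn}, grouping the $k_{max}$ summands into dyadic blocks $2^i \leq k < 2^{i+1}$, using the integer-valuedness of $X(n)$ to replace each threshold $\log k + n_{lub}$ by the event $\{X(n) \geq 1+i+n_{lub}\}$, and counting $2^i$ terms per block. The paper handles the boundary term $k = k_{max} = 2^{n-n_{lub}}$ only implicitly (it silently drops out since $\Pr(X(n) > n) = 0$), whereas you make this explicit; that is the only difference, and it is a point in your favor.
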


Let $\lambda(x,y) \triangleq -D_{KL}\left(\frac{1}{2}+x+y || \frac{1}{2} \right)+ y$ for $x,y \geq 0$ and $x+y \leq \frac{1}{2},$ where  $D_{KL}(p_1||p_2)$ denotes the Kullback–Leibler divergence between two distributions Ber$(p_1)$ and Ber$(p_2).$
We characterize the asymptotic behaviour of the terms in Lemma \ref{lemma:RasSum_ai} in the following lemma. 
\begin{lemma}  \label{lemma:aiAsymp}
Let 
$\epsilon'$ be as specified in \eqref{eq:epsilon'def}. Then we have
	\begin{equation}
	a_i \doteq   \, 2^{n \lambda(\epsilon', \alpha_i)}, \label{eq:aiApprox}
	\end{equation} where  $\alpha_i = \frac{i+1}{n}$, 
	and $a_n \doteq b_n$ denote that $\frac{1}{n}\log \frac{a_n}{b_n} \to 0$ as $n\to \infty$.
\end{lemma}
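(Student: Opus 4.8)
The plan is to strip $a_i$ down to a single binomial upper-tail probability, estimate that tail by the standard large-deviation (Stirling) formula, and then collect the exponents. First I would pin down $n_{lub}$: from \eqref{eq:wubDef} we have $w_{u.b.} = N^{1/2+\epsilon'} = 2^{(1/2+\epsilon')n}$, so $n_{lub} = \log(w_{u.b.}) = (1/2+\epsilon')n$. Writing $\alpha_i = (i+1)/n$, i.e.\ $i+1 = \alpha_i n$, the threshold appearing in $a_i$ becomes
\[
1+i+n_{lub} = \alpha_i n + \left(\tfrac12+\epsilon'\right)n = \left(\tfrac12+\epsilon'+\alpha_i\right)n,
\]
so that $a_i = 2^{\alpha_i n-1}\,\Pr\!\left(X(n) \geq (\tfrac12+\epsilon'+\alpha_i)n\right)$. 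For $0 \leq i \leq n-n_{lub}-1$ one checks $\alpha_i \in [1/n,\ 1/2-\epsilon']$, hence the evaluation point $p \triangleq \tfrac12+\epsilon'+\alpha_i$ lies in $[\tfrac12,1]$, matching the admissible domain of $\lambda$.

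Second, I would invoke the symmetric-binomial large-deviation estimate
\[
\Pr\!\left(X(n) \geq pn\right) \doteq 2^{-n D_{KL}(p\,\|\,\tfrac12)}, \qquad p \in (\tfrac12,1),
\]
which is standard: the upper bound is Chernoff's inequality, while the lower bound follows by retaining only the single largest term $\binom{n}{\lceil pn\rceil}2^{-n}$ and applying Stirling's approximation $\binom{n}{pn} \doteq 2^{n h_b(p)}$ together with the identity $D_{KL}(p\,\|\,\tfrac12) = 1-h_b(p)$. Because the relation $\doteq$ ignores any subexponential factor --- in particular the $2^{-1}$ prefactor and the polynomial Stirling corrections --- the two bounds agree to first order in the exponent.

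Combining the two displays and absorbing $2^{\alpha_i n-1}\doteq 2^{\alpha_i n}$ yields
\[
a_i \doteq 2^{\alpha_i n}\cdot 2^{-n D_{KL}(\frac12+\epsilon'+\alpha_i\,\|\,\frac12)} = 2^{n\left(\alpha_i - D_{KL}(\frac12+\epsilon'+\alpha_i\,\|\,\frac12)\right)} = 2^{n\lambda(\epsilon',\alpha_i)},
\]
which is exactly \eqref{eq:aiApprox}, since $\lambda(x,y) = -D_{KL}(\tfrac12+x+y\,\|\,\tfrac12)+y$ with $x=\epsilon'$ and $y=\alpha_i$.

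The exponent bookkeeping is routine; the point that requires genuine care is the \emph{uniformity} of the tail estimate as $i$ ranges over its whole interval. For the lower bound one must ensure the dominant-term argument holds uniformly in $\alpha_i\in[1/n,1/2-\epsilon']$, and in particular treat the upper endpoint $i = n-n_{lub}-1$, where $p\to 1$ and $h_b(p)\to 0$ so the estimate degenerates; a separate crude bound there suffices since $\lambda$ is continuous up to the boundary. One should also confirm that the $o(1)$ term in $\epsilon'$ from \eqref{eq:epsilon'def} is negligible at the level of $\frac{1}{n}\log(\cdot)$, so that it does not perturb the first-order exponent. Once these uniformity checks are in place, the claim follows directly from Stirling's formula and the definition of $\doteq$.
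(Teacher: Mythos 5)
Your proposal is correct and takes essentially the same route as the paper: the paper obtains the two-sided estimate $\frac{1}{(n+1)^2}\,2^{-nD(P_i^*\|Q)}\cdot 2^i \le a_i \le (n+1)^2\, 2^{-nD(P_i^*\|Q)}\cdot 2^i$ by citing Sanov's theorem, which in the binary case is exactly the Chernoff upper bound plus largest-term/Stirling lower bound you derive by hand, and then substitutes $n_{lub}=(\tfrac12+\epsilon')n$ to identify the exponent with $\lambda(\epsilon',\alpha_i)$. The uniformity issues you flag are absorbed by the polynomial prefactors (the bound $\binom{n}{k}\ge 2^{nh_b(k/n)}/(n+1)$ holds for all $0\le k\le n$, including the endpoint $p=1$, and $pn$ here is an integer), so no separate endpoint argument is actually needed.
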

\begin{proposition}
    \label{Thm:ratelossAndEpsi'} 
Let $G= G_2^{\otimes n}\otimes I_{n'} $, where $n', N', w_{u.b.} $, and $\epsilon'$ are given by \eqref{eq:n'def}, \eqref{eq:N'def}, \eqref{eq:wubDef}, and \eqref{eq:epsilon'def}, respectively. 
Suppose that the splitting algorithm is applied to form a matrix $G'\in \{0,1\}^{N'\times N'(1+\gamma)}$ such that $w_{max}(G') \leq w_{u.b.}$. 
Then $\gamma$ has the following asymptotic expression:
\begin{align}\label{eq:Rasymp}
	\gamma \doteq \begin{cases}
	2^{n(\epsilon^*- \epsilon')} \quad \to 0, &\mbox{if } \epsilon'> \epsilon^* \\
	2^{n\lambda(\epsilon', \alpha_{max} ) } \, \to \infty 
	, &\mbox{if } \epsilon' < \epsilon^*\end{cases},
\end{align}
where $ \epsilon^* \triangleq  \hspace{1mm}
	\log3 -\frac32
	\approx 0.085$ and $\alpha_{max}= \max_i \alpha_i$. 
\end{proposition}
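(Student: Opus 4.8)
The plan is to start from the representations already in hand. By Lemma~\ref{lemma:RasSum_ai} we have $\gamma = \sum_{i=0}^{n-n_{lub}-1} a_i$, and by Lemma~\ref{lemma:aiAsymp} each term obeys $a_i \doteq 2^{n\lambda(\epsilon',\alpha_i)}$ with $\alpha_i = (i+1)/n$. Since $n_{lub}=\log w_{u.b.} = n(\tfrac12+\epsilon')$, the index $i$ ranges so that $\alpha_i$ sweeps a $1/n$-spaced grid of the interval $[0,\,\tfrac12-\epsilon']$. Thus the entire question reduces to locating the dominant term of this sum and reading off its exponent; the constant $\epsilon^* = \log 3 - \tfrac32$ should fall out of a one-variable optimization of $\lambda(\epsilon',\cdot)$.

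First I would collapse the sum to its largest term on the exponential scale. The sum has at most $n$ terms, so $\max_i a_i \le \gamma \le n\max_i a_i$; applying $\tfrac1n\log(\cdot)$ and using $\tfrac{\log n}{n}\to 0$ together with Lemma~\ref{lemma:aiAsymp} gives $\gamma \doteq 2^{\,n\max_i \lambda(\epsilon',\alpha_i)}$. Because the grid spacing $1/n \to 0$ and $\lambda(\epsilon',\cdot)$ is continuous, the discrete maximum converges to the continuous one, $\max_i \lambda(\epsilon',\alpha_i) \to \max_{\alpha\in[0,\,1/2-\epsilon']}\lambda(\epsilon',\alpha)$, so it remains to solve this optimization.

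Next I would carry out the optimization explicitly. Writing $p = \tfrac12+\epsilon'+\alpha$ and using $D_{KL}(p\,\|\,\tfrac12) = 1 - h_b(p)$, we have $\lambda(\epsilon',\alpha) = \alpha - 1 + h_b(p)$, which is concave in $\alpha$ since $h_b$ is concave. Setting $\partial_\alpha \lambda = 1 + \log\tfrac{1-p}{p} = 0$ gives $p = \tfrac23$, i.e. the interior maximizer $\alpha^* = \tfrac16 - \epsilon'$, which lies in $[0,\,\tfrac12-\epsilon']$ exactly when $\epsilon' < \tfrac16$ (the regime of interest, as $\epsilon'\approx \epsilon/2$ is small). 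Evaluating there, $h_b(\tfrac23) = \log 3 - \tfrac23$, so $\lambda(\epsilon',\alpha^*) = (\log 3 - \tfrac23) - 1 + (\tfrac16 - \epsilon') = \log 3 - \tfrac32 - \epsilon' = \epsilon^* - \epsilon'$. Hence $\gamma \doteq 2^{\,n(\epsilon^*-\epsilon')}$, tending to $0$ when $\epsilon' > \epsilon^*$ and to $\infty$ when $\epsilon' < \epsilon^*$; note the dominant contribution comes from the grid point $\alpha_{max}$ nearest the interior maximizer $\alpha^*$ and not from the endpoints $\alpha=0$ or $\alpha=\tfrac12-\epsilon'$, where $\lambda$ is strictly smaller. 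This yields the claimed dichotomy, the transition being pinned at $\epsilon^*$ by the sign of the interior optimum.

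The main obstacle I anticipate is the passage from the term-by-term estimate $a_i \doteq 2^{n\lambda(\epsilon',\alpha_i)}$ to the statement about the whole sum: one must ensure the sub-exponential $o(1)$ error implicit in Lemma~\ref{lemma:aiAsymp} is controlled \emph{uniformly} over the $\Theta(n)$ grid points, so that bounding $\gamma$ by $n$ times the maximal term does not corrupt the leading exponential. The lower bound is easy (keep a single best term), but the upper bound needs this uniformity. By contrast, the concavity argument and the explicit critical-point computation giving $p=\tfrac23$ and the constant $\log 3 - \tfrac32$ are routine. A secondary point to verify is that the interior maximizer remains feasible, $\alpha^*=\tfrac16-\epsilon'\ge 0$, which holds in the small-$\epsilon'$ regime and is precisely what places the threshold at $\epsilon^*$.
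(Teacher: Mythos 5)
Your proposal is correct and follows essentially the same route as the paper's proof: the sandwich $\max_i a_i \leq \gamma \leq n\max_i a_i$, reduction to a one-variable concave maximization of $\lambda(\epsilon',\cdot)$, the interior critical point $\alpha^* = \tfrac16-\epsilon'$ (equivalently $p=\tfrac23$), and the maximum value $\epsilon^*-\epsilon'$. Your rewriting via $\lambda(\epsilon',\alpha)=\alpha-1+h_b(\tfrac12+\epsilon'+\alpha)$ is only a cosmetic simplification of the paper's direct KL computation, and the uniformity you worry about is already supplied by the uniform polynomial factors $(n+1)^{\pm 2}$ in the Sanov bound underlying Lemma \ref{lemma:aiAsymp}.
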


The conditions in \eqref{eq:Rasymp} can be expressed in terms of the relation between $\epsilon$ and $\epsilon^*$ leading to the following corollary. 
\begin{corollary} \label{Coro:ratelosAndEpsi}
	Let $n', N', \epsilon'$, $\epsilon^*$, $w_{u.b.}$, and $\alpha_{max}$ be as specified in Proposition \ref{Thm:ratelossAndEpsi'}. Then $\gamma \rightarrow 0$ exponentially fast in $n$ if $\epsilon> 2 \epsilon^*$, and $\gamma \rightarrow \infty$ exponentially fast in $n$ if $\epsilon < 2 \epsilon^*$.
\end{corollary}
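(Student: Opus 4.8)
The plan is to convert the threshold condition on $\epsilon'$ appearing in Proposition \ref{Thm:ratelossAndEpsi'} (namely $\epsilon' > \epsilon^*$ versus $\epsilon' < \epsilon^*$) into the stated condition on $\epsilon$ (namely $\epsilon > 2\epsilon^*$ versus $\epsilon < 2\epsilon^*$), using the explicit dependence of $\epsilon'$ on $\epsilon$ recorded in \eqref{eq:epsilon'def}. Since the kernel is $G_2$, the partial distances are $D_1 = 1$ and $D_2 = 2$, so $E(G_2) = \tfrac12(\log_2 1 + \log_2 2) = \tfrac12$. Substituting this into \eqref{eq:logN'formula} gives $\log N' = N^{(1-\delta)/2 + o(1)}$, and hence \eqref{eq:epsilon'def} reads $\epsilon' = (1+\epsilon)\bigl(\tfrac{1-\delta}{2} + o(1)\bigr) - \tfrac12$. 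As $n \to \infty$ the $o(1)$ term vanishes, so $\epsilon'$ converges to the fixed value $\bar\epsilon' \triangleq (1+\epsilon)\tfrac{1-\delta}{2} - \tfrac12$, and it is the position of $\bar\epsilon'$ relative to $\epsilon^*$ that governs the sign of the exponent in \eqref{eq:Rasymp}.

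First I would treat the direction $\epsilon > 2\epsilon^*$, where I want to land in the first branch of \eqref{eq:Rasymp}, i.e. produce $\epsilon' > \epsilon^*$. Rearranging $\bar\epsilon' > \epsilon^*$ yields the equivalent inequality $(1+\epsilon)(1-\delta) > 1 + 2\epsilon^*$. Because $\epsilon > 2\epsilon^*$ makes $(1+\epsilon) > 1 + 2\epsilon^*$ with a fixed positive slack, I can choose $\beta$ close enough to $E(G_2) = \tfrac12$ so that the admissible $\delta \in (1 - \beta/E(G_2),\, 1)$ is small enough to preserve the strict inequality. For such a $\delta$ the quantity $\epsilon^* - \bar\epsilon'$ is a fixed negative constant, so for all sufficiently large $n$ we have $\epsilon^* - \epsilon' < 0$ bounded away from zero, and Proposition \ref{Thm:ratelossAndEpsi'} gives $\gamma \doteq 2^{n(\epsilon^* - \epsilon')} \to 0$ exponentially fast.

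Next, for $\epsilon < 2\epsilon^*$ I want the second branch of \eqref{eq:Rasymp}, i.e. $\epsilon' < \epsilon^*$. This direction is automatic for every admissible $\delta > 0$: since $1 - \delta < 1$ we have $(1+\epsilon)(1-\delta) < 1 + \epsilon < 1 + 2\epsilon^*$, which is precisely $\bar\epsilon' < \epsilon^*$. Proposition \ref{Thm:ratelossAndEpsi'} then yields $\gamma \doteq 2^{n\lambda(\epsilon', \alpha_{max})}$ with $\lambda(\epsilon', \alpha_{max}) > 0$, so $\gamma \to \infty$ exponentially fast, completing both cases.

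I expect the only delicate point to be bookkeeping around the $o(1)$ term in \eqref{eq:epsilon'def}: since $\epsilon'$ enters the exponent $2^{n(\epsilon^* - \epsilon')}$ multiplied by $n$, I must verify that the separation between $\bar\epsilon'$ and $\epsilon^*$ is a fixed constant (independent of $n$) so that the vanishing correction cannot flip the sign of the exponent for large $n$; this is immediate once $\delta$ is fixed. The structural subtlety worth flagging is the asymmetry between the two cases: the $\epsilon > 2\epsilon^*$ branch genuinely requires driving $\delta$ toward $0$ (equivalently $\beta \to E(G_2)^-$), whereas the $\epsilon < 2\epsilon^*$ branch holds uniformly over all admissible $\delta$.
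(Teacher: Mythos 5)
Your proposal is correct and follows essentially the same route as the paper: both arguments convert the threshold $\epsilon' \gtrless \epsilon^*$ from Proposition \ref{Thm:ratelossAndEpsi'} into $\epsilon \gtrless 2\epsilon^*$ via the relation \eqref{eq:epsilon'def}, using the freedom to take $\delta$ (equivalently $\beta$ close to $E(G_2)$) arbitrarily small. Your write-up is somewhat more explicit than the paper's about the quantifier structure — fixing $\delta$ so the margin $\epsilon^*-\epsilon'$ is a constant that the $o(1)$ term cannot flip, and noting that the $\epsilon<2\epsilon^*$ case holds for every admissible $\delta$ — but this is a refinement of, not a departure from, the paper's argument.
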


The rate loss $1+\gamma$ of the code corresponding to  $G'$ compared to the code corresponding to $G$ can thus be made arbitrarily close to $1$ when the column weight upper bound $w_{u.b.}$ is appropriately chosen. By combining the results in Section \ref{sec:PeForNewApproach} and the Corollary \ref{Coro:ratelosAndEpsi} we have the following theorem:
\begin{theorem}\label{Coro:LDGMwithUniformBound}
    Let $\epsilon >2\epsilon^* $, and a BMS channel with capacity $C$ be given, and let  $\beta$ be chosen such that $1 > \frac{\beta}{0.5} > \frac{1 + 2\epsilon^*}{1 + \epsilon}$.
    For any $R < C$, there exists a sequence of codes corresponding to  $G'$ with rate $R$, generated by applying the splitting algorithm to $G= G_2^{\otimes n}\otimes I_{n'}$,
    with the following properties:
    \begin{enumerate}
        \item 
        The error probability is upper bounded by $\exp_2{(-(\log N') ^{ 2\beta} )} $.
        \item
        The  weight of each column of the GM is upper bounded by $(\log N')^{1+\epsilon}$.
    \end{enumerate}
\end{theorem}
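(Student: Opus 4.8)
The plan is to specialize the general PB$(G')$ construction to the binary kernel $G_2$ with threshold $w_{u.b.}=(\log N')^{1+\epsilon}$, and then to extract the three conclusions from results already established. Since the partial distances of $G_2$ are $1$ and $2$, formula \eqref{eq:EGformula} gives $E(G_2)=\frac12$, so the error-probability bounds of Lemma \ref{Lemma:newG_PeBound} and Proposition \ref{prop:codeWithG'} both specialize to the exponent $\frac{\beta}{E(G_2)}=2\beta$ that appears in item~(1), while item~(2) is hard-wired into the splitting algorithm. The only genuine issue is to show item~(1) and the capacity-achieving property can be secured at the same time, and this is exactly what the hypothesis $1>\frac{\beta}{0.5}>\frac{1+2\epsilon^*}{1+\epsilon}$ is designed to guarantee.

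First I would fix $G_l=G_2$ and choose the parameter $\delta$ controlling $n'$ in \eqref{eq:n'def}. Two opposing constraints bear on $\delta$. On one side, the admissibility of $n'$ in \eqref{eq:n'def}, on which Lemma \ref{Lemma:newG_PeBound} and Proposition \ref{prop:codeWithG'} rest, demands $\delta>1-\frac{\beta}{E(G_2)}=1-2\beta$. On the other side, by \eqref{eq:epsilon'def} we have $\epsilon'=(1+\epsilon)\left(\frac{1-\delta}{2}+o(1)\right)-\frac12$, and a short rearrangement shows that the condition $\epsilon'>\epsilon^*$ which drives the rate loss to zero in Proposition \ref{Thm:ratelossAndEpsi'} is equivalent to $\delta<1-\frac{1+2\epsilon^*}{1+\epsilon}$. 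The hypothesis on $\beta$ says precisely $1-2\beta<1-\frac{1+2\epsilon^*}{1+\epsilon}$, so this interval for $\delta$ is nonempty, and I would select any $\delta$ inside it.

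With $\delta$ fixed I would form $G=G_2^{\otimes n}\otimes I_{n'}$, apply the splitting algorithm at threshold $w_{u.b.}=(\log N')^{1+\epsilon}$ from \eqref{eq:wubDef} to obtain $G'\in\{0,1\}^{N'\times N'(1+\gamma)}$, and verify the three items. Item~(2) is immediate: by construction no column of $G'$ has weight above $w_{u.b.}=(\log N')^{1+\epsilon}$. Item~(1) follows from Proposition \ref{prop:codeWithG'}, which supplies an SC-based decoder under which the PB$(G')$ error probability is at most $\exp_2(-N^\beta)\le\exp_2(-(\log N')^{2\beta})$. For the rate, the PB$(G')$ code retains the $K'$ selected rows but stretches the block length from $N'$ to $N'(1+\gamma)$, so its rate is that of the underlying code divided by $1+\gamma$; since the chosen $\delta$ gives $\epsilon'>\epsilon^*$, Proposition \ref{Thm:ratelossAndEpsi'} (equivalently Corollary \ref{Coro:ratelosAndEpsi}, under $\epsilon>2\epsilon^*$) yields $\gamma\to0$ exponentially fast in $n$.

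It remains to settle the rate bookkeeping: for the target $R<C$, I would build the underlying kernel-$G_2$ code at rate $R(1+\gamma_n)$, which stays below $C$ for large $n$ because $\gamma_n\to0$, so that after splitting the PB$(G')$ code has rate exactly $R$ while the error bound persists. The main obstacle is the tension isolated above: a smaller target exponent $\beta$ forces $\delta$ larger to keep $n'$ admissible, which shrinks $\epsilon'$ toward $\epsilon^*$ and endangers the vanishing of $\gamma$. The quantitative trade-off between these two effects is captured exactly by the stated lower bound on $\beta$, so the crux of the argument is verifying that the $\delta$-interval is nonempty rather than carrying out any further estimation.
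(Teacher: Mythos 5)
Your proposal is correct and follows essentially the same route as the paper's proof: item (2) from the splitting threshold, item (1) from Proposition \ref{prop:codeWithG'} with $E(G_2)=\tfrac12$, and the capacity-achieving property from Lemma \ref{Lemma:newG_PeBound} together with Proposition \ref{Thm:ratelossAndEpsi'}/Corollary \ref{Coro:ratelosAndEpsi}. In fact your bookkeeping is more explicit than the paper's: you show the admissible interval $1-2\beta<\delta<1-\frac{1+2\epsilon^*}{1+\epsilon}$ is nonempty exactly when $\frac{\beta}{0.5}>\frac{1+2\epsilon^*}{1+\epsilon}$, which is the role of the hypothesis on $\beta$ that the paper's terse proof leaves implicit (its cited corollary assumes $\delta$ can be taken arbitrarily small, which by itself conflicts with $\delta>1-2\beta$).
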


\subsection{Generalized LDGM Construction} \label{subsec:constructionA}

The coding scheme given in Section \ref{sec:ConstructionNewApproach} can be generalized to other sequences of codes as follows. 

\textbf{Construction A}:~
Let $\cC$ be an $(N, R)$ binary block code and $n'$ be an integer. 
Let $K = NR$ and $\bx(\bu) \in \cC$ denote the codeword for the $K$-tuple $\bu \in \bit^K$ in $\cC$.
We define an $(n'N, R)$ binary block code $\cC'$ through the following mapping: given 
an $(n'K)$-tuple $[\bu_1 \ \bu_2 \ \ldots \ \bu_{n'} ]$, $\bu_i \in \bit^K$ for $1 \leq i \leq n'$, $\cC'$ maps it to the codeword $[\bx(\bu_1) \ \bx(\bu_2) \ \ldots\  \bx(\bu_{n'})]$ of length $n'N$.

\begin{figure}[ht]
     \centering
        \includegraphics[width=0.95\textwidth]{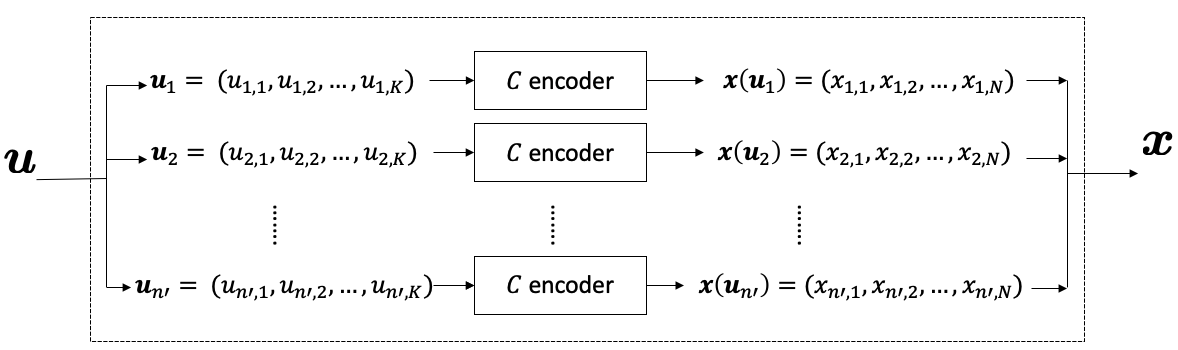}
         \caption{Input-output relationship for encoder for $C'$}
        \label{fig:ConstructionA}
\end{figure}

We denote the block length of $\cC'$ by $N' = n'N$ and the dimension by $K' = n'K$. The construction is illustrated in Figure \ref{fig:ConstructionA}.
We may bound the probability of error of $\cC'$ through that of $\cC$ in the following proposition. 
\begin{proposition}
The probability of error of the code $\cC'$, constructed based on code $\cC$ as described by Construction A, denoted by $P_e(\cC')$, over a BMS channel, can be bounded by 
\begin{equation*}
   P_e(\cC') \leq n' P_e(\cC),
\end{equation*} where $P_e(\cC)$ denotes the block error probability of the code $\cC$. 
\end{proposition}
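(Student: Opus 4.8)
The plan is to exploit that Construction~A operates on $n'$ disjoint blocks of $N$ coordinates, so that over a memoryless channel the $n'$ length-$N$ sub-words are transmitted through independent, statistically identical copies of $W$. Concretely, if $[\bu_1\ \bu_2\ \ldots\ \bu_{n'}]$ is the transmitted message, then $\cC'$ sends $[\bx(\bu_1)\ \ldots\ \bx(\bu_{n'})]$ and the received word splits as $[\by_1\ \ldots\ \by_{n'}]$, where $\by_i$ depends only on $\bx(\bu_i)$. By the memoryless property of the BMS channel, the joint likelihood factorizes as $\prod_{i=1}^{n'} W^N(\by_i \mid \bx(\bu_i))$.

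First I would use this factorization to show that the ML decoder of $\cC'$ is separable: maximizing $\prod_{i=1}^{n'} W^N(\by_i \mid \bx(\bu_i))$ jointly over $[\bu_1\ \ldots\ \bu_{n'}]$ is equivalent to maximizing each factor over $\bu_i$ independently. Hence the ML decoder for $\cC'$ coincides with the block-wise decoder that applies the ML decoder of the underlying code $\cC$ to each $\by_i$ separately, yielding estimates $\hat{\bu}_1,\ldots,\hat{\bu}_{n'}$.

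Next, letting $E_i$ denote the event that the $i$-th sub-word is decoded incorrectly (i.e. $\hat{\bu}_i \neq \bu_i$), the block-error event of $\cC'$ is precisely $\bigcup_{i=1}^{n'} E_i$, since the overall estimate is correct if and only if every sub-word is correctly decoded. Because each sub-word experiences an independent, statistically identical copy of $W$ and is decoded by the ML decoder of $\cC$, we have $\PP[E_i] = P_e(\cC)$ for every $i$. A union bound then gives $P_e(\cC') = \PP\big[\bigcup_{i=1}^{n'} E_i\big] \leq \sum_{i=1}^{n'}\PP[E_i] = n' P_e(\cC)$, which is the claimed inequality.

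There is no genuinely hard step here; the only points demanding care are the separability of the ML rule, which is the sole place the memoryless BMS assumption enters, and the identification of the $\cC'$ block-error event with the union of the per-block events $E_i$. If one prefers to bypass the separability argument, it suffices to note that $P_e(\cC')$ under ML is no larger than the error probability of the (possibly suboptimal) block-wise decoder and then apply the same union bound.
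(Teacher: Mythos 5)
Your proposal is correct and follows exactly the argument the paper intends: the paper's own proof is the one-line remark that the claim "follows from union bound and memorylessness of the channel," which is precisely your decomposition into independent per-block ML decoding (or the suboptimal block-wise decoder) plus a union bound over the $n'$ per-block error events. You have simply filled in the details the paper leaves implicit; there is no substantive difference in approach.
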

\begin{proof}
The claim follows from union bound and memorylessness of the channel. 
\end{proof}

For a BMS channel with capacity $C$, we construct two sequences of codes with rates approaching $C$ from below as the blocklength grows, based on Construction A with the RLE and polar code, in Subsections \ref{subsec:RLE} and \ref{subsec:polar}.  Their asymptotic performance in terms of the gaps to capacity, block error probabilities, decoding complexities, and the generator matrix sparsity, are compared therein. 

\subsection{Random Linear Code-based Construction}\label{subsec:RLE}
Let a BMS channel $W$ with capacity $C = I(W)$ be given.
For RLE, as described in \ref{subsec:prelim:moderate},  choosing $\epsilon_N = N^{-\alpha}$ for some fixed $\alpha \in (0, \frac12)$, Theorem \ref{Thm:RCEmoderate} and Remark \ref{rem:RLEmoderate} show the existence of a sequence of linear codes with gap to capacity $\abs{I(W) - R_N} = O(N^{-\alpha} )$ and probability of error $P_{e,N} \leq \exp_2{(- \frac{1}{2\sigma(W)^2} N^{1-2\alpha} )}$ for large $N$. 
Assume we choose $n' = \exp_2{( \frac{1}{2\sigma(W)^2} N^{1-2\alpha} (1 - \delta) )}$ for some  $ \delta \in (0, 1)$ and expand the sequence of codes $\mathset{C_N}$ to a sequence of codes $\mathset{C'_{N', lin}}$ with block lengths $N' = n' N = \exp_2{( \frac{1}{2\sigma(W)^2} N^{1-2\alpha} (1 - \delta) +  \log N )} =   2^{O (N^{1-2\alpha })} $ and rate $R'_{N'} = R_N$. The gap to capacity of ${C'_{N', lin}}$ is 
\begin{equation*}
    {I(W) - R'_{N'}}=  O(N^{-\alpha} ) = O( (\log N')^{\frac{-\alpha}{1-2\alpha}} ), 
\end{equation*} since $\log N' = O (N^{1-2\alpha })$.
The maximal probability of error can be bounded, via union bound, as 
\begin{align*}
    P_e(C'_{N', lin}, W) &\leq n' \times P_{e,N} 
    \leq \exp_2{( - O(N^{1-2\alpha}) )} = 2^{- O( \log N' ) }  
\end{align*}

The decoding complexity of $C'_{N', lin}$, denoted by $Comp(C'_{N', lin})$, is upper bounded by 
\begin{align*}
    Comp(C'_{N', lin}) &\leq n' \times O(2^{NR} ) = 2^{O(N)} = \exp_2{(O( (\log N')^{\frac{1}{1-2\alpha}} ) )}
\end{align*}
The (arithmetic) mean column weight of the generator matrix for $C'_{N', lin}$, written as $W_{col,avg}$, is upper bounded by 
\begin{equation*}
    W_{col,avg} = N/2 = O( (\log N')^{\frac{1}{1-2\alpha}} )
\end{equation*}

\begin{proposition}
\label{prop:RLEextended}
By applying Construction A on the RLE with $n' = \exp_2{( \frac{1}{2\sigma(W)^2} N^{1-2\alpha} (1 - \delta) )}$, for some fixed $\alpha \in (0, \frac12)$ and $\delta \in (0, 1)$, there exists a sequence  of binary linear codes $\mathset{C'_{N',lin}}$ with block length $N'$ for which the following scaling behaviour holds
\begin{itemize}
    \item The gap to capacity ${I(W) - R'_{N'}} = O( (\log N')^{\frac{-\alpha}{1-2\alpha}} )$,
    \item The maximal probability of error $    P_e(C'_{N', lin}, W) = 2^{- O( \log N' ) } $,
    \item The decoding time complexity $    Comp(C'_{N', lin}) \leq  \exp_2{(O( (\log N')^{\frac{1}{1-2\alpha}} ) )}$, and 
    \item The average column weight of the generator matrix
    $    W_{col,avg} = O( (\log N')^{\frac{1}{1-2\alpha}} ).$
\end{itemize}
\end{proposition}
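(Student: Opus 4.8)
The plan is to assemble the four claimed scaling laws by instantiating the moderate-deviation guarantee for the random linear ensemble and then transporting every quantity from the base block length $N$ to the expanded block length $N'=n'N$ through a single change of variables. The engine is the relation $\log N' = \Theta(N^{1-2\alpha})$, equivalently $N = \Theta\big((\log N')^{1/(1-2\alpha)}\big)$, which converts any polynomial-in-$N$ or exponential-in-$N^{1-2\alpha}$ expression into the stated polylogarithmic-in-$N'$ form. Most of the individual estimates are already displayed in the paragraphs preceding the statement; what remains is to justify the choices and organize them into a clean chain.

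First I would fix $\epsilon_N = N^{-\alpha}$ and verify that it satisfies the two hypotheses of Theorem~\ref{Thm:RCEmoderate}: $\epsilon_N\to 0$, and $\epsilon_N\sqrt N = N^{1/2-\alpha}\to\infty$ precisely because $\alpha<\tfrac12$. Theorem~\ref{Thm:RCEmoderate} together with Remark~\ref{rem:RLEmoderate} then supplies a sequence of \emph{linear} codes $\mathset{C_N}$ with $R_N\geq I(W)-O(N^{-\alpha})$ and per-message error probability $P_{e,N}\leq \exp_2\!\big(-\tfrac{1}{2\sigma(W)^2}N^{1-2\alpha}(1-o(1))\big)$. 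Applying Construction~A with the prescribed $n'=\exp_2\!\big(\tfrac{1}{2\sigma(W)^2}N^{1-2\alpha}(1-\delta)\big)$ produces $\mathset{C'_{N',lin}}$ of block length $N'=n'N$ and, since Construction~A stacks $n'$ independent copies of the same encoder, rate $R'_{N'}=R_N$.

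With the base code in hand I would transport each quantity. The gap-to-capacity is immediate: $I(W)-R'_{N'}=O(N^{-\alpha})=O\big((\log N')^{-\alpha/(1-2\alpha)}\big)$. For the error probability I invoke the union-bound estimate $P_e(\cC')\leq n'P_e(\cC)$ established in the preceding proposition, so that $P_e(C'_{N',lin})\leq n'\,P_{e,N}\leq \exp_2\!\big(-\tfrac{\delta}{2\sigma(W)^2}N^{1-2\alpha}(1-o(1))\big)=2^{-O(\log N')}$. The decoding complexity multiplies the per-block maximum-likelihood cost $O(2^{NR})$ by the $n'$ blocks, giving $2^{O(N)}=\exp_2\!\big(O((\log N')^{1/(1-2\alpha)})\big)$. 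Finally, because the generator matrix of $C'_{N',lin}$ is block-diagonal with $n'$ identical blocks, each of its columns has the same weight as a column of the underlying RLE generator matrix, namely about $N/2$; hence $W_{col,avg}=N/2=O\big((\log N')^{1/(1-2\alpha)}\big)$.

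The main obstacle — and the only step needing real care — is the error-probability bound, where the multiplicity $n'$ is itself a stretched exponential in $N^{1-2\alpha}$ and a priori threatens to swamp the decay of $P_{e,N}$. The role of the slack factor $(1-\delta)$ in the exponent of $n'$ is exactly to leave a residual negative exponent $-\tfrac{\delta}{2\sigma(W)^2}N^{1-2\alpha}$ of the same order; any $\delta\in(0,1)$ works, and the same margin simultaneously absorbs the $(1-o(1))$ arising from the $\limsup$ in Theorem~\ref{Thm:RCEmoderate}. It is worth emphasizing that the gap-to-capacity and the error exponent are both driven by the single moderate-deviation scale $\epsilon_N=N^{-\alpha}$ and cannot be tuned separately, so the entire trade-off among gap, sparsity, and complexity is parametrized by $\alpha$ (and the fixed slack $\delta$). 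Everything else is routine substitution using $\log N'=\Theta(N^{1-2\alpha})$.
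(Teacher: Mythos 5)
Your proposal is correct and follows essentially the same route as the paper's own derivation: instantiate Theorem~\ref{Thm:RCEmoderate} and Remark~\ref{rem:RLEmoderate} with $\epsilon_N = N^{-\alpha}$, apply Construction~A with the prescribed $n'$, and transport all four quantities through the single relation $\log N' = \Theta(N^{1-2\alpha})$ together with the union bound $P_e(\cC') \leq n' P_e(\cC)$. If anything, you are slightly more careful than the paper in verifying the hypotheses of Theorem~\ref{Thm:RCEmoderate}, tracking the $(1-o(1))$ slack coming from its $\limsup$ statement, and explaining why the factor $(1-\delta)$ in the exponent of $n'$ leaves a residual negative exponent so that the union bound does not swamp the error decay.
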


\subsection{Polar Code-based Construction}\label{subsec:polar}

Let $\lambda = \frac{1 - \gamma}{\mu} \in (0, \frac{1}{1+ \mu})$. For the polar code based on kernel $G_2$, Theorem \ref{Thm:PolarModerate} can be stated in terms of $\lambda$ as follows:
\begin{align*}
    P_e &\leq N \cdot \exp_2{ (-N^{ (1- \lambda \mu) \cdot h_2^{-1} ( 1- \frac{\lambda}{ 1- \lambda \mu}) } )},\\
    I(W) - R_N &= O(N^{-\lambda} )
\end{align*}
Assume we choose $n' = \exp_2{( (1-\delta)  \cdot N^{(1-\lambda \mu)  \cdot h_2^{-1} ( 1- \frac{\lambda}{ 1- \lambda \mu}) } )}$ for some  $ \delta \in (0, 1)$ and construct a code sequence $\mathset{C'_{N',pol}}$ using Construction A. 
The block length of $C'_{N',pol}$ is 
\[N' = n'N = \exp_2{( (1-\delta)  N^{(1-\lambda \mu)  \cdot h_2^{-1} ( 1- \frac{\lambda}{ 1- \lambda \mu})} + \log N )} 
=   \exp_2{( O( N^{(1-\lambda \mu)  \cdot h_2^{-1} ( 1- \frac{\lambda}{ 1- \lambda \mu}) } ))}.\]
Noting that the rate $R'_{N'}$  of the code $C'_{N',pol}$ is equal to $R_N$, the gap to capacity of $C'_{N',pol}$ is 
\begin{equation*}
    {I(W) - R'_{N'} }=  O(N^{-\lambda} ) = O( (\log N')^{\frac{-\lambda}{(1-\lambda \mu)  \cdot h_2^{-1} ( 1- \frac{\lambda}{ 1- \lambda \mu}) } } ).
\end{equation*}
The block error probability can be bounded, via union bound, as 
\begin{align*}
    P_e(C'_{N',pol}, W) \leq n' \times P_{e} 
    = 2^{-O(\log N') } 
    =     1/ poly(N')
\end{align*}
The decoding complexity of $C'_{N',pol}$ is upper bounded by 
\begin{equation*}
    Comp(C'_{N', pol}) \leq n' \cdot N\log N  = \exp_2{( (1-\delta) N^{(1-\lambda \mu)  \cdot h_2^{-1} ( 1- \frac{\lambda}{ 1- \lambda \mu}) }  + \log N + \log log N )} = (N')^{1+ o(1)}.
\end{equation*}
The mean column weight of the generator matrix of $C'_{N',pol}$ is upper bounded by 
\begin{equation*}
   W_{col,avg} \leq  \frac{3^n}{N} = N^{\log 3 -1} = O( (\log N')^{ \frac{0.585}{(1-\lambda \mu)  \cdot h_2^{-1} ( 1- \frac{\lambda}{ 1- \lambda \mu})}} ),
\end{equation*}
where $n = \log N$ is the number of layers in the factor graph representation for the encoder for polar code with block length $N$.
\begin{proposition}
\label{prop:PolarExtended}
Let $n' = \exp_2{( (1-\delta)  \cdot N^{(1-\lambda \mu)  \cdot h_2^{-1} ( 1- \frac{\lambda}{ 1- \lambda \mu}) } )}$ for some  $ \delta \in (0, 1)$ and $\lambda \in (0, \frac{1}{1+ \mu})$. 
There exists a sequence  of polar-based binary linear codes $\mathset{C'_{N',pol}}$ with block length $N'$ for which the following scaling behaviour holds
\begin{itemize}
    \item The gap to capacity ${I(W) - R'_{N'}} = O( (\log N')^{\frac{-\lambda}{(1-\lambda \mu)  \cdot h_2^{-1} ( 1- \frac{\lambda}{ 1- \lambda \mu}) } } )$,
    \item The block probability of error $    P_e(C'_{N', pol}, W) = 2^{- O( \log N' ) } $,
    \item The decoding time complexity $    Comp(C'_{N', pol}) =(N')^{1+ o(1)}$, and 
    \item The average column weight of the generator matrix
    $    W_{col,avg} =  O( (\log N')^{ \frac{0.585}{(1-\lambda \mu)  \cdot h_2^{-1} ( 1- \frac{\lambda}{ 1- \lambda \mu})}} ).$
\end{itemize}
\end{proposition}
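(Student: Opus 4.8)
The plan is to obtain $\{C'_{N',pol}\}$ by applying Construction~A to the standard polar code family with kernel $G_2$, and to propagate the four target quantities through the length blow-up by the factor $n'$. The starting point is the moderate-deviation estimate of Theorem~\ref{Thm:PolarModerate}. Writing $\lambda = (1-\gamma)/\mu$, the admissible range $\gamma \in (1/(1+\mu),1)$ becomes $\lambda \in (0, 1/(1+\mu))$, and the theorem supplies a base polar code of length $N$ with gap to capacity $I(W)-R_N = O(N^{-\lambda})$ and block error probability $P_e \le N \cdot \exp_2(-N^{(1-\lambda\mu)\,h_2^{-1}(1-\lambda/(1-\lambda\mu))})$ under SC decoding. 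I will abbreviate the exponent as $e(\lambda) \triangleq (1-\lambda\mu)\,h_2^{-1}(1-\lambda/(1-\lambda\mu))$.

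The decisive step is the choice $n' = \exp_2((1-\delta)N^{e(\lambda)})$, which ties $\log n'$ to the base error exponent. With this choice $\log N' = \log n' + \log N = (1-\delta)N^{e(\lambda)} + \log N = \Theta(N^{e(\lambda)})$, so $N = \Theta((\log N')^{1/e(\lambda)})$ and every quantity that is a power of $N$ re-expresses as a power of $\log N'$. Since Construction~A stacks $n'$ independent copies of the base encoder, it preserves the rate, so $R'_{N'} = R_N$ and the gap to capacity $I(W)-R'_{N'} = O(N^{-\lambda}) = O((\log N')^{-\lambda/e(\lambda)})$. The Construction~A union-bound estimate (the unnamed proposition following Construction~A) gives $P_e(C'_{N',pol}) \le n' P_e \le \exp_2((1-\delta)N^{e(\lambda)})\cdot N \cdot \exp_2(-N^{e(\lambda)}) = 2^{-O(N^{e(\lambda)})} = 2^{-O(\log N')}$. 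Finally the decoder runs the length-$N$ SC decoder on each of the $n'$ blocks, so $Comp \le n' \cdot O(N\log N) = (N')^{1+o(1)}$.

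It remains to bound the average column weight. The base generator matrix is a submatrix of $G_2^{\otimes n}$, and the total number of ones in $G_2^{\otimes n}$ is $3^n$, since total weight is multiplicative under the Kronecker product and $G_2$ contains three ones. Over the $N = 2^n$ columns this yields arithmetic mean column weight at most $3^n/N = N^{\log 3 - 1}$ with $\log 3 - 1 \approx 0.585$. Construction~A merely repeats $G_2^{\otimes n}$ block-diagonally $n'$ times (cf.\ \eqref{eq:Gdef}), leaving the per-column statistics unchanged, so $W_{col,avg} \le N^{0.585} = O((\log N')^{0.585/e(\lambda)})$, the stated bound.

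The only genuinely hard input is Theorem~\ref{Thm:PolarModerate} itself, imported from \cite{mondelli2016unified}; the rest is bookkeeping along the substitution $\log N' = \Theta(N^{e(\lambda)})$. The single point needing care is the vanishing of $P_e$ after the union bound: the blow-up factor $n' = (2^{N^{e(\lambda)}})^{1-\delta}$ is a strict fractional power of the reciprocal base error probability, so the net exponent $-\delta\,N^{e(\lambda)}$ survives and yields $2^{-O(\log N')}$. One should also confirm that all exponents are well defined over $\lambda\in(0,1/(1+\mu))$: there $1-\lambda\mu>0$ and $\lambda/(1-\lambda\mu)<1$ iff $\lambda(1+\mu)<1$, so the argument $1-\lambda/(1-\lambda\mu)$ of $h_2^{-1}$ lies in $(0,1)$ and hence $e(\lambda)>0$.
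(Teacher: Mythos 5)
Your proposal is correct and follows essentially the same route as the paper: it instantiates Theorem \ref{Thm:PolarModerate} in terms of $\lambda$, applies Construction A with the identical choice of $n'$, and propagates rate, error probability (union bound with the surviving $-\delta N^{e(\lambda)}$ exponent), SC decoding complexity, and the $3^n$ total-weight count of $G_2^{\otimes n}$ through the substitution $\log N' = \Theta(N^{e(\lambda)})$. Your added checks — that $e(\lambda)>0$ on the stated range of $\lambda$, and the Kronecker-multiplicativity justification for the $3^n$ count — are details the paper leaves implicit, but the argument is the same.
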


\subsection{Comparison}\label{subsec:Compare}
 We collect the results in Sections \ref{subsec:RLE} and \ref{subsec:polar} in Table \ref{table:RLEvsPolar}. Note that the error probabilities do not depend on the choices of $\alpha$ and $\lambda$ in both cases. Note also that  $\mathset{C'_{N', pol}}$ is a sequence of capacity-achieving code with almost linear decoding complexity. 

\begin{table}[]
    \centering
    \begin{adjustbox}{width=0.8\textwidth}
    \begin{tabular}{ p{ 2cm} | c c }
      & $ C'_{N', lin}$ & $C'_{N', pol}$ \\ 
     \hline
     $I(W) - R$ & $O( (\log N')^{\frac{-\alpha}{1-2\alpha}} )$ & $ O( (\log N')^{\frac{-\lambda}{(1-\lambda \mu)  \cdot h_2^{-1} ( 1- \frac{\lambda}{ 1- \lambda \mu}) } } )$ \\  
     $P_e$ & $2^{- O( \log N' ) }$ & $2^{-O(\log N') }$    \\
     $Comp$ & $2^{O( (\log N')^{\frac{1}{1-2\alpha}} ) }$ & $(N')^{1+ o(1)} = (2^{ \log N' })^{1+o(1)}$    \\
     $W_{col,avg}$ & $ O( (\log N')^{\frac{1}{1-2\alpha}} )$ & $O( (\log N')^{ \frac{0.585}{(1-\lambda \mu)  \cdot h_2^{-1} ( 1- \frac{\lambda}{ 1- \lambda \mu})}} )$
    \end{tabular}
    \end{adjustbox}
    \caption{Results from Propositions \ref{prop:RLEextended} and \ref{prop:PolarExtended}}
    \label{table:RLEvsPolar}
\vspace{-7mm}
\end{table}

We may compare the two constructions by equating the scaling performance of their gaps to capacity and numerically evaluate the exponents of the decoding complexity and $W_{col,avg}$. 
Specifically, we define variables as follows. 
\begin{definition}
We define the \textit{exponent terms} for code sequences $\mathset{ C'_{N', lin}}$ and  $ \mathset{C'_{N', pol} }$ as:
 \begin{align*}
    &exp(gap, C'_{N', lin}) =\frac{\alpha }{1-2\alpha} \hspace{5mm}
    &&exp(gap, C'_{N', pol}) =\frac{\lambda}{(1-\lambda \mu)  \cdot h_2^{-1} ( 1- \frac{\lambda}{ 1- \lambda \mu}) }\\
     &exp(Comp, C'_{N', lin}) =\frac{1}{1-2\alpha} 
     &&exp(Comp, C'_{N', pol}) = 1 \\
     &exp(W_{col,avg}, C'_{N', lin}) = \frac{1}{1-2\alpha} 
     &&exp(W_{col,avg}, C'_{N', pol}) = \frac{0.585}{(1-\lambda \mu)  \cdot h_2^{-1} ( 1- \frac{\lambda}{ 1- \lambda \mu})}
 \end{align*}
\end{definition}
Using this definition, the gap to capacity, decoding time complexity, and average GM column weight for a code sequence $\mathset{C_{N'}}$ scales exponentially in $\log N'$, $2^{\log N'}$, and $\log N'$ with exponents given by 
$exp(gap, C_{N'}), exp(Comp, C_{N'})$, and $ exp(W_{col,avg}, C_{N'})$, respectively.
 
Note that the gaps to capacity scale exponentially in $\log N'$ with exponents $- exp(gap, C'_{N', lin})$ and $-exp(gap, C'_{N', pol})$ for $\mathset{ C'_{N', lin}}$ and  $ \mathset{C'_{N', pol} }$, respectively.
Assume $\mu = 3.579$, for each $\lambda \in (0, \frac{1}{1+ \mu})$, we may find $\alpha \in (0, \frac12)$ such that  $exp(gap, C'_{N', lin})  = exp(gap, C'_{N', pol})$. The relationship between the gap to capacity and decoding complexity is shown in Figure \ref{fig:comp_lower}. 
One may observe that the decoding complexity for the polar-based $C'_{N', pol}$ is independent of the choice of $\lambda$ and hence the gap-to-capacity exponent $exp(gap, C'_{N', pol})$. In fact, it remains almost linear in the blocklength $N'$. 
On the other hand, the exponent of the decoding complexity for the RLE-based $C'_{N', lin}$ code sequence,  $exp(Comp, C'_{N', lin})$, grows linear as the gap-to-capacity exponent $exp(gap, C'_{N, pol})$ increases. 
The relationship between the average column weight of the generator matrix, $W_{col,max}$, and the gap-to-capacity exponent $exp(gap)$ is plotted in Figure \ref{fig:sparse_lower}. It may be observed that the code $C'_{N', lin}$ scales exponentially in $\log N'$ with a smaller exponent that $C'_{N', pol}$ over the entire range. 
\begin{figure}[ht]
     \centering
     \begin{subfigure}[b]{0.48\textwidth}
         \centering
         \includegraphics[width=\textwidth]{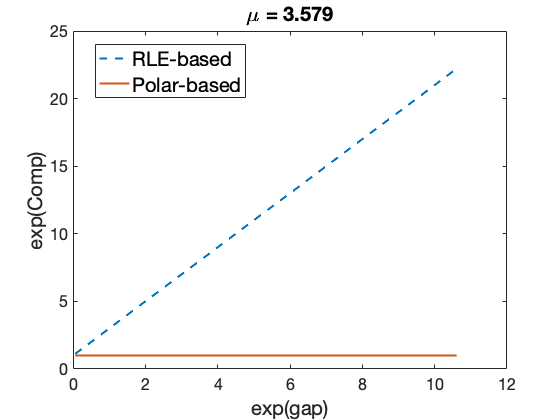}
         \caption{}
         \label{fig:comp_lower}
     \end{subfigure}
     \hfill
     \begin{subfigure}[b]{0.48\textwidth}
         \centering
         \includegraphics[width=\textwidth]{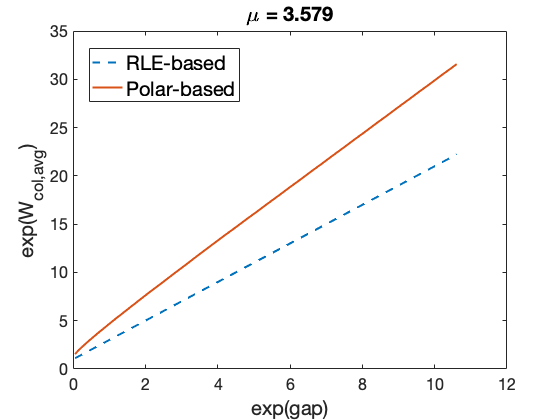}
         \caption{}
         \label{fig:sparse_lower}
     \end{subfigure}
        \caption{Results with $\mu = 3.579$.}
        \label{fig:compare_lower_mu}
\vspace{-7mm}
\end{figure}

We may also plot the decoding complexity exponent and the mean column weight exponent versus $exp(gap)$ when assuming the scaling exponent of the channel $W$ is $\mu = 4.714$. The results are plotted in Figures \ref{fig:comp_upper} and \ref{fig:sparse_upper}.

\begin{figure}[ht]
     \centering
     \begin{subfigure}[b]{0.48\textwidth}
         \centering
         \includegraphics[width=\textwidth]{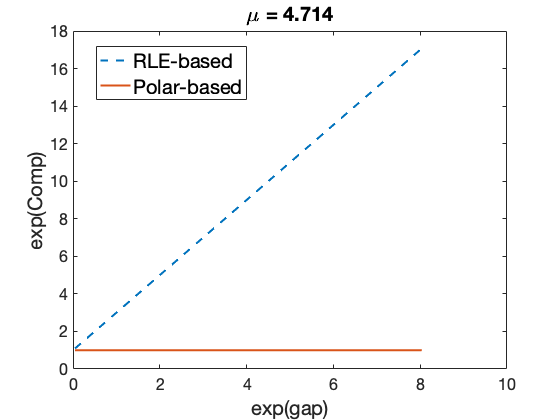}
         \caption{}
         \label{fig:comp_upper}
     \end{subfigure}
     \hfill
     \begin{subfigure}[b]{0.48\textwidth}
         \centering
         \includegraphics[width=\textwidth]{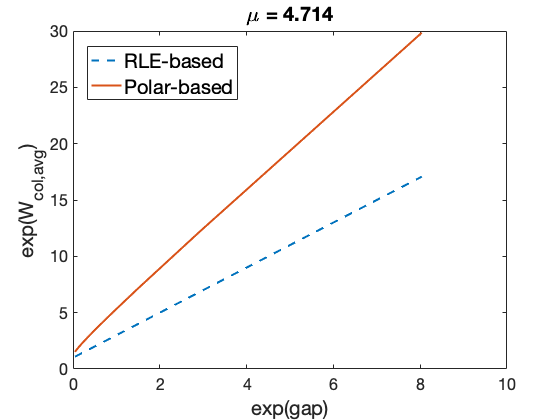}
         \caption{}
         \label{fig:sparse_upper}
     \end{subfigure}
        \caption{Results with $\mu = 4.714$.}
        \label{fig:compare_upper_mu}
\vspace{-7mm}
\end{figure}

\begin{remark}
For the RLE-based construction of the code sequence, if one considers $\alpha \rightarrow 0$, or equivalently the conventional error exponent regime where the rate $R < C$ is fixed, the error probability of the $m$-th message can be upper bounded by 
\begin{equation*}
    P_{e,m} < 4 \exp{[-N E_r(R)]}, \mbox{ for each } m, 1\leq m \leq M = 2^{NR}. 
\end{equation*}
Following the steps as in Section \ref{subsec:RLE}, we may choose
 $n' = 2^{  E_r(R) N (1 - \delta) }$ for some  $ \delta \in (0, 1)$ and hence the sequence of codes $\mathset{C'_{N', lin}}$ with block lengths $N' = n' N = 2^{ E_r(R) N (1 - \delta) +  \log N} =   2^{O (N)} $. 
The maximal probability of error is upper bounded by
\begin{align*}
    P_e(C'_{N', lin}, W) < 4 \exp{[-N E_r(R)]} = 2^{- O(N) } = 2^{- O( \log N' ) } = 1/ poly(N').
\end{align*}

The decoding complexity of $C'_{N', lin}$ is upper bounded by 
\begin{align*}
    Comp(C'_{N', lin}) \leq n' O(2^{NR} ) = 2^{ E_r(R) N (1 - \delta) + NR}  \approx (N')^{\frac{ E_r(R) (1 - \delta) + R}{E_r(R) (1 - \delta) }}.
\end{align*}
The column weight of the generator matrix for $C'_{N', lin}$, written as $W_{max}(C'_{N', lin})$, is upper bounded by 
\begin{equation*}
    W_{max}(C'_{N', lin}) \leq  N \approx \frac{\log N'}{E_r(R) (1 - \delta)}  = O( \log N' ).
\end{equation*}

We note first that the scaling performance of $W_{max}(C'_{N', lin})$ as a logarithmic function in $N'$ means that for any BMS channel, one may construct a sequence of capacity-achieving LDGM codes that meets the sparsity benchmark. One caveat is that the hidden factor in the $O( \log N' )$ bound for GM column weight is inversely proportional to the error exponent, and may be arbitrarily large for $R$ sufficiently close to $C$.
Second, note that $E_r(R) \rightarrow 0 $ as $R \rightarrow C$. Hence if the code rate $R$ is chosen close to $C$, the decoding complexity of  $C'_{N', lin}$ scales polynomially in $N'$ with a large degree, whereas that of  $C'_{N', pol}$ scales as $ (N')^{1 + o(1) }$. 
The observations above motivate us to study the optimal GM sparsity for LDGM codes with efficient decoders. 

\end{remark}

\section{Sparse LGDM Codes with Low-complexity Decoding}\label{sec:LDGMwithDEC}

\subsection{Decoder-Respecting Splitting Algorithm} \label{sub:DRSalg}

In this section, we consider another splitting algorithm, referred to as decoder-respecting splitting (DRS)  algorithm. This algorithm enables a low-complexity SC decoder based on likelihood ratios that can be calculated with a recursive algorithm. 
The main idea of the algorithm is to construct a generator matrix that can be realized with an encoding pattern similar to conventional polar codes such that the column weights of the matrix associated with the diagram are bounded above by a given threshold $w_{u.b.}$. 

\begin{algorithm} 
\caption{DRS algorithm} 
\label{alg1} 
\textbf{Input:} $w_{u.b.}\in \N$,  $n\in \N$, $v\in \bit^{2^n \times 1}$    \\
\textbf{Output:} \textproc{DRS-Split}($w_{u.b.}, v$)

\begin{algorithmic}[1]
\Function{DRS-Split}{$w_{u.b.}, \mathbf{x}$}
    \State $k\gets \mbox{length}(\mathbf{x}) /2$
    \State $\mathbf{x}_h \gets (x_1,\ldots, x_k)^t$, $\mathbf{x}_t \gets (x_{k+1},\ldots, x_{2k})^t$

    \If{$w_H(\mathbf{x})> w_{u.b.}$}
        \State $Y_h \gets$ \textproc{DRS-Split}($w_{u.b.}, \mathbf{x}_h$)
        \State $Y_t \gets$ \textproc{DRS-Split}($w_{u.b.}, \mathbf{x}_t$)
        \If{$\mathbf{x}_h= \zero_{k\times 1} $} 
            \State \Return $\bigcup\limits_{y\in Y_t} \{(\zero_{1\times k}, y^t)^t \}$ 
        \ElsIf{$\mathbf{x}_t= \zero_{k\times 1} $} 
            \State \Return $\bigcup\limits_{y\in Y_h} \{ (y^t, \zero_{1\times k})^t \}$ 
        \Else 
            \State \Return $\bigcup\limits_{y\in Y_t} \{ (\zero_{1\times k}, y^t)^t\} \cup  \bigcup\limits_{y\in Y_h} \{(y^t, \zero_{1\times k})^t \}$ 
        \EndIf
    \ElsIf{$w_H(\mathbf{x})=0 $}
        \State \Return \{\}
    \Else
        \State \Return $\mathset{ \mathbf{x}}$
    \EndIf
\EndFunction
\end{algorithmic}
\end{algorithm}
The core of the algorithm is the \textproc{DRS-Split} function. 
When the weight of the input vector $\mathbf{x}$ is larger than the threshold, it splits the vector in half into vectors $\mathbf{x}_h$ and $\mathbf{x}_t$, and recursively finds two sets, $Y_h$ and $Y_t$, composed of vectors with the length halved compared to the length of $\mathbf{x}$. 
The vectors are then appended to the length of $\mathbf{x}$, which collectively form the output of the function.
We note that the weights of vectors in  $Y_h$ and $Y_t$ are respectively upper bounded by the weights of $\mathbf{x}_h$ and $\mathbf{x}_t$, both of which are bounded by $k= \abs{ \mathbf{x}_h}=\abs{ \mathbf{x}_t}$, and that the value of $k$ is halved each iteration.
Hence, the function is guaranteed to terminate as long as the threshold is a positive integer.

We use a simple example to illustrate the algorithm. Let $n=3 $, $v= [0,0,0,0, 1,1,1, 1]^t$ and $w_{u.b.}= 2$. Since the weight of $v$ exceeds the threshold, it is first split into 
$\mathbf{x}_h=[0,0,0,0]^t$ and $\mathbf{x}_t= [ 1,1,1, 1]^t$.
Since $\mathbf{x}_h$ is an all-zero vector, $Y_h$ is an empty set according to line $14$ to $15$. 
To compute $Y_t=$\textproc{DRS-Split}($2, [ 1,1,1, 1]^t$), the function splits the input into half again, thereby obtaining $\mathbf{x}_h'=[1,1]^t$ and $\mathbf{x}_t'= [ 1,1]^t$.
The corresponding $Y_h'$ and $Y_t'$ are then both $\mathset{[1,1]^t}$ and, hence, we have $Y_t= \mathset{ [0,0, 1,1]^t}\cup \mathset{ [1,1,0,0]^t}= \mathset{ [0,0, 1,1]^t, [1,1,0,0]^t}$.  
Since $\mathbf{x}_h= \zero_{4\times 1}$, the function proceeds to lines $7$ and $8$, and returns $\mathset{ [0,0,0,0,0,0, 1,1]^t, [0,0,0,0,1,1,0,0]^t}$.

We note that splitting a column using the DRS algorithm may have more output columns than using the splitting algorithm introduced in Section \ref{sec:ConstructionNewApproach}. For example, let a column vector $u = [1,0,1, 1 , 1,0 ,1,1]$ and the threshold $w_{u.b.} =2$ be given.  Applying the DRS algorithm on $u$ gives 4 new vectors, while the splitting algorithm in Section \ref{sec:ConstructionNewApproach} gives 3 new vectors.

In order to analyze the effect of the DRS algorithm on the matrix $G_2^{\otimes n}$, 
we show that the size of the algorithm output does not depend on the order of a sequence of Kronecker product operations.
Suppose that the Kronecker product operations with the vector $[1 ,\; 1]^t $ for $n_1$ times and  with the vector $[0 ,\; 1]^t $ for $n_2$ times are applied on a vector $v$, where $n =n_1+n_2$ and the order of the operations is specified by a sequence $(s_1, s_2, \ldots, s_{n} ) \in \mathset{-,+}^{n}$ with $\abs{ \mathset{ i:s_i = - } } = n_1$ and $\abs{ \mathset{ i:s_i = + } } = n_2$. 
Also, let $v^{(i)}$ denote the output of applying the first $i$ Kronecker product operations on $v$. It is defined by the following recursive relation: 
\begin{equation} \label{eq:vi_recursion}
    v^{(i)} = \begin{cases}
    v^{(i-1)} \otimes [1, 1]^t,\; \mbox{if } s_i= -, \\
    v^{(i-1)} \otimes [0, 1]^t,\; \mbox{if } s_i= + ,
    \end{cases}
\end{equation} for $i \geq 1 $ and the initial condition $v^{(0)}= v$. We use  $v^{(s_1, s_2, \ldots s_i)}$ instead of $v^{(i)}$ when the sequence is needed for clarity.

The following lemma shows that any two vectors of the form $v^{(s_1, s_2, \ldots s_{n})}$ will be split into the same number of columns under the DRS algorithm as long as  the sequences associated with them contain the same number of $-$ and $+$ signs.

\begin{lemma}\label{lem:rateLossIndepOfOrder}
    Let $n= n_1+n_2$ and  $(s_1, s_2, \ldots, s_{n} ) \in \mathset{-,+}^{n}$ be a sequence with $n_1$ minus signs and $n_2 $ plus signs.
    Let $v^{(n)}$ be the vector defined by a vector $v$ and the sequence $(s_1, s_2, \ldots, s_{n} )$  through equation \eqref{eq:vi_recursion}. 
    Then the size of the DRS algorithm output for $v^{(n)}$ depends only on the values $n_1$ and $n_2$.  
\end{lemma}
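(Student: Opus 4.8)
The plan is to recast the size of the DRS output as a count of nodes in a balanced binary tree and then show that this count is insensitive to the order of the signs. Since \textproc{DRS-Split} repeatedly halves its argument until the Hamming weight drops to at most $w_{u.b.}$, discarding all-zero halves, its output is in bijection with the set of \emph{terminal} nodes of the depth-$(m_0+n)$ balanced binary tree associated to $v^{(n)}$, where $m_0 = \log_2 \abs{v}$; here a node is \emph{terminal} if its sub-block has weight in $[1, w_{u.b.}]$ while every proper ancestor has weight exceeding $w_{u.b.}$. First I would verify this reinterpretation against the recursion in Algorithm~\ref{alg1}, checking in particular that the zero-weight branches (lines 7--10 and 14--15) contribute nothing and that a node is visited exactly when all of its ancestors are ``heavy''. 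The whole problem then becomes: count terminal nodes and show the count depends only on $n_1, n_2$.

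Next I would compute node weights explicitly. Writing a leaf index of $v^{(n)} = v \otimes c_{s_1} \otimes \cdots \otimes c_{s_n}$ as $(a, b_1, \ldots, b_n)$ with $a$ indexing $v$ and $b_i \in \bit$ indexing the $i$-th length-$2$ factor, and using $c_- = [1,1]^t$, $c_+ = [0,1]^t$, the entry equals $v_a \prod_{i : s_i = +} b_i$; hence it is nonzero iff $a \in \supp(v)$ and $b_i = 1$ for every plus-position $i$. From this factorization, the weight of a node in the \emph{top} $m_0$ levels (those splitting the bits of $a$) equals $2^{n_1}$ times the number of supported coordinates of $v$ lying under it, a quantity manifestly independent of the order of the signs and of $n_2$. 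Consequently the set of depth-$m_0$ nodes that are reached, together with all terminal nodes lying strictly above depth $m_0$, is determined by $n_1$ and $v$ alone.

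It then remains to analyze, below each reached depth-$m_0$ node, the subtree indexed by $(b_1, \ldots, b_n)$. For a \emph{supported} $a$ this subtree is, as a weighted tree, exactly the DRS tree of the single column $[1]^{(s_1, \ldots, s_n)}$ of $G_2^{\otimes n}$, with root weight $2^{n_1}$. Here I would run the one-source computation: along any path that stays on nonzero nodes the weight is unchanged at a plus-step ($c_+$ sends a node to one zero child and one child of equal weight) and is exactly halved at a minus-step ($c_-$ sends a node to two equal children), so the first depth at which the weight reaches $w_{u.b.}$ or below is attained precisely after $\ceil{n_1 - \log w_{u.b.}}$ minus-steps have occurred, at which point the number of live (hence terminal) nodes is $2^{\ceil{n_1 - \log w_{u.b.}}}$ (and the count is $1$ when the root already has weight $2^{n_1} \le w_{u.b.}$). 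This value depends only on $n_1$ and $w_{u.b.}$, not on where the minus-steps sit among the plus-steps. Summing this single-source count over the reached supported depth-$m_0$ nodes and adding the top-level terminal nodes yields a total depending only on $n_1$, $v$, and $w_{u.b.}$; with $v$ and $w_{u.b.}$ fixed this is a function of $n_1$, a fortiori of $(n_1, n_2)$, which is the claim.

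The step I expect to be the main obstacle is the passage to general $v$: making rigorous the decomposition of the tree into a $v$-dependent top part and the single-source subtrees, and checking that the ``interface'' at depth $m_0$ — which depth-$m_0$ nodes are reached, and whether a reached node is output directly or descended into — is governed entirely by $n_1$ and $v$. The single-source count itself (the $v = [1]$ case) is the clean core of the argument, and is where the order-independence really originates; the bookkeeping needed to glue the subtrees to the top part, together with the careful justification of the terminal-node reinterpretation of Algorithm~\ref{alg1}, is the part most prone to error.
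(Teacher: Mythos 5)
Your proof is correct, and it takes a genuinely different route from the paper's. The paper argues by adjacent transpositions on the sign sequence: Lemma~\ref{lem:LR=RL} shows that $(v\otimes[1,1]^t)\otimes[0,1]^t$ and $(v\otimes[0,1]^t)\otimes[1,1]^t$ yield equally many DRS output columns, Lemma~\ref{lem:DRSforTwoVectors} shows that equal output counts for two equal-weight vectors are preserved under further $\otimes[1,1]^t$ and $\otimes[0,1]^t$ operations, and the lemma then follows by repeatedly replacing adjacent $(+,-)$ by $(-,+)$ until the sequence is sorted into the canonical order $(-,\ldots,-,+,\ldots,+)$. You instead analyze the DRS recursion tree directly: outputs are identified with terminal nodes, every node weight is computed explicitly from the Kronecker factorization (weight $2^{n_1}$ times a support count of $v$ in the top $m_0$ levels; weight carried at plus-steps and halved at minus-steps below), and the per-subtree terminal count is the closed form $2^{\max(0,\ceil{n_1-\log w_{u.b.}})}$. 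Both arguments are complete; what yours buys is an explicit formula for $n_{DRS}(v^{(n)})$ as a function of $n_1$, $v$, and $w_{u.b.}$, from which order-independence is immediate, and which would also subsume the counts $n_{DRS}(u_i)=2^{i-n\lambda}$ that the paper re-derives separately (for the sorted sequence) in the proof of Proposition~\ref{prop:DRSrateloss}. What the paper's route buys is locality: no weight bookkeeping at all, just two short lemmas about a single swap, at the price of yielding no formula for the count. Finally, the gluing step you flag as the main risk is in fact unproblematic: the split-or-stop decisions of the algorithm depend only on node weights, and every node at depth at most $m_0$ has weight equal to $2^{n_1}$ times a quantity determined by $v$ alone, so the interface at depth $m_0$ (which nodes are reached, and whether a reached node is output directly or descended into) is governed by $n_1$, $v$, and $w_{u.b.}$ exactly as you claimed.
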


    Let a $K\times N$ matrix $M= [\mathbf{u}_1,\mathbf{u}_2,\ldots, \mathbf{u}_N] $ and a threshold $w_{u.b.}$ be given. 
    Suppose that the DRS algorithm is applied to each column in $M$ and the sum of the sizes of the outputs is $N(1+\gamma)$. Then 
    \textproc{DRS}$(M)$ is defined as the $K \times N(1+\gamma)$ matrix consisting of all the vectors in the outputs (with repetition). 
    
    Same as in Section \ref{subsection:mainL2}, we study the effect of the DRS algorithm in terms of the multiplicative rate loss, i.e., $1+\gamma$. In particular, the following proposition shows an appropriate choice of $w_{u.b.}$ guarantees the existence of a sparse polar-based GM with vanishing $\gamma$.
    
\begin{proposition}\label{prop:DRSrateloss}
    Let the columns of  $G_2^{\otimes n}$ be the inputs for the DRS algorithm and \textproc{DRS}$(G_2^{\otimes n})$ be the $N \times N(1+\gamma)$ matrix consisting of the outputs. The term $\gamma$ vanishes as $n$ goes to infinity for any $w_{u.b.} = 2^{n\lambda}$ with $\lambda > \lambda^* \triangleq h_b(\frac{2}{3}) -\frac{1}{3} \approx 0.585$.
\end{proposition}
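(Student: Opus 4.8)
The plan is to count how many columns the DRS algorithm produces from each column of $G_2^{\otimes n}$, sum this over all $N = 2^n$ columns, and show that the total exceeds $N$ by only an exponentially small fraction. The first step is to reduce the count to a function of column weight alone. Each column of $G_2^{\otimes n}$ equals $v^{(s_1,\ldots,s_n)}$ for a sign sequence with $n_1$ minus signs (the $[1,1]^t$ factors, weight $2$) and $n_2 = n - n_1$ plus signs (the $[0,1]^t$ factors, weight $1$); such a column has weight $2^{n_1}$, and there are exactly $\binom{n}{n_1}$ of them. By Lemma \ref{lem:rateLossIndepOfOrder} the number $M(n_1)$ of DRS outputs depends only on $n_1$, so I may fix the most convenient order of factors.

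To evaluate $M(n_1)$ I would observe that the first-half/second-half split in \textproc{DRS-Split} peels off the leftmost Kronecker factor: writing $v^{(n)} = c_1 \otimes (\text{rest})$, a factor $c_1 = [1,1]^t$ gives two identical nonzero halves, so the count doubles while the weight halves, whereas $c_1 = [0,1]^t$ gives a $\zero$ first half (lines 7--8 of Algorithm \ref{alg1}), so the count is unchanged and the weight is preserved. Invoking Lemma \ref{lem:rateLossIndepOfOrder}, I would order the factors so the weight-$1$ factors are peeled first (each multiplying the count by $1$) until the all-ones vector of length $2^{n_1}$ remains, and then peel weight-$2$ factors (each doubling the count) until the weight drops to $w_{u.b.} = 2^{n\lambda}$. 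This yields $M(n_1) = 1$ for $n_1 \le n\lambda$ and $M(n_1) = 2^{\ceil{n_1 - n\lambda}}$ for $n_1 > n\lambda$.

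With this count, the number of extra columns is $\sum_{n_1}\binom{n}{n_1}\bigl(M(n_1)-1\bigr)$, and since $M(n_1)=1$ below the threshold,
\begin{equation*}
\gamma \;=\; \frac{1}{2^n}\sum_{n_1 > n\lambda}\binom{n}{n_1}\bigl(2^{\ceil{n_1 - n\lambda}} - 1\bigr) \;\le\; \frac{2}{2^{n\lambda}}\cdot\frac{1}{2^n}\sum_{n_1=0}^{n}\binom{n}{n_1}2^{n_1} \;=\; 2\cdot 2^{\,n(\log 3 - 1 - \lambda)},
\end{equation*}
where the last step uses the binomial theorem $\sum_{n_1}\binom{n}{n_1}2^{n_1}=3^n$. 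Since $\log 3 - 1 = h_b(\tfrac23)-\tfrac13 = \lambda^*$, the exponent is negative exactly when $\lambda > \lambda^*$, so $\gamma \to 0$ exponentially. (Alternatively, on power-of-two weights the DRS count agrees with the splitting algorithm of Section \ref{subsection:mainL2} up to a factor $2$, so Proposition \ref{Thm:ratelossAndEpsi'} applies with $\epsilon' = \lambda - \tfrac12$, and $\epsilon' > \epsilon^*$ is precisely $\lambda > \lambda^*$.) The main obstacle is the first step: rigorously justifying the leftmost-factor reformulation of the split, correctly handling the $\zero$-half branches so that weight-$1$ factors contribute a factor $1$ rather than $2$, and appealing to Lemma \ref{lem:rateLossIndepOfOrder} to legitimize the ``weight-$1$ first'' reordering; once $M(n_1)=2^{\ceil{n_1-n\lambda}}$ is in hand, the threshold $\lambda^* = \log 3 - 1$ drops out of a single binomial bound.
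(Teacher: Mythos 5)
Your proposal is correct, and its skeleton is the same as the paper's: both arguments use the bijection between sign sequences and columns, invoke Lemma \ref{lem:rateLossIndepOfOrder} to reduce the count to weight classes indexed by the number $n_1$ of $[1,1]^t$ factors, and establish that a column of weight $2^{n_1}$ produces $2^{n_1-n\lambda}$ output columns (the paper assumes $n\lambda\in\N$ without loss of generality, where your ceiling handles the general case; your leftmost-factor peeling argument is a legitimate and slightly more explicit justification of the count that the paper simply asserts for the ordering $u_i=[1,1]^{\otimes i}\otimes[0,1]^{\otimes(n-i)}$). Where you genuinely diverge is the final estimate. The paper writes $\gamma=\sum_{i>n\lambda}\frac{1}{N}\binom{n}{i}\bigl(2^{i-n\lambda}-1\bigr)$, applies Stirling's approximation to each term to get $a_{n\alpha}\doteq 2^{nf(\alpha,\lambda)}$ with $f(\alpha,\lambda)=h_b(\alpha)+\alpha-\lambda-1$, and then maximizes $f$ by calculus, finding the peak at $\alpha=\tfrac23$ with value $\lambda^*-\lambda$. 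You instead absorb everything into the exact identity
\begin{equation*}
\sum_{n_1=0}^{n}\binom{n}{n_1}2^{n_1}=3^n,
\end{equation*}
which yields the closed-form, non-asymptotic bound $\gamma\le 2\cdot 2^{n(\log 3-1-\lambda)}$ with no approximation at all; this is cleaner and more elementary, and it produces exactly the same exponent because $\max_\alpha\,[h_b(\alpha)+\alpha]=\log 3$ is precisely what the binomial theorem encodes. What the paper's heavier machinery buys is the identification of the dominant weight class $\alpha=\tfrac23$ and a term-by-term lower bound of the same exponential order, which is what supports the two-sided dichotomy in the analogous Proposition \ref{Thm:ratelossAndEpsi'} (vanishing versus divergence of $\gamma$); your one-sided upper bound alone cannot show $\gamma\to\infty$ for $\lambda<\lambda^*$, but the proposition as stated does not claim that direction, so nothing is missing.
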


\begin{remark}\label{rem:wubThresholdCompare}
We may compare the column weight thresholds in Proposition \ref{prop:DRSrateloss} and Theorem \ref{Thm:ratelossAndEpsi'} in Section \ref{subsection:mainL2}.   
The column weight threshold $w_{u.b.}$ is given by ${N}^{  \frac{1}{2}+\epsilon' }$ in Section \ref{subsection:mainL2}. Proposition \ref{Thm:ratelossAndEpsi'} states that the term $\gamma$ is vanishing as long as $\epsilon' > \epsilon^*$, where $ \epsilon^* =  \hspace{1mm} \log3 -\frac32
\approx 0.085$.
In Proposition \ref{prop:DRSrateloss} the threshold $w_{u.b.}$ is given by $2^{n\lambda}= N^{\lambda}$, and the term $\gamma$ goes to $0$ as long as $\lambda \geq \lambda^* \triangleq h_b(\frac{2}{3}) -\frac{1}{3}= \log3 -1 \approx 0.585$. 
Note that the conditions for the column weight thresholds for vanishing $\gamma$ in both cases have the same exponent of $N$, i.e., $\frac{1}{2}+\epsilon^* = \lambda^* $.
Hence the DRS algorithm does not incur extra rate loss compared to the splitting algorithm introduced in Section \ref{sec:ConstructionNewApproach} asymptotically.
\end{remark} 

\subsection{Low-complexity Decoder for $G_2$-based LDGM: BEC}
\label{subsection:LCdec_BEC}

In this section, we show two results for the code corresponding to  \textproc{DRS}($G_2^{\otimes n}$) over the BEC. 
First, we propose a low-complexity suboptimal decoder for the code corresponding to \textproc{DRS}($G_2^{\otimes n}$).
Second, with the low-complexity suboptimal decoder, the code corresponding to  \textproc{DRS}($G_2^{\otimes n}$) is capacity-achieving for suitable column weight threshold. 

It is known that when the channel transformation with kernel $G_2$ is applied to two BECs, the two new bit-channels are also BECs. 
Specifically, for two binary erasure channels $W_1 $ and $W_2$ with erasure probabilities $\epsilon_1 $ and  $\epsilon_2$, respectively, the polarized bit-channels $W^-(W_1, W_2) $ and $W^+(W_1, W_2) $ are binary erasure channels with erasure probabilities $\epsilon_1 +\epsilon_2 - \epsilon_1  \epsilon_2  $ and $\epsilon_1  \epsilon_2  $, respectively.

The mutual information $I(\cdot)$ and Bhattacharyya parameter $Z(\cdot)$ of a BEC $W$ with erasure probability $\epsilon $ are given by: 
        $I(W)= 1-\epsilon, 
        Z(W)= \epsilon.$
For a sequence $(s_1, s_2, \ldots, s_n) \in \Ftwo^n$, the function  $\mathsf{Bi2De}(s_1, s_2, \ldots, s_n)$ returns the decimal value of the binary string in which a minus sign for $s_i$ is regarded as a $0$ and a plus sign as a $1$, for each $i$. For example, $\mathsf{Bi2De}(-, +, +)= (011)_2 =3$.
Let $G$ denote $G_2^{\otimes n}$ and $G'$ denote  \textproc{DRS}($G_2^{\otimes n}$), and let $Z_{G}^{(s_1s_2\ldots s_n)}$ denote the Bhattacharyya parameter of the bit-channel $W^{s_1s_2\ldots s_n}$, which is defined to be $W_N^{(\mathsf{Bi2De}(s_1, s_2, \ldots, s_n)+1)}$ as in \cite[page 3]{arikan2009channel}.
The term $Z_{G'}^{(s_1s_2\ldots s_n)}$ denotes the Bhattacharyya parameter of the bit-channel observed by the source bit of the same index corresponding to $G'$. 

The following lemma shows that the bit-channel observed by each source bit is better in terms of the Bhattacharyya parameter when $G'$ is the generator matrix instead of $G$.

\begin{lemma}\label{lem:DRS_Z_BEC}
Let $w_{u.b.}$ and $n$ be given, and let $G$ denote $G_2^{\otimes n}$ and $G'$ denote  \textproc{DRS}($G_2^{\otimes n}$).
The following is true for any $(s_1, s_2, \ldots, s_n)\in \mathset{-,+}^n$:
    \[
    Z_{G'}^{(s_1s_2\ldots s_n)} \leq Z_{G}^{(s_1s_2\ldots s_n)}.
    \]
\end{lemma}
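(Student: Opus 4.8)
The plan is to prove the inequality by induction on $n$, peeling off one level of the recursive structure that is shared by the polar transform and by the DRS algorithm. The base case $n=1$ is a direct computation: the only column of $G_2$ whose weight can exceed the threshold is $(1,1)^t$, and \textproc{DRS-Split} replaces it by $(1,0)^t$ and $(0,1)^t$; over a $\mathrm{BEC}(\epsilon)$ this frees the $s_1=-$ source bit from its XOR with the $s_1=+$ bit, so its erasure probability drops from $2\epsilon-\epsilon^2$ to $\epsilon$, while the $s_1=+$ bit still observes two copies and keeps erasure probability $\epsilon^2$. Hence $Z_{G'}^{(-)}\leq Z_{G}^{(-)}$ and $Z_{G'}^{(+)}=Z_{G}^{(+)}$.

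For the inductive step I would use the decomposition $G_2^{\otimes n}=G_2\otimes G_2^{\otimes(n-1)}=\left[\begin{smallmatrix}A & 0\\ A & A\end{smallmatrix}\right]$ with $A=G_2^{\otimes(n-1)}$, writing $u=(u^{(0)},u^{(1)})$ so that the transmitted blocks are $x^{(0)}=(u^{(0)}\oplus u^{(1)})A$ and $x^{(1)}=u^{(1)}A$. A column indexed by $(c_1,\dots,c_n)$ is $(\mathrm{rest},\mathrm{rest})$ when $c_1=-$ and $(0,\mathrm{rest})$ when $c_1=+$, where $\mathrm{rest}$ is the matching column of $A$; this is exactly the first-half/second-half split that \textproc{DRS-Split} performs in Algorithm \ref{alg1}, so the top level of the DRS recursion acts on the coordinate $s_1$ in lockstep with the polar recursion $Z_{G}^{(s_1,\dots,s_n)}=Z^{(s_2,\dots,s_n)}_{A\text{ over }W^{s_1}}$, where $W^-=\mathrm{BEC}(2\epsilon-\epsilon^2)$ and $W^+=\mathrm{BEC}(\epsilon^2)$ by the BEC polarization rule recalled above. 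The $c_1=+$ branch reduces directly to the level-$(n-1)$ statement over $W^+$. The heart of the matter is the $c_1=-$ branch: a combined column $(\mathrm{rest},\mathrm{rest})$ of weight exceeding $w_{u.b.}$ is split into $(\mathrm{rest},0)$ and $(0,\mathrm{rest})$, so the top contribution $u^{(0)}\!\cdot\!\mathrm{rest}$ and the bottom contribution $u^{(1)}\!\cdot\!\mathrm{rest}$ are sent on separate channel uses instead of only their modulo-$2$ sum. Since a receiver holding both symbols can reconstruct their XOR, a split position feeds the inner code a clean $\mathrm{BEC}(\epsilon)$ for the $s_1=-$ bits in place of the combined $W^-$, which can only upgrade the bit-channel.

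The main obstacle is that the splitting is \emph{not} uniform across the inner positions: the weight of a $c_1=-$ column equals $2\,w_H(\mathrm{rest})$, which varies with the number of zeros in $(c_2,\dots,c_n)$, so the inner positions with $w_H(\mathrm{rest})>w_{u.b.}/2$ are decoupled (and then themselves DRS-split with threshold $w_{u.b.}$) while the lighter ones stay XOR-combined over $W^-$. Thus, after peeling one level, the $s_1=-$ source bits do not see a clean inner $A$ over a single channel, but over a position-dependent mixture of $\mathrm{BEC}(\epsilon)$ and $\mathrm{BEC}(2\epsilon-\epsilon^2)$, with DRS applied only on the decoupled positions. To absorb this I would strengthen the induction hypothesis to allow $A$ to be used over an arbitrary per-position vector of erasure probabilities, and prove jointly that (i) each SC bit-channel Bhattacharyya parameter is coordinatewise nondecreasing in every input erasure probability, and (ii) on any such input, applying DRS to $A$ and decoding with the corresponding recursive decoder yields bit-channels no larger than those of the unsplit $A$. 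Statement (i) holds because the BEC combining maps $z\mapsto z^2$ and $z\mapsto 2z-z^2$ are nondecreasing, and (ii) is precisely the level-$(n-1)$ claim. Chaining them gives $Z_{G'}^{(-,s_2,\dots,s_n)}\leq$ (inner DRS over all-$W^-$, by monotonicity since every split position replaces $W^-$ by the better $\mathrm{BEC}(\epsilon)$) $\leq$ (inner non-DRS over all-$W^-$) $=Z_{G}^{(-,s_2,\dots,s_n)}$.

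A secondary point to verify is that the recursive decoder guaranteed by the DRS construction really realizes this peeled structure, i.e. that \textproc{DRS-Split} applied to the top and bottom halves of a split column reproduces the DRS output of the inner matrix $A$, so the induction hypothesis is applicable. This follows from the recursive form of Algorithm \ref{alg1} together with Lemma \ref{lem:rateLossIndepOfOrder}, which ensures that the splitting combinatorics depend only on the counts of $-$ and $+$ operations and are therefore consistent across the positions being grouped at each level.
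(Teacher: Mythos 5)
Your proposal is correct and takes essentially the same approach as the paper: induction on $n$ peeling off the outer $G_2$ factor, the same direct base-case computation, and an inductive step that chains coordinatewise monotonicity of the BEC bit-channel erasure parameters with the induction hypothesis applied over the synthesized channels $W^-$ and $W^+$. The paper formalizes your ``secondary point'' via the identity expressing $(G_2^{\otimes m+1})'$ as the DRS output of the block matrix built from $(G_2^{\otimes m})'$, which plays exactly the role of your consistency argument.
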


We are ready to show the existence of a sequence of capacity-achieving codes over the BEC with GMs where the column weights are bounded by a polynomial in the blocklength, and that the block error probability under a low  complexity decoder vanishes as the $n$ grows large.
\begin{proposition}
     \label{Thm:BEClowcomplDEC}
    Let $\beta < E(G_2)=0.5 $, $\lambda >\lambda^*=h_b(\frac{2}{3}) -\frac{1}{3}$ and a BEC $W$ with capacity $C$ be given. There exists a sequence of codes corresponding to  \textproc{DRS}($G_2^{\otimes n}$)
    with the following properties for all sufficiently large $n$:
    \begin{enumerate}
        \item 
        The error probability is upper bounded by $2^{-N^{\beta}}$, where $N= 2^n$.
        \item
        The Hamming weight of each column of the GM is upper bounded by $N^\lambda$.
        \item
        The rate approaches $C$ as $n$ grows large.
        \item 
        The codes can be decoded by a successive-cancellation decoding scheme with complexity $O(N\log N)$.
    \end{enumerate}
    \end{proposition}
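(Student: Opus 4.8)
The plan is to assemble the code directly from an ordinary $G_2$-polar code over $W$ and then invoke the three structural results already in hand: Proposition~\ref{prop:DRSrateloss} (vanishing rate loss), Lemma~\ref{lem:DRS_Z_BEC} (the DRS split never worsens a source bit-channel), and the kernel error exponent of \cite{korada2010polar}. First I would fix the information set $\cA$ to be the $K$ indices $(s_1,\ldots,s_n)$ whose bit-channels $W^{s_1\cdots s_n}$ have the smallest Bhattacharyya parameters, chosen exactly as for a standard polar code of rate $R<C$; for any $\beta<E(G_2)=1/2$ this selection can be made so that $\sum_{i\in\cA} Z_G^{(i)} \le 2^{-N^\beta}$ while $|\cA|/N \to C$. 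I would then run the DRS algorithm with threshold $w_{u.b.}=2^{n\lambda}=N^\lambda$ on the columns of $G_2^{\otimes n}$ to obtain $G'=\textproc{DRS}(G_2^{\otimes n})$, and retain the rows indexed by $\cA$ as the generator matrix of the new code.

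Property~2 is then immediate from the construction, since the DRS algorithm terminates only when every output column has weight at most $w_{u.b.}=N^\lambda$. Property~3 follows from Proposition~\ref{prop:DRSrateloss}: the block length expands from $N$ to $N(1+\gamma)$ while the dimension $K$ is unchanged, so the rate is $R/(1+\gamma)$, and because $\lambda>\lambda^*$ forces $\gamma\to 0$, this tends to $R$, which may be taken arbitrarily close to $C$. For Property~1 I would bound the SC block error probability by the sum of the source bit-channel Bhattacharyya parameters and apply Lemma~\ref{lem:DRS_Z_BEC} coordinatewise,
\[
P_e \;\le\; \sum_{i\in\cA} Z_{G'}^{(i)} \;\le\; \sum_{i\in\cA} Z_{G}^{(i)} \;\le\; 2^{-N^\beta}.
\]
The key point is that splitting a heavy column produces extra, pairwise disjointly-supported output coordinates that sum (in $\Ftwo$ and in $\R$) to the original column; on the BEC these behave as additional, cost-free observations and can therefore only improve each source bit-channel, which is precisely the content of Lemma~\ref{lem:DRS_Z_BEC}, so no channel in $\cA$ degrades relative to the unsplit polar code.

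The remaining and most delicate part is Property~4: exhibiting a genuine successive-cancellation decoder for $G'$ and certifying its complexity as $O(N\log N)$. Here I would argue that, because the DRS recursion splits each column by halving and recursing --- mirroring the two-branch recursion of the $G_2$ transform --- the factor graph of $G'$ is obtained from the standard polar factor graph by duplicating or zeroing sub-branches rather than by rewiring it. Consequently the erasure-probability updates for the minus/plus combinations, $\epsilon\mapsto 2\epsilon-\epsilon^2$ and $\epsilon\mapsto\epsilon^2$, still propagate level by level, and the decoder can compute each source bit's likelihood by the same recursive message passing used in ordinary polar SC decoding, now run on a graph with $N(1+\gamma)$ leaves and $n=\log N$ levels; since $\gamma\to 0$ keeps the leaf count $\Theta(N)$, the total work is $O(N\log N)$. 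I expect this to be the main obstacle, as one must verify carefully that the DRS modifications preserve the recursive separability that makes SC decoding efficient, and that the decoder so obtained is exactly the one whose error probability is governed by the $Z_{G'}^{(i)}$ appearing in the display above; the other three properties follow almost mechanically from the cited results.
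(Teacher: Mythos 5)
Your proposal is correct and follows essentially the same route as the paper's own proof: Property~1 via Lemma~\ref{lem:DRS_Z_BEC} and the union bound over the unchanged polar information set, Property~2 from the DRS threshold $w_{u.b.}=N^\lambda$, Property~3 from Proposition~\ref{prop:DRSrateloss}, and Property~4 by running SC decoding level by level on the split factor graph with $N(1+\gamma)=O(N)$ nodes per level and $\log N$ levels. The only difference is one of detail: where you argue structurally that the DRS splits preserve the recursive separability, the paper makes this concrete by writing out the erasure-propagation decoder case by case (split versus unsplit XOR, odd versus even index), which is the explicit object certifying both the $O(N\log N)$ complexity and that its error probability is governed by the $Z_{G'}^{(i)}$.
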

    \proof
Let the threshold for DRS algorithm be $w_{u.b.} = 2^{n\lambda}$,  $G$ denote $G_2^{\otimes n}$, and $G'$ denote \textproc{DRS}($G_2^{\otimes n}$) in this proof. 
We prove the four claims in order. 
First, Lemma \ref{lem:DRS_Z_BEC} shows that for a given $n$ and any  $t>0$, the following is true:
\be{eq:bhataPara_G'}
	\mathset{\mathbf{s}\in \mathset{-,+}^n :Z_{G}^{\mathbf{s}} \leq t  }
	\subseteq
	\mathset{
		\mathbf{s}\in \mathset{-,+}^n :Z_{G'}^{\mathbf{s}} \leq t  
	}.
\ee
Using \cite[Theorem 2]{arikan2009channel}, for any $\beta< \frac{1}{2}$, we have
\be{eq:rate_to_C}
\liminf_{n \to \infty} \frac{1}{N}
\abs{\mathset{\mathbf{s}\in \mathset{-,+}^n :Z_{G}^{\mathbf{s}} \leq 2^{-N^\beta}  }} = I(W) =C
\ee

Let $\mathbfsl{S_G} $ and $\mathbfsl{S_{G'}} $ denote the sets of the sequences $\mathbf{s} \in \mathset{-,+}^n$ that satisfy $Z_{G}^{\mathbf{s}} \leq 2^{-N^\beta}$ and $Z_{G'}^{\mathbf{s}} \leq 2^{-N^\beta}$, respectively.  Equation \eqref{eq:bhataPara_G'} guarantees that $\mathbfsl{S_G} $ is a subset of $\mathbfsl{S_{G'}}$.
Assume the code corresponding to $G$ freezes the input bits observing bit-channels $W^{s_1s_2\ldots s_n}$ for all $(s_1, s_2, \ldots, s_n) \notin \mathbfsl{S_G}$.
For the code corresponding to  $G'$, we use the bit-channels with the same index as the code corresponding to  $G$, for transmission of information bits, and leave the rest as frozen. 
The probability of block error for the code corresponding to  $G'$, $P_{e, G'}$, can be bounded above, as in \cite{arikan2009channel}, by the sum of the Bhattacharyya parameters of the bit-channels for the source bits (that are not frozen), that is,
$$
P_{e, G'} \leq 
\sum_{\mathbf{s}\in \mathbfsl{S_{G}}}Z_{G'}^{\mathbf{s}} 
\leq 
\sum_{\mathbf{s}\in \mathbfsl{S_{G}}} 2^{-N^\beta}
=
\abs{\mathbfsl{S_G} } 2^{-N^\beta},
$$ where the second inequality follows because, for  $\mathbf{s}\in  \mathbfsl{S_G} $, we must have  $\mathbf{s}\in  \mathbfsl{S_{G'}}$ and thus $Z_{G'}^{\mathbf{s}}\leq 2^{-N^\beta}$.
From \eqref{eq:rate_to_C}, for all sufficiently large $n$, we have 
\begin{equation}
     P_{e, G'} \leq NC 2^{-N^\beta}.
\end{equation}
By an argument similar to that in the proof of Lemma \ref{Lemma:newG_PeBound},  for any $\beta'< \frac{1}{2}$, 
     $P_{e, G'} \leq 2^{-N^{\beta'}} $ for all sufficiently large $n$.

The second claim follows from the fact that the GM for the code corresponding to  $G'$ is a submatrix of $G'$, and the Hamming weight of each column of $G'$ is upper bounded by $w_{u.b.} = 2^{n\lambda}= N^\lambda$. 

The third claim is a {consequence} of Proposition \ref{prop:DRSrateloss} and Lemma \ref{lem:DRS_Z_BEC}. 
The number of information bits of the code corresponding to  $G'$ is given by 
$\abs{\mathbfsl{S_G} }$, and the length of the code is $N(1+\gamma)$. 
Hence the rate is 
\begin{equation}
    \frac{\abs{ \mathbfsl{S_G} }}{N(1+\gamma)}
 \end{equation}
Since the term $\gamma$ vanishes as $n$ grows large, we have 
\begin{equation}
\liminf_{n \to \infty} \frac{\abs{\mathbfsl{S_G}} }{N(1+\gamma)}
=
\liminf_{n \to \infty} \frac{\abs{\mathbfsl{S_G}} }{N}
= I(W) =C.    
\end{equation}

Finally, we prove the claim for the existence of a low-complexity decoder. 
Let $U_1, \ldots, U_N$ be the inputs, and $Y_1, \ldots, Y_{N_1}, Y_{N_1 +1}, \ldots,  Y_{N_1+ N_2} $ the outputs, where $N_1+N_2 = N(1+\gamma)$, as shown in Figure\,\ref{fig:G2m'_BEC_EncoderWm}. While the polar code based on $G_2^{\otimes n}$, as shown in Figure\,\ref{fig:G2m_BEC_EncoderWm}, is recursive in the encoder structure (with bit reversal permutation), the code based on $(G_2^{\otimes n})'$ is not, as the blocks $W_n^u$ and  $W_n^l$ are not necessarily equal. In fact, when there is a split at the last iteration of polarization, i.e., when one or more of the XOR operations shown in Figure\,\ref{fig:G2m'_BEC_EncoderWm} is replaced by a solid black circle, the number of inputs of the block $W_n^l$ will be larger than that of $W_n^u$. 

\begin{figure}[t]
     \centering
     \begin{subfigure}[b]{0.44\textwidth}
         \centering
         \includegraphics[trim=1.5cm 1.5cm 12.5cm 3.5cm, clip, width=\textwidth]{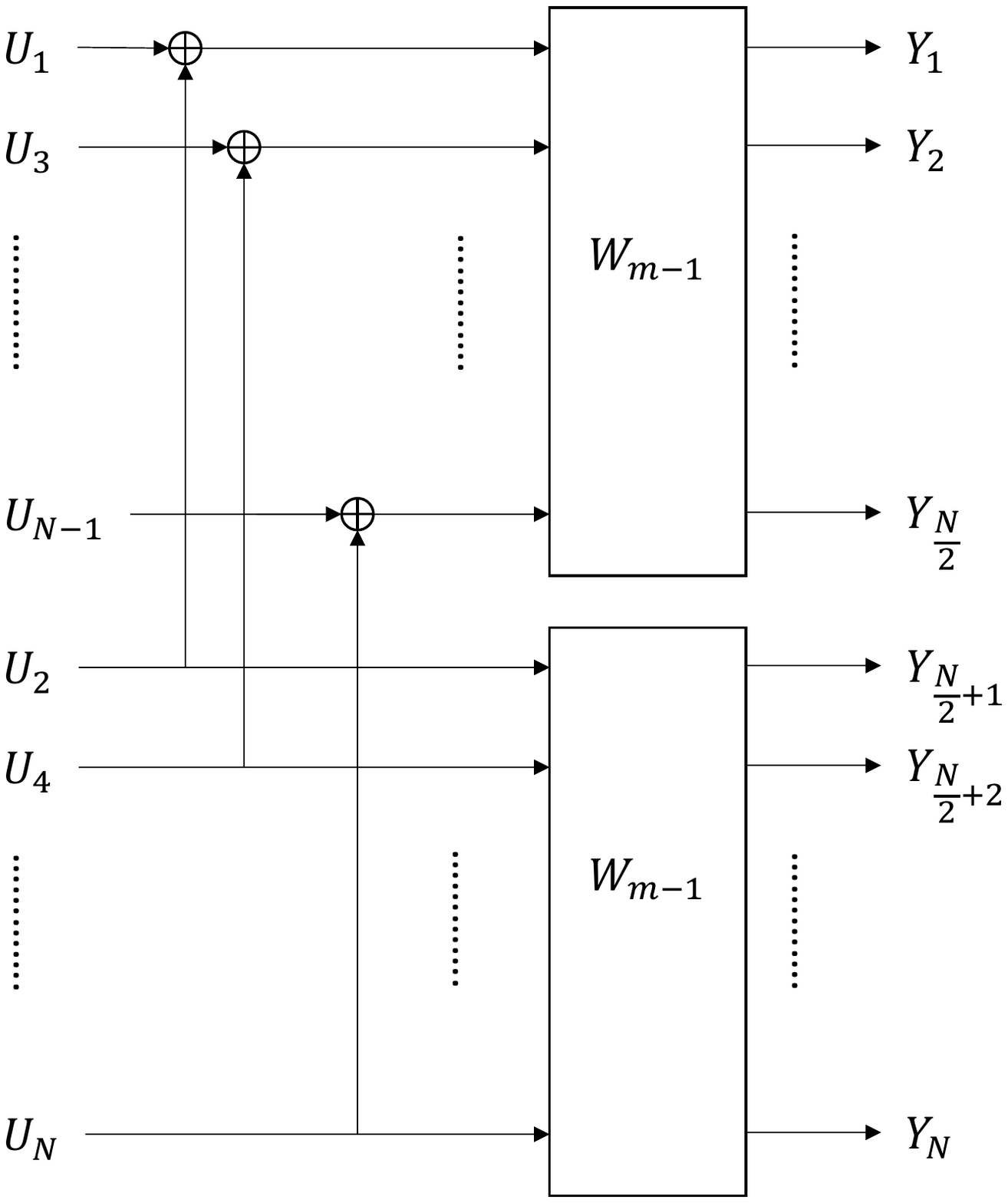}
         \caption{Encoding block for generator matrix $G_2^{\otimes m}$}
        \label{fig:G2m_BEC_EncoderWm}
     \end{subfigure}
     \hfill
     \begin{subfigure}[b]{0.45\textwidth}
         \centering
         \includegraphics[trim=2cm 1cm 10.5cm 2.5cm,  clip, width=\textwidth]{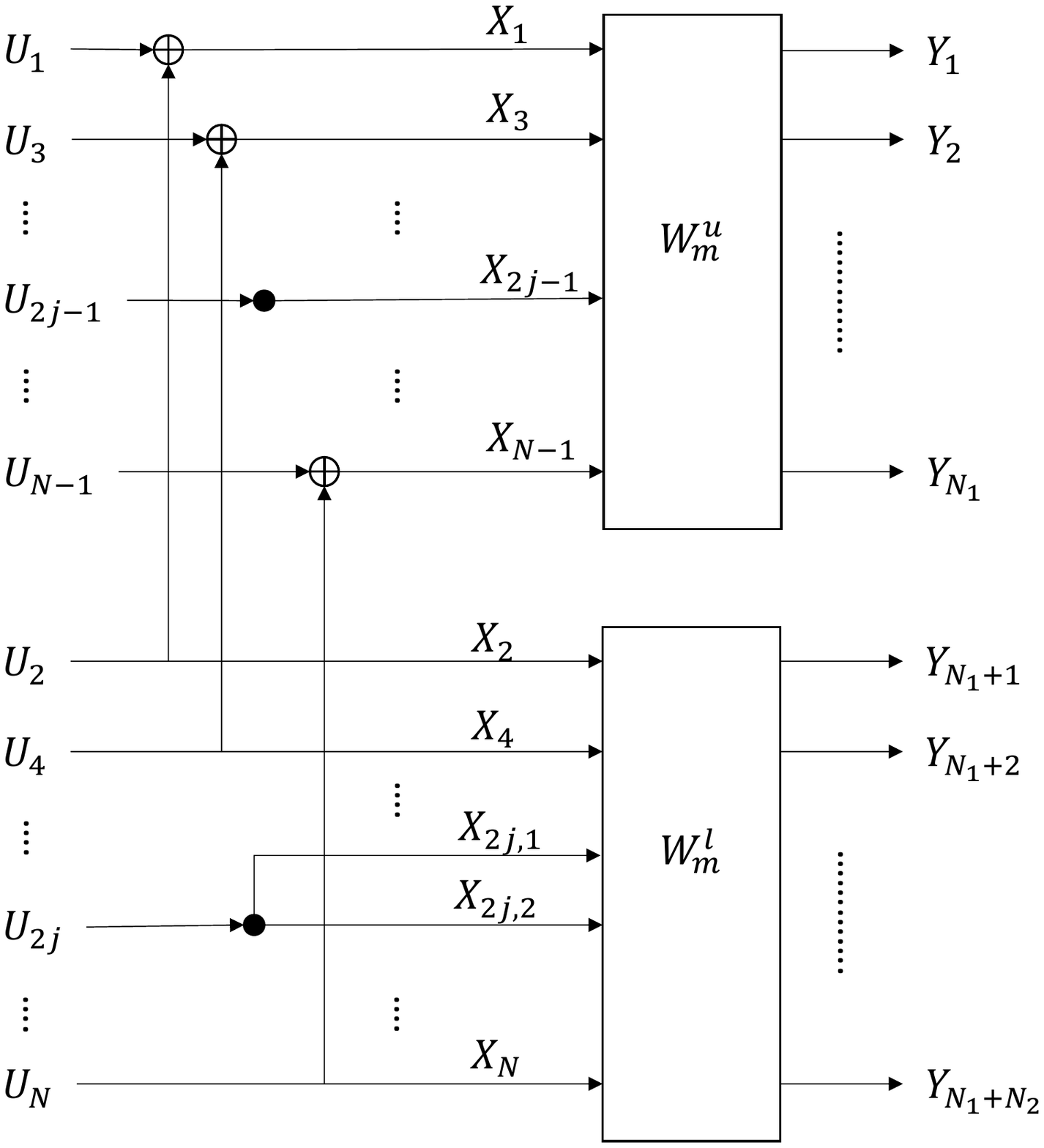}
        \caption{Encoding block for generator matrix $(G_2^{\otimes m})'$}
    \label{fig:G2m'_BEC_EncoderWm}
     \end{subfigure}
\caption{}
        \label{fig:Wm_BEC}
\vspace{-7mm}
\end{figure}

Let $\cF \subseteq \mathset{1,\ldots, N}$ be the set of the indices of the frozen bits. 
The decoder declares estimates $\hat{U}_i$ of the inputs, for $1\leq i \leq N$, sequentially by:
\begin{equation}\label{eq:UiestimateBEC}
    \hat{U_i} = \begin{cases}
u_i, &\mbox{ if } i\in \cF, \\
\psi_i(Y_1^{ N_1+N_2}, \hat{U}_1^{i-1} ,W_n) &\mbox{ if } i \notin \cF,
\end{cases}
\end{equation}
where     $\psi_i(Y_1^{ N_1+N_2}, \hat{U}_1^{i-1} ,W_n)$ can be found in following four cases, and $W_n$ denotes the encoding block shown in Figure \ref{fig:G2m'_BEC_EncoderWm}. 
Let the symbol $e$ denotes an erasure, and assume $e \oplus b = e$ for $b \in \mathset{0,1,e}$.

\begin{itemize}
    \item If $i$ is odd and $X_i = U_i \oplus U_{i+1}$, which corresponds to an unsplit XOR operation observed by $U_i$, \begin{equation}\
        \psi_i(Y_1^{ N_1+N_2}, \hat{U}_1^{i-1} ,W_n) \triangleq
    \begin{cases}
        \hat{X_i}\oplus \hat{X}_{i+1} &\mbox{ if } \hat{X_i}\neq e, \hat{X}_{i+1}\neq e, \\
        e, &\mbox{ otherwise.} 
    \end{cases}
    \end{equation}   

    \item If $i$ is odd and $X_i = U_i$, which corresponds to a split XOR operation, 
    $\psi_i(Y_1^{ N_1+N_2}, \hat{U}_1^{i-1} , W_n) \triangleq \hat{X}_i$.
    
    \item If $i$ is even and $X_{i-1} = U_i \oplus U_{i-1}$, which corresponds to an unsplit XOR operation,
    \begin{equation}
        \psi_i(Y_1^{ N_1+N_2}, \hat{U}_1^{i-1} , W_n ) \triangleq
    \begin{cases}
        \hat{X_i}, &\mbox{ if } \hat{X_i}\neq e, \hat{X}_{i-1} = e, \\&\mbox{ or if } \hat{X_i}\neq e, \hat{X}_{i-1} \neq e, \hat{X_i} = \hat{X}_{i-1} \oplus \hat{U}_{i-1},\\
         \hat{X}_{i-1} \oplus \hat{U}_{i-1}, &\mbox{ if } \hat{X_i}= e, \hat{X}_{i-1} \neq e, \hat{U}_{i-1} \neq e\\
        e, &\mbox{ otherwise.} 
    \end{cases}
    \end{equation}  
    
    \item If $i$ is even and $X_{i,1}=X_{i,2} = U_i$, which corresponds to a split XOR operation,
    \begin{equation}
        \psi_i(Y_1^{ N_1+N_2}, \hat{U}_1^{i-1} , W_n ) \triangleq
    \begin{cases}
        \hat{X}_{i,1}, &\mbox{ if } \hat{X}_{i,1}\neq e, \hat{X}_{i,2} = e, \mbox{ or if } \hat{X}_{i,1} = \hat{X}_{i,2} \neq e,\\
        \hat{X}_{i,2}, &\mbox{ if } \hat{X}_{i,1}= e, \hat{X}_{i,2} \neq e,\\
        e, &\mbox{ otherwise.} 
    \end{cases}
    \end{equation} 
\end{itemize}
The estimates $\hat{X}_1,\hat{X}_3,\ldots, \hat{X}_{N-1}$ and $\hat{X}_{2}, \hat{X}_{4}, \ldots, \hat{X}_{2j,1} , \hat{X}_{2j,2}, \ldots, \hat{X}_{N}$ are then separately found in a similar approach using the blocks $W_n^u$ and $ W_n^l$ along with the outputs $Y_1, \ldots, Y_{N_1}$ and  $Y_{N_1 +1}, \ldots,  Y_{N_1+ N_2} $, respectively. That is, we can express $\hat{X}_i$ in terms of the estimates of the input variables of the four encoding blocks with $(m-2)$ iterations of XOR operations. 

For the right-most variables, the blocks they observe are identical copies of
the BEC $W$.  Hence the estimates of the variables, denoted as $\hat{X}_1^{(n)}, \hat{X}_2^{(n)}, \ldots, \hat{X}_{N_1+ N_2}^{(n)}$ are naturally defined by the outputs of the channels, i.e., $\hat{X}_i^{(n)} = Y_i$ for $i = 1, 2, \ldots , N_1+N_2.$

At each stage there are at most $ N_1+N_2= N(1+\gamma) = O(N)$ estimates to make, and the recursion ends in $\log(N)$ steps.
Since each estimate is obtained with constant complexity, the total decoding complexity for the code based on \textproc{DRS}($G_2^{\otimes n}$) is bounded by $O(N\log(N))$.

\endproof

We leverage the construction given in Section \ref{sec:ConstructionNewApproach} and Proposition \ref{Thm:BEClowcomplDEC} to show, in  Theorem \ref{coro:DRS_lowComp_sparse_BEC} below, the existence of a sequence of capacity-achieving codes over BECs with sparse GMs and low-complexity decoders. In particular, the upper bounds of the column weights of the GMs are equal to those of Theorem \ref{Coro:LDGMwithUniformBound}, which scales polynomially in the logarithm of the blocklength with a degree slightly larger than 1.

\begin{theorem}\label{coro:DRS_lowComp_sparse_BEC}
    Let $\beta < E(G_2)=0.5 $, $\lambda >\lambda^*$ and a BEC $W$ with capacity $C$ be given.
    Let $n'$ be chosen as $n' =\exp_2{(N^{0.5(1-\delta)})}$ with $\delta \in (1 - \beta/0.5, 1)$, where $G$ denotes  $G_2^{\otimes n}\otimes I_{n'}$ and $N' = n' N=n'2^n$.
    Then, for sufficiently large $n$, there exists a sequence of codes corresponding to \textproc{DRS}($G$) satisfying the following properties:
    \begin{enumerate}
        \item 
        The error probability is upper bounded by $\exp_2{( -(\log N')^{2\beta} )}$.
        \item
        The Hamming weight of each column of the GM is upper bounded by $(\log N')^{2\lambda}$.
        \item
        The rate approaches $C$ as $n$ grows large.
        \item 
        There is a SC-based decoder with time complexity $O(N' \log\log N')$.
    \end{enumerate}    
\end{theorem}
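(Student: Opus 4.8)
The plan is to view the code corresponding to \textproc{DRS}($G$) with $G = G_2^{\otimes n}\otimes I_{n'}$ as the block-diagonal concatenation of $n'$ independent copies of the single-block code of Proposition \ref{Thm:BEClowcomplDEC}, and to transfer each of its four guarantees through the $\otimes I_{n'}$ expansion, re-expressing every bound in terms of $N'$ rather than $N$ via \eqref{eq:logN'formula}. Because $G$ is block diagonal with $n'$ copies of $G_2^{\otimes n}$, applying the DRS algorithm column by column yields $n'$ copies of \textproc{DRS}($G_2^{\otimes n}$); hence the per-column weights, the rate-loss factor $1+\gamma$, and the decoder all reduce to their single-block counterparts. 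The first step is to fix the DRS threshold at $w_{u.b.} = (\log N')^{2\lambda}$, which by \eqref{eq:logN'formula} equals $2^{n\lambda(1-\delta)(1+o(1))} = N^{\lambda(1-\delta)(1+o(1))}$. Claim (2) is then immediate, since every column of \textproc{DRS}($G$) has weight at most $w_{u.b.}$ by construction.

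For claim (1) I would bound the block error probability of the expanded code by a union bound over the $n'$ diagonal blocks: each block is decoded by the SC-based decoder of Proposition \ref{Thm:BEClowcomplDEC} with error at most $2^{-N^\beta}$, so $P_e \leq n' 2^{-N^\beta} = 2^{N^{0.5(1-\delta)}-N^\beta}$. Since $\delta > 1-\beta/0.5$ forces $0.5(1-\delta) < \beta$, the exponent is dominated by $-N^\beta$, and because $\beta(1-\delta) < \beta$ one obtains $P_e \leq 2^{-N^{\beta(1-\delta)}} = \exp_2(-(\log N')^{2\beta})$ for large $n$. This is precisely the estimate already carried out in Lemma \ref{Lemma:newG_PeBound} and Proposition \ref{prop:codeWithG'}, which I would invoke rather than redo.

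Claim (3) follows from Proposition \ref{Thm:BEClowcomplDEC} together with the vanishing of $\gamma$: the expanded code has $n'|\mathbfsl{S_G}|$ information bits and length $N'(1+\gamma)$, so its rate is $|\mathbfsl{S_G}|/(N(1+\gamma))$, identical to the single-block rate. To show $\gamma \to 0$ I appeal to Proposition \ref{prop:DRSrateloss}: the chosen threshold has inner-code exponent $\lambda(1-\delta)$, and once this exceeds $\lambda^* = \log 3 - 1$ the proposition gives $\gamma \to 0$ and hence rate $\to C$. For claim (4), the $n'$ blocks do not interact (block-diagonal structure), so they are decoded independently, each at cost $O(N\log N)$ by Proposition \ref{Thm:BEClowcomplDEC}, for a total of $n'\cdot O(N\log N) = O(N'\log N)$. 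The payoff of the construction is that $\log N = \tfrac{2}{1-\delta}\log\log N'\,(1+o(1)) = O(\log\log N')$, whence the overall complexity is $O(N'\log\log N')$.

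The main obstacle I anticipate is the threshold bookkeeping underlying claims (2) and (3): the single threshold $w_{u.b.}$ must simultaneously deliver the sparsity target $(\log N')^{2\lambda}$ \emph{and} induce an inner-code exponent $\lambda(1-\delta)$ strictly above $\lambda^*$ so that Proposition \ref{prop:DRSrateloss} yields vanishing rate loss. This couples the admissible choice of $\delta$ to $\beta$ and $\lambda$ (one needs $\delta$ small enough that $\lambda(1-\delta) > \lambda^*$, in addition to $\delta > 1-\beta/0.5$ required by claim (1)), and verifying that these two demands are compatible is the only delicate point. Everything else is a routine transfer of the single-block estimates through the union bound and the change of variables from $N$ to $N'$ supplied by \eqref{eq:logN'formula}.
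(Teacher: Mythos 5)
Your overall architecture is the same as the paper's: use $\textproc{DRS}(G_2^{\otimes n}\otimes I_{n'}) = \textproc{DRS}(G_2^{\otimes n})\otimes I_{n'}$, transfer the single-block guarantees of Proposition \ref{Thm:BEClowcomplDEC} through a union bound (the Lemma \ref{Lemma:newG_PeBound} argument) for claim (1), invoke Proposition \ref{prop:DRSrateloss} for claim (3), and decode the $n'$ blocks in parallel for claim (4), with $\log N = O(\log\log N')$ closing the complexity count. The one place you deviate is the choice of DRS threshold, and that deviation opens a genuine gap. You set $w_{u.b.} = (\log N')^{2\lambda} = N^{\lambda(1-\delta)(1+o(1))}$, which makes claim (2) trivial but ties the inner splitting exponent to $\lambda(1-\delta)$. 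Proposition \ref{prop:DRSrateloss} then gives vanishing rate loss only when $\lambda(1-\delta) > \lambda^*$, i.e. $\delta < 1 - \lambda^*/\lambda$. You flag this compatibility with $\delta > 1-\beta/0.5$ as "the only delicate point," but it is not a point you can verify in general: the theorem hands you $\delta$ as a hypothesis anywhere in $(1-2\beta,\,1)$, and whenever $2\beta\lambda \leq \lambda^*$ (e.g. $\beta = 0.1$, $\lambda = 0.6$) \emph{every} admissible $\delta$ satisfies $\delta > 1-\lambda^*/\lambda$. In that regime your threshold has exponent strictly below $\lambda^*$, the proof of Proposition \ref{prop:DRSrateloss} shows the dominant term $a_{\lceil 2n/3\rceil} \doteq 2^{n(\lambda^* - \lambda(1-\delta))} \to \infty$, hence $\gamma \to \infty$ and the rate, which is at most $1/(1+\gamma)$, tends to $0$. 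So with your threshold claim (3) genuinely fails there; this is not a bookkeeping issue that more care can repair.

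The paper avoids this by fixing the threshold at $w_{u.b.} = 2^{n\lambda} = N^{\lambda}$, \emph{independent of} $\delta$: since $\lambda > \lambda^*$ by hypothesis, Proposition \ref{prop:DRSrateloss} applies for every admissible $\delta$, and claims (1), (3), (4) go through uniformly; claim (2) is then read off by relating $N^{\lambda}$ to powers of $\log N'$ via \eqref{eq:logN'formula} (the proof uses $N^{(1-\delta)/2} \leq \log N' \leq N^{1/2}$). You are right that there is a real tension here — with the paper's threshold the literal bound one extracts is column weight at most $(\log N')^{2\lambda/(1-\delta)}$, so the identification with $(\log N')^{2\lambda}$ is loose by the $1/(1-\delta)$ factor, which is the mirror image of your problem — but the paper resolves the tension in favor of the rate claim, which is the side that cannot be repaired after the fact, whereas your resolution sacrifices it. To fix your write-up, either adopt the paper's $\delta$-independent threshold $N^{\lambda}$ and prove claim (2) in that ($\delta$-uniform) sense, or add an explicit hypothesis of the form $2\beta\lambda > \lambda^*$ together with the choice $\delta < 1-\lambda^*/\lambda$, mirroring the constraint $\frac{\beta}{0.5} > \frac{1+2\epsilon^*}{1+\epsilon}$ that the paper does impose in the analogous Theorem \ref{Coro:LDGMwithUniformBound}.
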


\proof Let the column weight threshold for the DRS algorithm be set as $w_{u.b.}= 2^{n\lambda}= N^\lambda$ and $n'$ be set as in \eqref{eq:n'def}. 
Note that $G' \triangleq \textproc{DRS} (G)$ can be written as  $G' = \textproc{DRS}(G_2^{\otimes n})\otimes I_{n'}$. Then the first claim holds for sufficiently large $n$ by noting the asymptotic error probability bound shown in Proposition \ref{Thm:BEClowcomplDEC} together with an argument similar to that of Lemma \ref{Lemma:newG_PeBound}. 

The second claim follows by combining the column weight threshold $w_{u.b.}= N^\lambda$ and equation \eqref{eq:logN'formula}, which states:  $N^{\frac{1}{2}}  \geq \log(N') \geq  N^{\frac{1-\delta}{2}},$ for any $\delta>0$, for sufficiently large $n$. 
The third claim holds by noting that the rates of the code corresponding to $G'$ and the code corresponding to \textproc{DRS}($G_2^{\otimes n}$) are equal, and the latter approaches the channel capacity by Proposition \ref{Thm:BEClowcomplDEC}. 

Finally, note that each codeword in the code can be regarded as a collection of $n'$ separate codewords in the code corresponding to \textproc{DRS}($G_2^{\otimes n}$). 
Hence, they can be decoded separately in parallel using $n'$ SC decoders. 
 Each copy of the code corresponding to \textproc{DRS}($G_2^{\otimes n}$) can be decoded with complexity $O(N\log N)$, as established by Proposition \ref{Thm:BEClowcomplDEC}. 
 Hence, it follows that the total complexity is $O(n' N\log N)= O(N' \log\log N').$

\endproof

\begin{remark}\label{rem:BECtwoThmCompare}
As pointed out in Remark \ref{rem:wubThresholdCompare}, we have $\lambda^* = \frac12 +\epsilon^*$. Hence, the constraints in the exponents of the column weight upper bounds for the sparsity benchmark (see Section \ref{sec:PeForNewApproach}) in Theorem \ref{coro:DRS_lowComp_sparse_BEC} and Theorem \ref{Coro:LDGMwithUniformBound}, given by $2\lambda$ with $\lambda > \lambda^*$ and $1+\epsilon$ with $\epsilon > 2\epsilon^*$, respectively, are identical. 
\end{remark}

\begin{remark} \label{rem:DRSnotForBMS} 
We note that for general BMS channels, Lemma \ref{lem:DRS_Z_BEC} may fail. One key part in the proof (see Appendix \ref{appendix:proof_LCdec_BEC}) 
is the fact that the 
Bhattacharyya parameter for the bit-channel observed by $U_i$ is a non-decreasing function of those of $W(X_1), \ldots, W(X_{f{(m)}})$ for  $i \leq 2^m$, and of $W(X_{f{(m)}+1}), \ldots,$  $W(X_{2f{(m)}})$ for  $i >2^m$, when all the channels are BECs. 
We now provide an example where we see the argument for Lemma \ref{lem:DRS_Z_BEC} fail for BMS channels. Let two BMS channels $W_1, W_2: \mathset{0, 1} \rightarrow \mathcal{Y} = \mathset{\eta, \theta, \kappa, \lambda}$ be given, and that  $W_1( y | 0) = W_1(\phi(y) |1)$ and $W_2(y | 0) = W_2( \phi(y) |1)$ for all $y \in \mathcal{Y}$ where the bijection $\phi$ is the mapping $\eta \mapsto \lambda, \lambda \mapsto \eta, \theta \mapsto \kappa, \kappa \mapsto \theta$. Assume $W_1(\eta | 0)=    6/9, W_1(\theta | 0)=   1/9 $, $W_1(\kappa | 0)=     1/9, W_1(\lambda | 0)=     1/9$ and $ W_2(\eta | 0)=    5/11$, $W_2(\theta | 0) = 4/11$, $W_2(\kappa | 0)=     1/11$, $W_2(\lambda | 0)=1/11$. The Bhattacharyya parameters for $W_1, W_2$ are    respectively $0.7666 $     and $0.7702$. If $m=1$ and $B_m$ is simply the kernel $G_2$, the symbols $X_1, X_2$ are functions of $U_1, U_2$ given by 
$X_1 = U_1 +U_2$ and $X_2 = U_2$. 

We now consider two possible cases for the pair $(W(X_1), W(X_2)) $. 
If $(W(X_1), W(X_2)) = (W_1, W_2 )$, the Bhattacharyya parameters for the bit-channels observed by $U_1, U_2$ are respectively  $    0.9147$ and   $0.5904$. 
If $(W(X_1), W(X_2)) = (W_2, W_2 )$, the Bhattacharyya parameters for the bit-channels observed by $U_1, U_2$ are respectively  $    0.9137$ and   $0.5932$. 
We note that while the Bhattacharyya parameters for $W(X_1), W(X_2)$ in the second case are no less than in the first case, the Bhattacharyya parameter $Z(U_1)$ in the second case is smaller than in the first case. 

With the above observation, one can not claim the validity of Proposition \ref{Thm:BEClowcomplDEC} and Theorem \ref{coro:DRS_lowComp_sparse_BEC} for general BMS channels.  
This motivates a new code construction for general BMS channels.
\end{remark}

\subsection{Low-complexity Decoder for $G_2$-based LDGM: BMS}
\label{subsection:LCdec_BMS}
This section introduces a capacity-achieving LDGM coding scheme with low-complexity decoder for general BMS channels. 
For general BMS channels, the Bhattacharyya parameter of the bit-channel $W^-$ cannot be expressed only in terms of parameters of the channel $W$. This implies that Lemma \ref{lem:DRS_Z_BEC}, Proposition \ref{Thm:BEClowcomplDEC}, and Theorem \ref{coro:DRS_lowComp_sparse_BEC} are not applicable for channels other than BEC, as pointed out in Remark \ref{rem:DRSnotForBMS}. 
A procedure that augments the generator matrix corresponding to  $G'$, the output of the DRS algorithm for the matrix $G_2^{\otimes n}$, may be used to construct a capacity-achieving linear code over any BMS channel $W$.

The encoding scheme, termed augmented-DRS (A-DRS) scheme, avoids heavy columns in the GM and, at the same time, guarantees that the bit-channels observed by the source bits $U_i$ have the same statistical characteristics as when they are encoded with the generator matrix $G_2^{\otimes n}$.
Specifically, the A-DRS scheme modifies the encoder for $G_2^{\otimes n}$ starting from the split XOR operations associated with the first polarization recursion, then the second recursion, and proceed all the way to the $n$-th recursion, where a XOR operation is split if and only if it is split in an encoder with generator matrix \textproc{DRS}($G_2^{\otimes n}$).

Assume an XOR operation with operands $U_{i_1}^{(n-j)}$ and $U_{i_2}^{(n-j)}$ and the output $U_{i_1}^{(n-j+1)}$, where $i_1 = \mathsf{Bi2De}(s_1, s_2, \ldots,s_{j-1}, s_j =-, s_{j+1}, \ldots, s_{n}  )+1$ and $i_2 = \mathsf{Bi2De}(s_1, s_2, \ldots, s_{j-1}, s_j =+, s_{j+1}, \ldots, s_{n} )+1= i_1 +2^{n-j}$, is to be split (see Section \ref{subsection:LCdec_BEC} for the function $\mathsf{Bi2De}(\cdot)$). 
If $j =1$, before modification, the variables $U_{i_1}^{(n)}$ and $U_{i_2}^{(n)}$ are transmitted through two copies of $W$, and the bit-channels observed by $U_{i_1}^{(n-1)}$ and $U_{i_2}^{(n-1)}$ are $W^{-}$ and $W^{+}$, respectively, as shown in Figure\,\ref{fig:A-DRS-generalUnsplit_j1}. 
If the XOR operation is split according to \textproc{DRS}($G_2^{\otimes n}$), it is replaced by the structure given in Figure\,\ref{fig:A-DRS-generalSplit_j1}, where $n_{i_1, 1}$ is a Bernoulli($0.5$) random variable independent of all the other variables. 
\begin{figure}
      \centering
      \begin{subfigure}[b]{0.43\textwidth}
          \centering
		\includegraphics[trim=2cm 2.5cm 15cm 15cm, clip,  width= \textwidth]{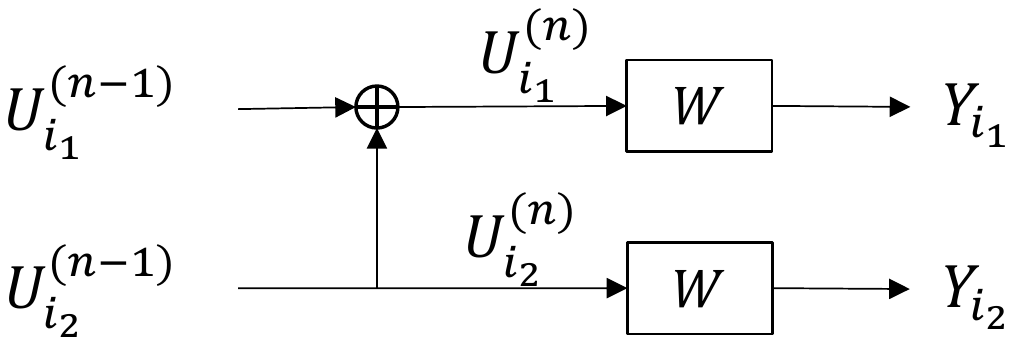}
          \caption{Before Modification}
          \label{fig:A-DRS-generalUnsplit_j1}
      \end{subfigure}
       \hfill
      \begin{subfigure}[b]{0.45\textwidth}
          \centering
		\includegraphics[trim=2cm 1cm 14cm 14cm, clip,  width= \textwidth]{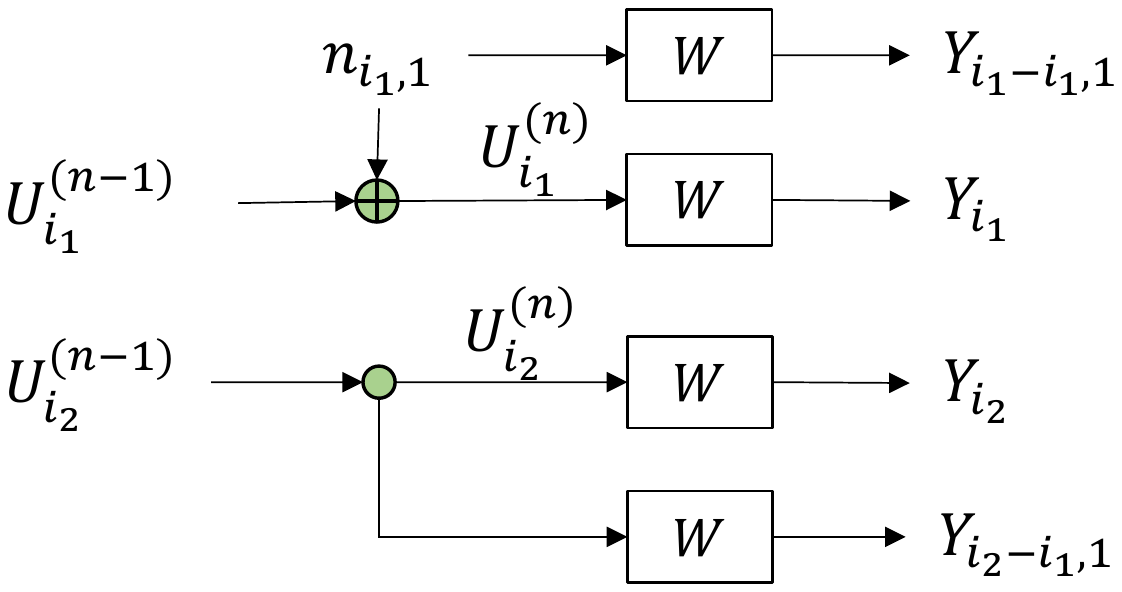}
          \caption{A-DRS modification}
          \label{fig:A-DRS-generalSplit_j1}
      \end{subfigure}
     \caption{A-DRS scheme for a split XOR of first iteration of polarization}
     \label{fig:A-DRS-j=1}
\vspace{-7mm}
\end{figure}

If $j \geq 2$, assume that the A-DRS modification for the split operations for the first $(j-1)$ recursions are completed. 
Let $n_{i_1, j}$ be a Bernoulli($0.5$) random variable independent of all the other given variables. 
The part of encoding diagram to the right of $U_{i_1}^{(n-j+1)}$ is replicated, where  $n_{i_1, j}$ takes the place of $U_{i_1}^{(n-j+1)}$ in the replica. And then we let  $U_{i_1}^{(n-j+1)} = U_{i_1}^{(n-j)}\oplus n_{i_1, j}$. 
In addition, the part of encoding diagram to the right of $U_{i_2}^{(n-j+1)}$ is replicated, 
and a copy of $U_{i_2}^{(n-j)}$ is transmitted through the replica. The variable  $U_{i_2}^{(n-j+1)} $ remains  $U_{i_2}^{(n-j+1)}= U_{i_2}^{(n-j)}$. 
 
We demonstrate the procedure described above through the following example. 
Assume $n=3$, $N=8$, and $w_{u.b.} =2$. 
The encoding diagram for $G_2^{\otimes 3}$ is shown in Figure\,\ref{fig:ADRS_unsplit}, and the XOR operations that are split in \textproc{DRS}($G_2^{\otimes 3}$) are marked in green and blue, which indicate the operations are due to the first and the second polarization recursions, respectively. 
The notations $U_i', U_i'', U_i'''$ are used to represent $U_i^{(1)}, U_i^{(2)}, U_i^{(3)}$. 
\begin{figure}[h]
	\centering
	\includegraphics[trim=1.5cm 1.5cm 9.5cm 8.5cm, clip,  width= 0.5\textwidth]{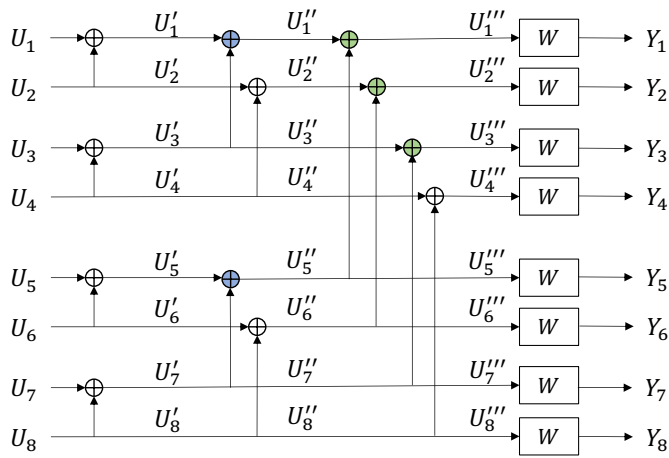}
	\caption{Encoding diagram for $G_2^{\otimes 3}$}\label{fig:ADRS_unsplit}
\vspace{-7mm}
\end{figure}

Replacing the XOR operations marked in green as described for the case of $j=1$, the encoding diagram is now shown in Figure\,\ref{fig:ADRS_1st_iteration}. 
For the XOR operations marked in blue, we proceed by using the step for $j\geq 2$ and obtain the diagram shown in Figure\,\ref{fig:ADRS_n3_entire}. 

\begin{figure}[h!]
	\centering
	\includegraphics[trim=1cm 1.5cm 5cm 2.5cm, clip, width= 0.6\textwidth]{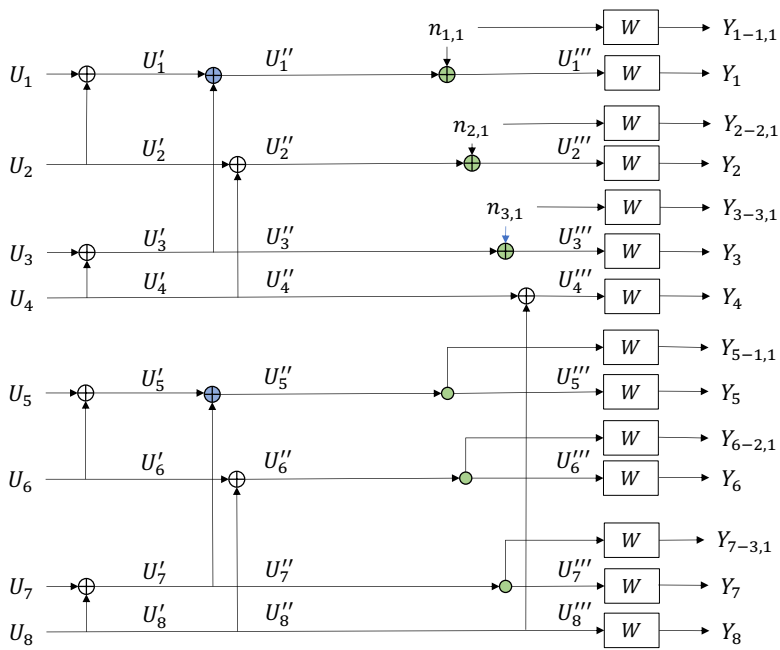}
	\caption{A-DRS for splits corresponding to $s_1$ in $G_2^{\otimes 3}$ }\label{fig:ADRS_1st_iteration}
\vspace{-7mm}
\end{figure}
\begin{figure}[h!]
	\centering
	\includegraphics[trim=4cm 4cm 4cm 6cm, clip, width= 0.6\textwidth]{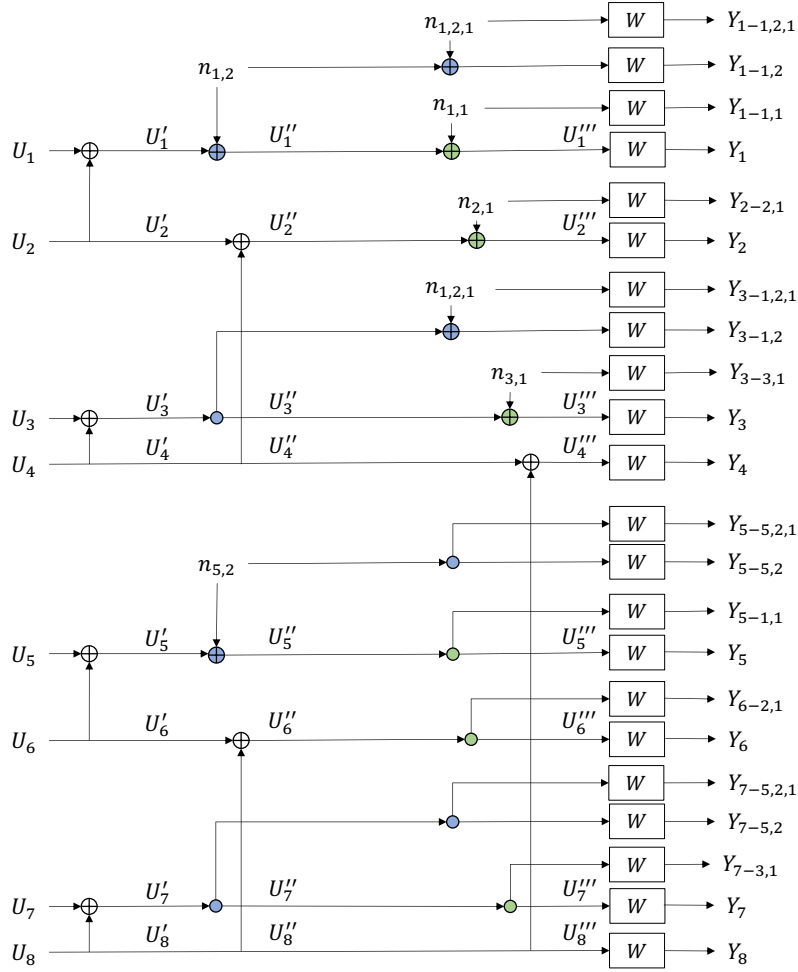}
	\caption{A-DRS Encoding Diagram for $G_2^{\otimes 3}$ with $w_{u.b.}=2$}\label{fig:ADRS_n3_entire}
	\vspace{-7mm}
\end{figure}

It can be noted that the bit-channels observed by each of   $U_i^{(j)}$, for $i= 1, 2, \ldots, N$ and $j = 0,1,2, \ldots, n$, in the A-DRS encoder are the same as those in the standard encoder for the generator matrix $G_2^{\otimes n}$ (The variable $U_i^{(0)}$ are given by $U_i$ for $1\leq i \leq N$). 
When an XOR operation associated with the $j$-th recursion, with operands $U_{i_1}^{(n-j)}$ and $U_{i_2}^{(n-j)}$ and the output $U_{i_1}^{(n-j+1)}$, is split and modified under the A-DRS scheme, the  complexity of computing the likelihood or log-likelihood for $U_{i_1}^{(n-j)}$ and $U_{i_2}^{(n-j)}$ can be upper bounded by $2(2^1 + 2^2+ \ldots + 2^j)c = 2(2^{j+1} -2)c$, for some constant $c>0$.

\begin{proposition}\label{prop:ADRScompl}
    Let a constant $\lambda > \lambda^{\dagger} \triangleq (\log_2 3)^{-1} \approx 0.631$ be given.
    The decoding complexity for a SC decoder for the A-DRS scheme is bounded by $O(N\log N)$ for all sufficiently large $n$ if the threshold for the DRS algorithm is $w_{u.b.} = 2^{n\lambda}$. 
\end{proposition}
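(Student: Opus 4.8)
The plan is to bound the total number of likelihood evaluations performed by the SC decoder on the A-DRS factor graph, separating this count into the cost of the underlying polar SC recursion and the overhead created by the replicated sub-diagrams. Since $\lambda > \lambda^{\dagger} > \lambda^*$, Proposition \ref{prop:DRSrateloss} guarantees that the rate-loss factor $1+\gamma$ of \textproc{DRS}($G_2^{\otimes n}$) stays bounded, so the code has $N(1+\gamma)=O(N)$ outputs and the ``backbone'' pass — the part inherited unchanged from the standard $G_2^{\otimes n}$ decoder — costs $O(N\log N)$. It remains to bound the extra work. As recorded in the discussion preceding the proposition, each XOR that is split at the $j$-th recursion forces the decoder through a replicated sub-diagram, and computing the likelihoods of its two operands costs at most $2(2^{j+1}-2)c = O(2^j)$. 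Hence the total complexity is at most $O(N\log N) + \sum_j S_j\cdot O(2^j)$, where $S_j$ is the number of split operations at the $j$-th recursion.

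First I would count $S_j$. A column $v^{(s_1,\ldots,s_n)}$ of $G_2^{\otimes n}$ has weight $2^m$ with $m=|\{i:s_i=-\}|$, and \textproc{DRS-Split} descends into a sub-vector exactly while its weight exceeds $w_{u.b.}=2^{n\lambda}$, splitting at depth $j$ precisely when $s_j=-$ and the number of $-$ signs among $s_j,\ldots,s_n$ exceeds $n\lambda$ (by Lemma \ref{lem:rateLossIndepOfOrder} only the sign pattern, not its order, matters). Counting the admissible patterns — free signs in the $j-1$ leading positions, a forced $-$ in position $j$, and at least $\approx n\lambda$ further $-$ signs among the trailing $n-j$ positions — gives
\begin{equation*}
S_j \;\le\; 2^{\,j-1}\sum_{q > n\lambda-1}\binom{n-j}{q},
\end{equation*}
so the overhead is bounded by $\sum_{j=1}^{n} 2^{\,2j-1}\sum_{q>n\lambda-1}\binom{n-j}{q}$.

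Next I would estimate this sum through its exponential growth rate. Writing $j=\theta n$ and using $\binom{n-j}{q}\le 2^{(n-j)h_b(q/(n-j))}$ with the tail dominated by its leading term $q\approx n\lambda$ (legitimate since $\lambda>\tfrac12$), the per-$n$ exponent of the summand is $g(\theta)=2\theta+(1-\theta)\,h_b\!\big(\tfrac{\lambda}{1-\theta}\big)$. Differentiating, the stationary condition collapses to $\log\frac{(1-\theta)-\lambda}{1-\theta}=-2$, i.e. the trailing block has $-$-density $\tfrac34$ and $\theta^{*}=1-\tfrac{4\lambda}{3}$; substituting and cancelling the entropy terms yields $g(\theta^{*})=2-\lambda\log 3$. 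Thus the overhead is $2^{(2-\lambda\log 3+o(1))\,n}$. Because $\lambda>\lambda^{\dagger}=(\log 3)^{-1}$ is equivalent to $2-\lambda\log 3<1$, the overhead is $2^{(1-\Omega(1))n}=o(N)$, and the total complexity is $O(N\log N)+o(N)=O(N\log N)$, as claimed.

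The main obstacle is the depth-resolved accounting in the middle step: one must charge the $O(2^j)$ replication cost to the correct recursion index so that the many shallow splits are cheap while the expensive deep splits are exponentially rare, and then carry out the entropy optimization of $\sum_j 2^{2j}\binom{n-j}{n\lambda}$. The clean cancellation producing the exponent $2-\lambda\log 3$ — and hence the precise threshold $\lambda^{\dagger}=(\log 3)^{-1}$ — is the delicate part; a careless bound (charging every split its worst-case subtree size $2^n$, or forgetting that the binomial tail concentrates at $q\approx n\lambda$) inflates the threshold to $\lambda=1$ and destroys the result. A secondary point to verify is that summing the per-column split costs validly upper-bounds the decoder's work, which holds because the A-DRS replicas are disjoint copies rather than shared structure.
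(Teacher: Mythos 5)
Your proof is correct and follows essentially the same route as the paper's: the same per-split cost bound $2(2^{j+1}-2)c$, the same count of split XOR operations via sign patterns having more than $n\lambda$ minus signs in the suffix, and the same entropy optimization of $(1-\alpha)h_b\left(\frac{\lambda}{1-\alpha}\right)+2\alpha$, maximized at suffix minus-density $\frac{3}{4}$ to give the exponent $2-\lambda\log 3$ and hence the threshold $\lambda^{\dagger}=(\log 3)^{-1}$. The only cosmetic difference is bookkeeping: you group the split count by recursion depth $j$, while the paper groups by the index $\tau(\mathbf{s})$ at which the suffix weight first drops to $n\lambda$ and then sums over earlier minus positions; these are the same double sum.
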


It can be observed that the number of additional copies of channels due to the modification for an XOR operation at the $j$-th polarization recursion is $2^j$. 
We find the total number of extra channel uses and the ratio $\gamma$ of that to the number $N= 2^n$ of channel uses for the code corresponding to $G_2^{\otimes n}$ in the following.  Assume that the column weight threshold of the DRS algorithm is given by $w_{u.b.} = 2^{n\lambda}$.

\begin{proposition}\label{Prop:BMSrateloss}
    Let $N(1+\gamma)$ be the number of channel uses of the encoder for the A-DRS scheme based on \textproc{DRS}($G_2^{\otimes n}$) with $w_{u.b.} = 2^{n\lambda}$.
    Then the term  $\gamma$ goes to $0$ as $n$ grows large, if we have $\lambda > \lambda^{\dagger}$.
\end{proposition}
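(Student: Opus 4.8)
The plan is to mirror the analysis of Proposition~\ref{prop:DRSrateloss}, but to weight every split by the number of channel uses it creates. As observed just above the statement, a split XOR operation occurring at the $j$-th polarization recursion produces $2^{j}$ additional channel uses. Hence the total number of extra channel uses is the sum of $2^{j}$ over all split events, and
\begin{equation*}
\gamma \;\doteq\; \frac{1}{N}\sum_{d=1}^{n} 2^{d}\,M_d ,
\end{equation*}
where $M_d$ counts the columns of $G_2^{\otimes n}$ that the DRS algorithm (with threshold $w_{u.b.}=2^{n\lambda}$) splits at the $d$-th recursion. The goal is then to show that this sum is $\doteq 2^{\,n(1-\lambda\log_2 3)}$, so that $\gamma\to 0$ exactly when $1-\lambda\log_2 3<0$, i.e. when $\lambda>\lambda^{\dagger}=(\log_2 3)^{-1}$. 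This parallels the fact that the unweighted version (each split counted once) produces the smaller threshold $\lambda^{*}=h_b(\tfrac23)-\tfrac13$ of Proposition~\ref{prop:DRSrateloss}.

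Next I would count $M_d$ by enumerating sign patterns. Tracking the recursive halving of a column $c_{s_1}\otimes\cdots\otimes c_{s_n}$, with $c_{-}=[1,1]^{t}$ and $c_{+}=[0,1]^{t}$, the live subvector reached at depth $d$ is $c_{s_d}\otimes\cdots\otimes c_{s_n}$ and has weight $2^{w_d}$, where $w_d=\lvert\{i\ge d:\ s_i=-\}\rvert$. Since the $w_d$ are non-increasing in $d$, depth $d$ is always reached once $w_d>n\lambda$, and a genuine two-way split — the event that actually cuts an XOR and forces an augmentation — occurs at recursion $j=d$ precisely when $s_d=-$ and $w_d>n\lambda$. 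Summing over the patterns, the free coordinates $s_1,\dots,s_{d-1}$ contribute a factor $2^{d-1}$ while the remaining coordinates must supply enough $-$ signs to force $w_d>n\lambda$, giving
\begin{equation*}
M_d \;\doteq\; 2^{\,d-1}\sum_{k>\,n\lambda-1}\binom{n-d}{k}.
\end{equation*}
Substituting this into the expression for $\gamma$ yields a double sum of binomial tails weighted by $2^{2d-1}$.

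The final step is the asymptotic evaluation, using the same Stirling / large-deviation estimates as in Lemma~\ref{lemma:aiAsymp} and Proposition~\ref{Thm:ratelossAndEpsi'}. Writing $d=xn$ and replacing the binomial tail by its exponential rate, each term of $N\gamma$ has the form $2^{\,n\,G(x)}$ with $G(x)=2x+(1-x)\,h_b\!\bigl(\tfrac{\lambda}{1-x}\bigr)$ in the regime where the tail is governed by its endpoint. Maximizing $G$, its derivative vanishes when the inner fraction equals $3/4$ (the analogue of the value $2/3$ that produced $\lambda^{*}$), which gives $\max_x G(x)=2-\lambda\log_2 3$; hence $\gamma\doteq 2^{\,n(1-\lambda\log_2 3)}$ and the threshold $\lambda^{\dagger}=(\log_2 3)^{-1}$ follows. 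I expect the main obstacle to lie in the counting step rather than the optimization: one must argue that each split contributes only its own $2^{j}$ replication cost, \emph{without} compounding across recursions, so that the correct weight per split event is $2^{2d-1}$ and not a multiplicatively inflated quantity (an incorrect accounting here moves the optimizing fraction off $3/4$ and produces the wrong constant). Checking the count against the explicit $n=3$, $w_{u.b.}=2$ construction is a natural consistency test before carrying the large-deviation estimate through in full.
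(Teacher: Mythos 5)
Your proposal is correct and follows essentially the same route as the paper's own proof: the paper likewise counts each split XOR at recursion $l$ (the pairs $(\mathbf{s},l)$ with $s_l=-$ and more than $n\lambda$ minus signs in the suffix) with weight $2^l$, arrives at the same double sum of binomial terms, and maximizes the same exponent $2\alpha+(1-\alpha)h_b\bigl(\tfrac{\lambda}{1-\alpha}\bigr)$ at $\tfrac{\lambda}{1-\alpha}=\tfrac34$ to obtain $2-\lambda\log 3$ and hence the threshold $\lambda^{\dagger}=(\log 3)^{-1}$. The only difference is bookkeeping — the paper organizes the sum by the position $\tau(\mathbf{s})$ of the $n_{lub}$-th minus sign from the right, while you organize it by the split depth $d$ with exact binomial tails — and the two enumerations yield the same quantity.
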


We are ready to show the existence of a sequence of capacity-achieving codes over general BMS channels with GMs where the column weights are bounded by a polynomial in the blocklength, and that the block error probability under a low  complexity decoder vanishes as the $n$ grows large.
\begin{proposition}
    \label{Thm:BMSlowcomplDEC}
    Let $\beta < E(G_2)=0.5 $, $\lambda > \lambda^{\dagger}$ and a BMS channel $W$ with capacity $C$ be given. There exists a sequence of codes 
    with the following properties for all sufficiently large $n$:
    \begin{enumerate}
        \item 
        The error probability is upper bounded by $2^{-N^{\beta}}$, where $N= 2^n$.
        \item
        The Hamming weight of each column of the GM is upper bounded by $N^\lambda$.
        \item
        The rate approaches $C$ as $n$ grows large.
        \item 
        The codes can be decoded by a successive-cancellation decoding scheme with complexity $O(N\log N)$.
    \end{enumerate}
    
\end{proposition}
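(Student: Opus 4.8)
The plan is to assemble the three ingredients already established for the A-DRS scheme, in the same order as the BEC proof of Proposition \ref{Thm:BEClowcomplDEC}, but with the monotonicity input used there (Lemma \ref{lem:DRS_Z_BEC}, which fails for general BMS channels by the counterexample in Remark \ref{rem:DRSnotForBMS}) replaced by the \emph{exact} bit-channel preservation that the A-DRS construction is designed to guarantee. Throughout, I would fix the DRS threshold $w_{u.b.} = 2^{n\lambda} = N^\lambda$ and take the code to be the one defined by the A-DRS encoder built from \textproc{DRS}($G_2^{\otimes n}$). The four claims are then verified one at a time.

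The load-bearing fact is the one recorded in the discussion preceding Proposition \ref{prop:ADRScompl}: in the A-DRS encoder, the bit-channel observed by each source bit $U_i$ is \emph{identical} to the bit-channel $W^{s_1\ldots s_n}$ observed by the corresponding bit in the standard $G_2^{\otimes n}$ encoder, not merely no worse. This is because every split replaces the affected sub-diagram by a replica driven by a fresh independent Bernoulli($1/2$) variable, so the marginal law seen by $U_i$ is unchanged while the column weights drop below $w_{u.b.}$. Granting this, claim 2 is immediate, since the GM is a submatrix of \textproc{DRS}($G_2^{\otimes n}$), each of whose columns has weight at most $w_{u.b.}=N^\lambda$. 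For claim 1, I would fix $\beta < \beta'' < 1/2$, let $\mathbfsl{S}$ be the set of $\mathbf{s}\in\mathset{-,+}^n$ whose bit-channel has Bhattacharyya parameter at most $2^{-N^{\beta''}}$, and freeze all source bits outside $\mathbfsl{S}$. Because the A-DRS bit-channels coincide with those of the standard kernel-$G_2$ code, \cite[Theorem 2]{arikan2009channel} (equivalently the error exponent $E(G_2)=0.5$ of \cite{korada2010polar}) applies verbatim, and the SC block-error probability obeys $\sum_{\mathbf{s}\in\mathbfsl{S}} Z^{\mathbf{s}} \leq \abs{\mathbfsl{S}}\,2^{-N^{\beta''}} \leq N\,2^{-N^{\beta''}} \leq 2^{-N^{\beta}}$ for all sufficiently large $n$, which is exactly the relabeling used in the proof of Proposition \ref{Thm:BEClowcomplDEC}.

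Claims 3 and 4 then follow from the two supporting propositions. For the rate, the number of information bits equals $\abs{\mathbfsl{S}}$, and by the polarization theorem $\abs{\mathbfsl{S}}/N \to C$; the block length is $N(1+\gamma)$ with $\gamma$ the A-DRS channel-use overhead, so the rate is $\abs{\mathbfsl{S}}/\bigl(N(1+\gamma)\bigr)$. Proposition \ref{Prop:BMSrateloss} gives $\gamma\to 0$ precisely when $\lambda>\lambda^{\dagger}$, whence the rate tends to $C$. Claim 4 is then exactly Proposition \ref{prop:ADRScompl}, which bounds the SC decoding complexity of the A-DRS scheme by $O(N\log N)$ for $\lambda>\lambda^{\dagger}$ and large $n$.

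The genuinely delicate point, and the one I expect to be the main obstacle, is the load-bearing fact of exact bit-channel preservation invoked above. Remark \ref{rem:DRSnotForBMS} shows it cannot be obtained through any Bhattacharyya-parameter monotonicity argument, so I would establish it by induction on the recursion level $j$, tracking the full joint law of $(U_i, \text{ outputs observed by } U_i)$ rather than a scalar reliability. The inductive step uses that each injected variable $n_{i_1,j}$ is Bernoulli($1/2$) and independent of everything else, together with the channel symmetry, to match the marginal seen by every $U_i^{(j)}$ to that of the unmodified $G_2^{\otimes n}$ encoder level by level; once the induction closes, the rest of the proof is the routine assembly described above.
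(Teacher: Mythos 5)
Your proof is correct and follows essentially the same route as the paper's: both rest on the exact bit-channel preservation of the A-DRS encoder, the submatrix argument for the column-weight bound, Proposition \ref{Prop:BMSrateloss} for the vanishing rate loss, and Proposition \ref{prop:ADRScompl} for the $O(N\log N)$ decoding complexity. If anything, you are more careful than the paper, which asserts the bit-channel preservation only as an observation following the construction ("It can be noted that\dots") rather than establishing it by the level-by-level induction on the joint law that you sketch.
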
 
\proof 
We prove the four properties in order as follows.
First, similar to the proof of Theorem\,\ref{Thm:BEClowcomplDEC}, for $i =1, 2,\ldots, N$, the bit $U_i$ is frozen in the A-DRS code with rate $R< C$ if and only if it is frozen in the polar code with kernel $G_2$, blocklength $N= 2^n$, and the rate $R$.
Hence, the probability of error of the A-DRS code can be bounded in the same way as its polar-code counterpart, since the bit-channels observed by the source bits $U_i$, and the corresponding Bhattacharyya parameters, are identical to those when they are encoded with the standard polar code.

Second, when the A-DRS scheme is based on \textproc{DRS}($G_2^{\otimes n}$) with $w_{u.b.} = 2^{n\lambda}$, 
the generator matrix for the A-DRS code is a submatrix of \textproc{DRS}($G_2^{\otimes n}$). 
The column weights of the GM for the A-DRS code are thus upper bounded by $w_{u.b.} = 2^{n\lambda} = N^{\lambda}$.
The third claim holds by using an argument similar to the one used in the 
proof of Proposition\,\ref{Thm:BEClowcomplDEC}. This is because the term $\gamma$ vanishes as $n$ grows large according to Proposition\,\ref{Prop:BMSrateloss}.
Finally, note that the fourth claim is equivalent to Proposition\,\ref{prop:ADRScompl}.

\endproof

\begin{theorem}
    \label{prop:Sparse_ADRScompl}
   Let $\beta < E(G_2)=0.5 $ and $\lambda >\lambda^{\dagger}$ be given. Then there exists a sequence of codes corresponding to  $G'$, constructed by applying the A-DRS algorithm to $G= G_2^{\otimes n}\otimes I_{n'}$,
    with the following properties:
    \begin{enumerate}
        \item 
        The error probability is upper bounded by $\exp_2{(-(\log N')^{2\beta} )}$.
        \item
        The Hamming weight of each column of the GM is upper bounded by $ (\log N')^{2\lambda}$.
        \item
        The rate approaches the capacity $C$ as $n$ grows large.
        \item
        The decoding time complexity is upper bounded by $O(N' \log\log N').$
    \end{enumerate}
\end{theorem}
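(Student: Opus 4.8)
The plan is to lift the single-block A-DRS result of Proposition~\ref{Thm:BMSlowcomplDEC} to the Construction~A setting in exactly the way Theorem~\ref{coro:DRS_lowComp_sparse_BEC} lifts its BEC counterpart, Proposition~\ref{Thm:BEClowcomplDEC}. First I would fix the DRS threshold $w_{u.b.} = 2^{n\lambda} = N^{\lambda}$ and choose $n'$ as in \eqref{eq:n'def}, with $\delta$ subject to $\delta > 1 - \beta/E(G_2) = 1 - 2\beta$ (needed for the error bound below) and $\delta$ small enough that the effective threshold exponent stays above $\lambda^{\dagger}$. The structural observation driving everything is that $G = G_2^{\otimes n}\otimes I_{n'}$ is block diagonal with $n'$ identical copies of $G_2^{\otimes n}$; hence applying the A-DRS scheme to $G$ amounts to running $n'$ independent A-DRS encoders for $G_2^{\otimes n}$, and the resulting $G'$ is a submatrix of the block-diagonal matrix carrying $n'$ copies of the A-DRS matrix for $G_2^{\otimes n}$ along its diagonal. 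Each of the four claims then reduces to the corresponding part of Proposition~\ref{Thm:BMSlowcomplDEC} followed by an asymptotic translation from the inner length $N$ to the outer length $N' = n'N$ through \eqref{eq:logN'formula}.

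For the error probability, Proposition~\ref{Thm:BMSlowcomplDEC}(1) bounds the block error of each of the $n'$ constituent A-DRS codes by $2^{-N^{\beta}}$, so a union bound over the blocks gives $P_e \le n'\,2^{-N^{\beta}}$. Substituting $n' = 2^{N^{(1-\delta)/2}}$ and invoking $\delta > 1 - 2\beta$, so that $N^{\beta}$ dominates $N^{(1-\delta)/2}$ in the exponent, the identical manipulation used in the proof of Lemma~\ref{Lemma:newG_PeBound} rewrites this as $\exp_2(-(\log N')^{2\beta})$ for all sufficiently large $n$; the exponent $2\beta$ is precisely $\beta/E(G_2)$ because $E(G_2) = \tfrac12$. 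For the column-weight claim, the columns of $G'$ inherit the bound $w_{u.b.} = N^{\lambda}$ from Proposition~\ref{Thm:BMSlowcomplDEC}(2), and feeding this through the estimate $N^{(1-\delta)/2} \le \log N' \le N^{1/2}$ of \eqref{eq:logN'formula} (with $\delta$ taken suitably small) yields the stated bound $(\log N')^{2\lambda}$.

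The rate claim follows because the $n'$ blocks share a common frozen set, so the rate of the code corresponding to $G'$ equals the rate of the single A-DRS code on $G_2^{\otimes n}$, which tends to $C$ by Proposition~\ref{Thm:BMSlowcomplDEC}(3); this last fact rests on the vanishing rate loss $\gamma \to 0$ established in Proposition~\ref{Prop:BMSrateloss} for $\lambda > \lambda^{\dagger}$. For the complexity, each block is decoded independently by the recursive SC decoder of Proposition~\ref{prop:ADRScompl} in time $O(N\log N)$, and the $n'$ decoders run in parallel, for a total of $O(n'N\log N) = O(N'\log N)$; since $\log N' = N^{(1-\delta)/2} + \log N$ gives $\log\log N' = \Theta(\log N)$, this equals $O(N'\log\log N')$, as claimed.

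I expect no serious obstacle, because the genuinely substantive content — that the A-DRS construction simultaneously preserves the Bhattacharyya parameters of the bit-channels seen by the source bits (hence the polar error exponent and the capacity-achieving rate), keeps every GM column light, and admits a recursive SC decoder — is already packaged in Propositions~\ref{prop:ADRScompl}, \ref{Prop:BMSrateloss}, and \ref{Thm:BMSlowcomplDEC}. The one point that demands care is verifying that the admissible window for $\delta$ is nonempty: it is bounded below by $1 - 2\beta$ for the error exponent and bounded above by the requirement that the threshold exponent exceed $\lambda^{\dagger}$, and this window is nonempty because $\beta$ may be taken close to $\tfrac12$ while $\lambda > \lambda^{\dagger}$. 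Everything else is the same bookkeeping carried out for the BEC in Theorem~\ref{coro:DRS_lowComp_sparse_BEC}.
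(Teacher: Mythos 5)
Your proposal is correct and follows exactly the paper's route: the paper's own proof of Theorem~\ref{prop:Sparse_ADRScompl} is a one-line appeal to ``the same steps as in the proof of Theorem~\ref{coro:DRS_lowComp_sparse_BEC},'' i.e., lifting the single-block BMS result (Proposition~\ref{Thm:BMSlowcomplDEC}, resting on Propositions~\ref{prop:ADRScompl} and~\ref{Prop:BMSrateloss}) through the block-diagonal Construction~A with a union bound, the $\log N'$ translation of \eqref{eq:logN'formula}, and parallel SC decoders. Your explicit check that the admissible window for $\delta$ is nonempty (and that the threshold exponent must survive the $(1-\delta)$ factor) is a welcome bit of care that the paper itself glosses over.
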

\proof
The theorem can be shown using the same steps as in the proof of Theorem\,\ref{coro:DRS_lowComp_sparse_BEC}.
\endproof

\section{Sparsity with General Kernels} \label{sec:L>2}

In this section we consider $l \times l$ kernels $G_l$ with $l > 2$, and show the existence of $G_l$ with $w_{GM}(n, G_l)= O( ({\log{N'}})^{\lambda})$ for some $\lambda <1$, where $N' = n' \,l^n$ is the number of columns of $G_l^{\otimes n} \otimes I_{n'}$.  Using a similar argument as in Section \ref{subsection:mainL2},  most of the column weights of $G_l^{\otimes n}$ can be made to scale with a smaller power of $\log N'$ asymptotically than when $G_2$ is used as the kernel.
To characterize the geometric mean column weight and the maximum column weight, the \textit{sparsity order} is defined as follows:
\begin{definition}\label{def:lambdaMC}
	The sparsity order of the geometric mean column weight is
	\begin{equation}
	\lambda_{GM}(n,G_l)  \, \triangleq \, \log_{\log(N')}{w_{GM}(n,G_l)} =\frac{\log {w_{GM}(n,G_l)}}{ \log {\log(N')}},
	\end{equation}
    where $n'$ and $N'$ are defined in \eqref{eq:n'def} and \eqref{eq:N'def}, respectively. 
\end{definition}
\begin{definition}\label{def:lambdaMax}
    The {sparsity order} of the maximum column weight is 
    \begin{equation}
    \lambda_{max}(n, G_l)  \, \triangleq \, 
    \log_{\log(N')}{w_{max}(n,G_l)} =\frac{\log {w_{max}(n,G_l)}}{ \log {\log(N')}}.
    \end{equation}
\end{definition}

For example, if ${w_{GM}(n,G_l)}$ (or ${w_{max}(n,G_l)} $) scales as $\Theta([\log N']^r) $, then $ \lambda_{GM}(n,G_l)$ (or $\lambda_{max}(n, G_l) $ ) goes to $r$ as $n$ grows large. Table \ref{table:orderGeneralL}\footnote{The limits of the sparsity orders when $n\to \infty$ are shown, hence $o(1)$ terms are neglected.} shows the values of $\lambda_{GM}(n,G_l)$ and $\lambda_{max}(n,G_l)$ when $n\to \infty$  for some of the kernels considered in \cite{korada2010polar}: 
\[
G_3^* = \begin{bmatrix}
0   &   1   &   0\\
1   &   1   &   0\\
1   &   0   &   1
\end{bmatrix}, 
G_4^* = \begin{bmatrix}
1   &   0   &   0   &   0\\
0   &   1   &   0   &   1\\
0   &   0   &   1   &   1\\
1   &   1   &   1   &   1
\end{bmatrix},
\] and $G_{16}^*$ (the smallest $l$ with $E_l>0.5$; see \cite{korada2010polar} for explicit construction), which are the matrices achieving $E_3, E_4$ and $E_{16}$, the maximal rates of polarization for $l=3,4, \mbox{ and } 16$, respectively. 
Recall equation \eqref{eq:n'def} for the definition of $\delta$, which determines the length of the code, and $\delta' = \frac{\delta}{1- \delta}$ as in the proof of Proposition \ref{prop:wMCloginN}.
\begin{table}[t] 
\centering
\begin{tabular}{l | l | l | l}
 		& $E(G_l)$     & $\lambda_{GM}(n,G_l)$    & $\lambda_{max}(n, G_l)$   \\ [2mm]  \hline  
$G_2$ 	& 0.5              
		& $1+\delta'$  
		& $2(1+\delta')$   \\  [2mm]
		
$G_3^*$ 	&  $\frac{2}{3}\log_3{2}\approx 0.42$    
		&  $1+\delta'$    
		& $1.5 (1+\delta')$  \\  [2mm]

$G_4^*$ 	&   0.5                     
		&  
		$\approx 1.15(1+\delta')$                    
		& $\log{3}(1+\delta')$  \\  [2mm]

$G_{16}^*$ 
		&  $\approx 0.5183 $                   
		& $\approx  1.443(1+\delta')  $        
		& omitted \\
\end{tabular} 
\caption{$\lambda_{GM}$ and $\lambda_{max}$ for $G_2, G_3^*,G_4^* $ and $G_{16}^*$ as $n\to \infty$ }

\label{table:orderGeneralL}
\vspace{-8mm}

\end{table}

However, the rate of polarization is not the only factor that determines the sparsity orders. For example, for $l=3$ and $l=4$, the matrices
\[
G_3' = \begin{bmatrix}
1   &   0   &   0\\
1   &   1   &   0\\
1   &   0   &   1
\end{bmatrix}, 
G_4' = \begin{bmatrix}
1   &   0   &   0   &   0\\
1   &   1   &   0   &   0\\
1   &   0   &   1   &   0\\
1   &   0   &   0   &   1
\end{bmatrix},
\] instead of $G_3^*$ and $G_4^*$, have the smallest sparsity orders of the geometric mean column weight (found through exhaustive search), as shown in table \ref{table:BetterorderGeneralL}.
By central limit theorem, most column weights scale exponentially in the logarithm of the block length in the same way as the geometric mean column weight. 
Therefore, if sparsity constraint is only required for almost all of the columns of the GM, $G_3'$ and $G_4'$ are the more preferable polarization kernels over   $G_3^*$ and $G_4^*$, respectively.

\begin{table}[t] 
\centering 
\begin{tabular}{l | l | l | l}
 		& $E(G_l)$     & $\lambda_{GM}(n,G_l)$    & $\lambda_{max}(n, G_l)$   \\ [2mm]  \hline  
$G_3'$ 	&  $\frac{2}{3}\log_3{2} \approx 0.42$    
		&  $
		\approx
		0.79(1+\delta') $    
		& $
		\approx
		2.38(1+\delta')$  \\  [2mm]

$G_4'$ 	&  $ \frac{3}{8}= 0.375  $        
		&  $\frac{2}{3}(1+\delta')$       
		& $\frac{8}{3}(1+\delta')$  \\
\end{tabular} 
\caption{$\lambda_{GM}$ and $\lambda_{max}$ for   $G_3'$ and $G_4'$ as $n\to \infty$}
\label{table:BetterorderGeneralL}
\vspace{-8mm}
\end{table}

For a given $G_l$, we may relate the two terms $E(G_l)$ and $w_{GM}(n)$, or, more specifically, the partial distances $ D_1, \ldots, D_l$ and the column weights $w_1, \ldots, w_l $ as follows.

\begin{lemma}\label{lemma:SparsityOrderGeneralL}
Let $G_l$ be a $l\times l$ polarization kernel, and let $\beta < E(G_l)$ be given. 
Then for any $1 > \delta > 1- \frac{\beta}{E(G_l)}$, the term $\lambda_{GM}(n,G_l)$ can be bounded as 
 \[
  \frac{ \sum_{i=1}^l \log_l w_i } {\sum_{i=1}^l \log_l D_i } \leq
\lambda_{GM}(n,G_l) \leq 
\frac{1}{1-\delta}
 \frac{ \sum_{i=1}^l \log_l w_i } {\sum_{i=1}^l \log_l D_i },
\]
for all sufficiently large $n$.  
\end{lemma}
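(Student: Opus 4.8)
The plan is to compute the numerator and the denominator of $\lambda_{GM}(n,G_l)$ (see Definition \ref{def:lambdaMC}) separately and then bound their ratio. For the numerator I would start from the observation recorded just after \eqref{eq:wmaxDef} that $w_{GM}(n,G_l)=w_{GM}(G_l)^n$, where $w_{GM}(G_l)=(w_1 w_2\cdots w_l)^{1/l}$ is the geometric mean of the column weights of the single kernel. Taking logarithms yields the clean, exact identity $\log w_{GM}(n,G_l)=\tfrac{n}{l}\sum_{i=1}^l\log w_i$, valid for every $n$. For the denominator I would feed the exact expression \eqref{eq:logN'formula}, namely $\log N'=N^{(1-\delta)E(G_l)}+\log N$ with $N=l^n$, into $\log\log N'$. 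The one conversion I would state explicitly and then use repeatedly is that, by \eqref{eq:EGformula}, $E(G_l)\log l=\tfrac1l\sum_{i=1}^l\log D_i$ (passing from base $l$ to base $2$); this is what turns the $E(G_l)$ appearing in the denominator into the partial-distance sum that appears in the target bound, and the common factor $\log l$ will cancel between numerator and denominator.

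For the upper bound I would simply discard the positive term $\log N$: since $\log N'\ge N^{(1-\delta)E(G_l)}$, we get $\log\log N'\ge (1-\delta)E(G_l)\,n\log l$. Dividing the numerator by this lower bound, the factors $n$ and $\log l$ cancel and, after the base conversion above, the estimate collapses to exactly $\lambda_{GM}(n,G_l)\le \frac{1}{1-\delta}\,\frac{\sum_{i=1}^l\log_l w_i}{\sum_{i=1}^l\log_l D_i}$. Note that this direction requires no largeness assumption on $n$, since dropping $\log N$ is valid for every $n$.

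For the lower bound I would argue in the opposite direction. Because $N^{(1-\delta)E(G_l)}=l^{n(1-\delta)E(G_l)}$ grows exponentially in $n$ whereas $\log N=n\log l$ grows only linearly, for all sufficiently large $n$ we have $\log N\le N^{(1-\delta)E(G_l)}$, hence $\log N'\le 2N^{(1-\delta)E(G_l)}$ and $\log\log N'\le 1+(1-\delta)E(G_l)\,n\log l$. Since $\beta<E(G_l)$ forces the admissible range to satisfy $\delta>1-\beta/E(G_l)>0$, the strict inequality $(1-\delta)<1$ lets the additive constant $1$ be absorbed, giving $\log\log N'\le E(G_l)\,n\log l$ for all large $n$. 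Dividing the numerator by this and applying the same base conversion yields $\lambda_{GM}(n,G_l)\ge \frac{\sum_{i=1}^l\log_l w_i}{\sum_{i=1}^l\log_l D_i}$, which completes the two-sided estimate.

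The proof is essentially careful asymptotic bookkeeping rather than a conceptual difficulty, so I do not expect a serious obstacle. The only point demanding care is the lower-order term $\log N$ inside $\log\log N'$: it should be dropped in the direction producing the upper bound (free, for every $n$) and dominated in the direction producing the lower bound (which is exactly what forces the ``for all sufficiently large $n$'' qualifier and crucially relies on $\delta>0$). I would therefore isolate the treatment of this term as the single technical step and state the $\log$-to-$\log_l$ conversion once up front so that the cancellation of $\log l$ in the final ratio is transparent.
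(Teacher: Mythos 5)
Your proposal is correct and follows essentially the same route as the paper: both rest on the exact identity $\log w_{GM}(n,G_l)=\tfrac{n}{l}\sum_i\log w_i$, the asymptotic relation \eqref{eq:logN'formula} for $\log N'$ (handling the lower-order $\log N$ term exactly as you do), and the conversion $E(G_l)=\tfrac1l\sum_i\log_l D_i$ from \eqref{eq:EGformula}. The paper merely packages the same estimates differently, writing $w_{GM}(n,G_l)=N^{\log_l GM}$ and sandwiching $N$ between powers of $\log N'$, whereas you bound the denominator $\log\log N'$ directly; the resulting inequalities are identical.
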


The following theorem shows that an arbitrarily small order can be achieved with a large $l$ and some $G_l$. 

\begin{theorem}\label{Thm:lambdaMC_lgeneral}
	For any fixed constant $0 < r\leq 1$, there exist  an $l\times l$ polarizing kernel $G_l$, where $l= l(r,\delta)$, such that $\lambda_{GM}(n,G_l) <r  $ for all sufficiently large $n$.
\end{theorem}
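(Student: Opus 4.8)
The plan is to exhibit an explicit family of kernels that generalizes the matrices $G_3'$ and $G_4'$ of Table~\ref{table:BetterorderGeneralL}, and to bound its sparsity order using the upper bound of Lemma~\ref{lemma:SparsityOrderGeneralL}. For each $l\ge 2$, let $G_l'$ denote the $l\times l$ binary matrix whose first column is the all-ones vector and whose $j$-th column is the $j$-th standard basis vector for $2\le j\le l$; equivalently, its rows are $g_1=e_1$ and $g_i=e_1+e_i$ for $2\le i\le l$. First I would verify that $G_l'$ is a genuine polarization kernel: adding the first row to each of the others reduces it to $I_l$, so it is invertible over $\Ftwo$, and its $(2,1)$ entry equals $1$, so it is not upper triangular; hence polarization occurs and $E(G_l')>0$ by \cite{korada2010polar}.

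Next I would read off the two sums appearing in Lemma~\ref{lemma:SparsityOrderGeneralL}. The column weights are $w_1=l$ and $w_2=\cdots=w_l=1$, so $\sum_{i=1}^l\log_l w_i=1$. For the partial distances, $D_l=w_H(e_1+e_l)=2$, and for $1\le i\le l-1$ one computes $D_i=d_H(g_i,\mathrm{span}(g_{i+1},\dots,g_l))$ directly: a generic element of the span has the form $(|S|\bmod 2)\,e_1+\sum_{j\in S}e_j$ with $S\subseteq\{i+1,\dots,l\}$, and a short parity case analysis on $|S|$ shows that for $i\ge 2$ no such element lies within distance $1$ of $g_i=e_1+e_i$, giving $D_i=d_H(g_i,\zero)=2$, whereas for $i=1$ the element $e_1+e_2=g_2$ lies at distance $1$ from $g_1=e_1$ (and $e_1$ itself is not in the span), giving $D_1=1$. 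Thus $(D_1,\dots,D_l)=(1,2,\dots,2)$ and $\sum_{i=1}^l\log_l D_i=(l-1)\log_l 2$.

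Combining the two computations, the ratio that governs the sparsity order is
\[
\frac{\sum_{i=1}^l\log_l w_i}{\sum_{i=1}^l\log_l D_i}=\frac{1}{(l-1)\log_l 2}=\frac{\log_2 l}{\,l-1\,},
\]
which specializes to $(\log_2 3)/2\approx 0.79$ at $l=3$ and to $2/3$ at $l=4$, matching Table~\ref{table:BetterorderGeneralL}. Since $E(G_l')>0$, for the given $\delta$ there is an admissible $\beta\in\bigl((1-\delta)E(G_l'),E(G_l')\bigr)$, so Lemma~\ref{lemma:SparsityOrderGeneralL} applies and yields, for all sufficiently large $n$,
\[
\lambda_{GM}(n,G_l')\le\frac{1}{1-\delta}\cdot\frac{\log_2 l}{l-1}.
\]
Because the right-hand side tends to $0$ as $l\to\infty$, given $r\in(0,1]$ and $\delta$ I would finish by choosing $l=l(r,\delta)$ large enough that $\frac{1}{1-\delta}\cdot\frac{\log_2 l}{l-1}<r$, which gives $\lambda_{GM}(n,G_l')<r$ for all large $n$.

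The only genuine content is the partial-distance computation for $G_l'$---in particular the parity argument establishing $D_i=2$ for $2\le i\le l-1$---together with checking that the hypotheses of Lemma~\ref{lemma:SparsityOrderGeneralL} (namely $E(G_l')>0$ and the existence of an admissible $\beta$ for the fixed $\delta$) are satisfied; both are routine, and everything else is bookkeeping. I do not anticipate any serious obstacle, the main conceptual point being that no dense or rate-optimized kernel is required: the trivial sparse family already drives the geometric-mean sparsity order to zero (at the expense of $E(G_l')\to 0$, which is irrelevant to the statement being proved).
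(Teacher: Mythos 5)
Your proposal is correct and follows essentially the same route as the paper's own proof: the paper uses exactly the kernel you describe (all-ones first column, identity elsewhere, i.e.\ the family generalizing $G_3'$ and $G_4'$), computes the same partial distances $(D_1,\dots,D_l)=(1,2,\dots,2)$ and column weights $(l,1,\dots,1)$, and applies Lemma~\ref{lemma:SparsityOrderGeneralL} together with the observation that $\log_l(GM)/E(G_l)=\log_2 l/(l-1)\to 0$ to choose $l=l(r,\delta)$. The only difference is that you spell out the parity argument for the partial distances and the polarization-kernel check, which the paper states without proof.
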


Let $r<1$ and $\eta >0 $ be fixed. For an appropriate choice of  $G_l$ with $\lim_{n\rightarrow \infty}\lambda_{GM}(n,G_l) <r$, concentration of the column weights implies that only a vanishing fraction of columns in $G$ has weight larger than $ [{\log{N'}}]^{(1+\eta)r}$ for all sufficiently large $n$. 
The reader may follow the steps used in the proof of Lemma \ref{lem:wMClogN_most}, and use the fact that the logarithm of the column weight has the same distribution as a sum of i.i.d. random variables, which goes to $\lambda_{GM}(n,G_l)$ w.p.1. when normalized by $n$, as guaranteed by the law of large numbers.

\section{Conclusion}\label{sec:Conclusion}
We proposed three constructions for capacity-achieving polar-based LDGM codes where all the generator matrix column weights are upper bounded as $O( (\log N)^\lambda )$, where $\lambda$ is slightly larger than $1$. Our schemes are based on a concatenation of $G_2^{\otimes n}$ and a rate-$1$ code, and column-splitting algorithms which guarantee the heavy columns are replaced by lighter ones. Two of the constructions also allow the codes to be decodable with low-complexity decoders for the BECs and general BMS channels. 
An RLE-based construction for a capacity-achieving code sequence is given with $O(\log N)$ sparsity over general BMS channels under polynomial-time-complexity decoding. 
Broadly stated, this paper studies the existence of LDGM codes with constraint on the column weights of the GMs. It remains an open question whether LDGM codes with even sparser GMs (column weights scaling sub-logarithmically in $N$) exist. A future direction in this regard is to determine explicitly the scaling behaviour of smallest upper bound on the column weights of GMs for a sequence of rate-$R$ achieving (with arbitrarily low probability of error as $N$ grows) codes.

\appendix
\label{sec:Appendix}

\subsection{Proofs for Subsection \ref{sec:sparsityConstraint}}

\textit{{Proof of Theorem \ref{prop:polarWithPolyColumnWeight}}: }
Consider an $l\times l$ polarizing matrix
	\begin{equation*}
		G=
		\begin{bmatrix}
		I_{\frac{l}{2}} & 0_{\frac{l}{2}} \\
		I_{\frac{l}{2}} & I_{\frac{l}{2}} \\
		\end{bmatrix}, 
		\label{eq:polyGform}  
	\end{equation*} where $l$ is an even integer such that $l \geq 2^{\frac{1}{s}}$.
Note that by \eqref{eq:DiDef} and \eqref{eq:DlDef}, we have 
$ D_i= 1$ for $1\leq i\leq \frac{l}{2}$ and $D_i= 2$ for $\frac{l}{2}+1 \leq i\leq l$. Hence, the rate of polarization 
	$E(G) = \frac{1}{2}\log_l 2  >0 $, and there is a sequence of capacity-achieving polar codes constructed using $G$ as the polarizing kernel. Note that in $G$, each column has weight at most $2$ and, hence, the column weights of $G^{\otimes n}$ is upper bounded  by $2^n$. By the specific choice of $l$, we have
    \[
    2^n \leq {(l^s)}^n = {(l^n)}^s = N^s,
    \] where $N= l^s$ is the block length of the code. This completes the proof. 
\endproof

\textit{{Proof of Proposition \ref{prop:noLogColumns}}: }
    Since $G$ is a polarization kernel, there is at least one column in $G$ with weight at least $2$. 
    To see this, note that $G$ being invertible implies that all rows and columns are nonzero vectors. Now, if all the columns of $G$ have weight equal to $1$, then all the rows must also have weight equal to $1$, i.e., $G$ is a permutation matrix. Then $D_i =1, \forall i$, and $ E(G) =0$, which implies that $G$ can not be polarization kernel. The contradiction shows that at least one column in $G$ must have a weight at least $2$. 
    
    Let $k \geq 1$ denote the number of columns in $G$ with a weight at least $2$. 
    Let \textbf{v} be a randomly uniformly chosen column of $G^{\otimes n}$, and $w(\textbf{v}) $ be the Hamming weight of \textbf{v}. 
    For $r>0$,
    \begin{align*}
        &\Pr\left( w(\textbf{v}) =  O({(\log N)}^r)  \right)
        \leq \Pr \left(2^{\sum_{i=1}^n F_i}   = O({(\log N)}^r)= O(  ({\log l^n} )^r  ) = O(n^r) \right), 
    \end{align*}
    where $F_i $ is the indicator variable that one of $k$ non-unit-weight columns is used in the $i$-th Kronecker product of $G$ to form \textbf{v}. 
    The variables $F_1, F_2, \ldots, F_n$ are i.i.d. as $\text{Ber}(k/l) $. 
    Law of large numbers implies that
    $\sum_{i=1}^n F_i = \Theta(n)$ with high probability.
    Thus, $$ \Pr \Big(2^{\sum_{i=1}^n F_i} = O(n^r) \Big) \to 0, $$ for any $r>0$ as $n \to \infty .$
\endproof

\subsection{Proofs for Subsection \ref{sec:PeForNewApproach}}

\textit{Proof of Lemma \ref{Lemma:newG_PeBound}:}
	Let $P_e$ denote the block error probability. 
	For $\beta' < \beta < E(G_l) $, $P_e \leq 2^{-N^\beta}  $ implies that $P_e \leq 2^{-N^{\beta'}}  $. Therefore, it suffices to show that bound on the probability of error holds for  $\beta =( 1-\eta)E(G_l)$ for any $ \eta\in (0, 1/2)$. 
	
	Let $1- \frac{\beta}{ E(G_l)} =  \eta \in (0, 1/2)$ be fixed and $\delta \in (\eta, 1)$ be chosen. 
	Note that polar codes with rate $R< C$ constructed using kernel $G_l$ have the error probability upper bounded by $2^{-N^{(1- \eta /2)E(G_l)}}$ as $n$ grows large \cite{korada2010polar}. 
	For the code corresponding to  $G$, $P_e$ is then bounded by 
	\begin{align*}P_e 
	&\leq n’ \times 2^{-N^{(1- \eta /2) E(G_l)}} = 2^{N^{(1-\delta)E(G_l)}} \times  2^{-N^{(1- \eta /2) E(G_l)}} \\
	&\leq 
	2^{N^{(1-\eta)E(G_l)}} \times  2^{-N^{(1- \eta /2) E(G_l)}} \\
	&=  2^{-N^{(1- \eta /2) E(G_l)} (1-N^{-\frac{\eta}{2}  E(G_l) } )  } \leq 2^{-N^{(1- \eta ) E(G_l) } } =
	2^{-N^{\beta } } 
	\end{align*}
	for all sufficiently large $n$.
	
	The expression in terms of $N'$ follows from the bound in \eqref{eq:logN'formula}.
\endproof

\textit{\\Proof of Proposition \ref{prop:codeWithG'}:}

\begin{itemize}
\item{(Step 1) SC decoding: }\label{pfProp4:SCdecoding}
Successive cancellation decoding has been used in \cite{ArikanRatePolarization09} as a low-complexity decoding scheme for capacity-achieving polar codes. 
In the analysis of block error probability, however, considering a genie-aided successive cancellation decoding scheme \cite{moser2019information}, where the information of correct $U_1^{i-1}$ is available when the decoder is deciding on $\hat{U_i}$, based on the maximum likelihood estimator, often simplifies the analysis. In this case, note that $\hat{U}_i$ is a function of $Y_1^N$ and 
$U_{1}^{i-1}$. 
It is stated in \cite[Lemma 14.12]{moser2019information} that the probability of error of the original SC decoder and that of a genie-aided successive cancellation decoder are in fact equal. 

In terms of the generator matrix $G$ of a code, for the estimation of $\hat{U_i}$, the channel output can be thought of as the noisy version of the codeword obtained when $U_i^{n}$ is encoded with the matrix consisting of the bottom $(n-i+1)$ rows  of $G$. 

\item{(Step 2) Error probability bound for SC decoding: }\label{pfProp4:PeforSC}
As discussed in Section \ref{Prelim:polarExp}, for any $\beta < E(G_l)$, there is  a sequence of  capacity-achieving polar codes and with kernel $G_l$ such that $P_{e,SC} \leq 2^{-N^\beta}$ for all sufficiently large $n$.. 

\item{(Step 3) Splitting on polar code improves the code:}
Let the column weight threshold $w_{u.b.}$ of $G$ be given. Let $ (G_l^{\otimes n})^{sp}$ denote the $N\times N(1+\gamma)$ matrix generated by the splitting algorithm acting on $G_l^{\otimes n}$. 
We have the following the lemma whose proof will be provided later. 

\begin{lemma}\label{lemma:SCDforBlock}
	Under SC decoding, the probability of error of the polar code with kernel $G_l$ is no less than that of the code, with the same row indices, corresponding to  $ (G_l^{\otimes n})^{sp}$. 
\end{lemma}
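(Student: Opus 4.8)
The plan is to reduce the block-error comparison to a bit-channel comparison by means of the genie-aided equivalence already invoked in Step 1. Writing $\cA$ for the common set of information (non-frozen) row indices, the genie-aided SC decoder errs if and only if it errs on some $U_i$, $i\in\cA$, under the correct past $U_1^{i-1}$; hence the SC block-error probability of either code equals $\Pr\!\big[\bigcup_{i\in\cA}E_i\big]$, where $E_i$ is the event that the genie's maximum-likelihood decision for $U_i$ is wrong. By linearity of the code and the symmetry of $W$ we may take the genie's past to be the all-zero vector, so the relevant object is, for each $i$, the genie bit-channel mapping $U_i$ to the full output vector when the rows indexed $>i$ carry i.i.d.\ uniform (noise) bits and row $i$ carries $U_i$. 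The goal becomes to show that splitting worsens none of these bit-channels.

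It suffices to treat a single split, since the splitting algorithm acts one heavy column at a time and the general claim then follows by induction on the number of splits. Thus I would assume $G'$ differs from $G=G_l^{\otimes n}$ only in that one column of support $S$ is replaced by two columns of disjoint supports $S_1,S_2$ with $S_1\cup S_2=S$ and $S_1\cap S_2=\emptyset$; correspondingly one output symbol $W(x_S)$ with $x_S=\bigoplus_{l\in S}U_l$ is replaced by the pair $W(x_{S_1}),W(x_{S_2})$, where $x_S=x_{S_1}\oplus x_{S_2}$. The key claim to establish is that, for \emph{every} $i$, the pre-split genie bit-channel for $U_i$ is a stochastically degraded version of the post-split one.

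To prove this single-split claim I would use the representation of a BMS channel as a mixture of binary symmetric channels (equivalently, a public reliability together with a noisy sign) and compare in the stochastic-degradation order. When $i\notin S$, the split only refines observations of noise bits and supplies one extra output, which can be discarded, so the bit-channel cannot get worse. When $i\in S$, say $i\in S_1$, the post-split observation presents $U_i$ through $W\!\big(U_i\oplus\bigoplus_{l\in S_1,\,l>i}U_l\big)$, i.e.\ mixed with a strict subset of the future (noise) bits it was formerly XORed with, together with the extra observation $W(x_{S_2})$ of the remaining bits; informally, the split undoes part of the minus-type combining that SC must overcome when decoding the earlier bit $U_i$. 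The technical core is to show that this is an improvement in the degradation order \emph{despite} the future noise bits being shared across many columns, for which one argues that the two post-split outputs jointly retain all the information the single pre-split output carried about those shared bits, plus a cleaner look at $U_i$. I would stress that the comparison is genuinely tied to the SC decoding order and is \emph{not} a uniform channel upgrade: the single symbol's output actually becomes \emph{less} informative about $x_S$ after splitting (the XOR of two noisy copies is noisier than one copy), so any valid argument must use that SC processes indices in increasing order with future bits treated as noise.

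Finally, I would transfer the per-bit ordering to the block event. Since marginal ordering of the $E_i$ does not by itself order $\Pr\!\big[\bigcup_i E_i\big]$, I would realize the degradation physically on common randomness (again via the reliability representation) so that, simultaneously for all $i$, the post-split genie decoder errs only when the pre-split one does; equivalently, exhibit a decoder for $G'$ that emulates the pre-split genie-SC decisions bit by bit and whose error probability equals the pre-split SC block error, whence the optimal post-split SC decoder is at least as good. This gives $P_e(\text{orig})\ge P_e(\text{split})$ for one split, and the lemma after iterating over all split columns. I expect the main obstacle to be precisely the single-split bit-channel degradation with shared noise bits, together with the simultaneous coupling needed to pass from the per-bit comparison to the union, since the split is not a channel degradation in the ordinary code-independent sense and the whole argument hinges on the SC bit order.
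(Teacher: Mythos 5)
Your outer framing (genie-aided SC decoding, reduction to a single split, induction over splits) matches the paper's, and you are right that no code-independent channel degradation can work here. But the two steps that carry all the weight are left as intentions, and at least the first is doubtful as stated. Your key claim --- that for \emph{every} $i$ the pre-split genie bit-channel for $U_i$ is stochastically degraded with respect to the post-split one, in the original coordinates --- does not follow from the refinement intuition you give. A degrading channel is not allowed to depend on the input $U_i$; yet to synthesize the pre-split symbol $W(x_S)$ from $\bigl(W(x_{S_1}),W(x_{S_2})\bigr)$ and the remaining outputs one must, in effect, randomize according to the posterior of the shared future bits given those outputs, and that posterior depends on $U_i$ whenever other columns of the generator matrix contain row $i$ --- which is the generic case (this affects your ``easy'' case $i\notin S$ too, not only $i\in S$). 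Since exact simulation of $W(x_{S_1}\oplus x_{S_2})$ from the pair is impossible (as you note yourself), any valid degradation argument must exploit the linear and symmetric structure of the code in a way you have not specified, and it is not clear the per-bit degradation even holds at this level of generality. Second, even granting it, per-bit degradations do not automatically cohere into the single simultaneous coupling you need in order to compare $\Pr\bigl[\bigcup_i E_i\bigr]$ between the two codes: each bit's genie channel conditions on a different past and treats different bits as noise, and --- precisely because the full pre-split output vector is \emph{not} a degradation of the post-split one --- there is no master simulation from which all the per-bit couplings can be read off simultaneously. Both of these are exactly the points where your argument stops.

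The paper's proof avoids both obstacles with a device you do not use: invertible row operations. It first shows that SC error probability is unchanged when the generator matrix is multiplied on the left by an invertible matrix $E$ (via a bijection on message sequences and an emulation of the SC decoder), and it chooses $E$ so that the split column becomes $e_1+e_2$ while its two halves become $e_1$ and $e_2$, all other columns being transformed identically in both codes. After this normalization, the genie bit-channels of \emph{all} bits $i\geq 2$ are literally identical in the two codes, so the whole comparison collapses to the single bit $U_1$; for that bit the paper writes out the MAP error probabilities of the two bit-channels explicitly, keeping the joint dependence on the remaining outputs through the terms $B_{i,j}$, and reduces the comparison to the concrete inequality \eqref{eq:PeW'bound_Bij}, which it then verifies. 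Note that this establishes only an ordering of error probabilities for one bit, not a degradation for every bit --- a strictly weaker statement than the one you aim for, which is why it is provable by a finite computation. If you want to salvage your route, the row reduction is the missing ingredient: it eliminates the shared-noise/posterior dependence on $U_i$ for every bit except the first, after which a degradation-style argument for that single remaining bit (randomly flipping the clean observation $W(U_1)$ according to the posterior of $U_2$ given the other outputs, using the BMS involution) has a realistic chance of being made rigorous.
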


\item{(Step 4) Decoder for the code corresponding to  $G'$:}
	When the splitting algorithm is applied on $G$, we may assume that all the new columns resulting from a column in the $j$-th chunk of $G$ are placed in the $j$-th chunk, where a chunk is the set of $N$ columns using the same $G_l^{\otimes n}$. In addition, we may require that the splitting algorithm adopts the same {division principle}. For example, the first new column includes the $w_{u.b.}$ ones with smallest row indices in the split column, the second new column includes  another $w_{u.b.}$ ones with smallest row indices, excluding those used by the first new column, and so on.
	By the structure of $G$ and the above requirement on the splitting algorithm, the matrix $G'$ has the following form: 
	\begin{equation}
	G'=
	\begin{bmatrix}
	{(G_l^{\otimes n})}^{sp}       & \zero  & \dots & \zero \\
	\zero       & {(G_l^{\otimes n})}^{sp}   & \dots & \zero \\
	\vdots & \vdots & \ddots & \vdots \\
	\zero       & \zero  & \dots & {(G_l^{\otimes n})}^{sp} 
	\end{bmatrix}, \label{eq:G'matrixForm}
	\end{equation}
	where each $\zero$ represents an $N\times N(1+\gamma)$ zero matrix. 
	
    For the code corresponding to  $G'$, we can divide the information bits $u_1, \ldots, u_{K'}$ into $n'$ chunks, $(u_1, \ldots, u_K)$, $(u_{K+1}, \ldots, u_{2K}),$ $\ldots$,$ (u_{(n'-1)K+1}, \ldots ,u_{n'K}=u_{K'}   )$, written as 
    $\mathbf{u}_1, \mathbf{u}_2, \ldots, \mathbf{u}_{n'}$.
	Similarly, the coded bits $c_1, \ldots, c_{N'(1+\gamma)}$ can be divided into $n'$ chunks, $(c_1, \ldots, c_N(1+\gamma))$, $(c_{N(1+\gamma)+1}, \ldots, c_{2N(1+\gamma)})$, $\ldots, (c_{(n'-1)N(1+\gamma)+1}, \ldots, c_{n'N(1+\gamma)}= c_{N'(1+\gamma)})$, denoted by 
	$\mathbf{c}_{1},\mathbf{c}_{2}, \ldots, \mathbf{c}_{n'}$. 
    For the structure of $G'$ and memorylessness of the channels, the chunk $\mathbf{c}_{j}$ depends only on the information chunk $\mathbf{u}_{j}$, through a $K\times N(1+\gamma)$ submatrix of $ {(G_l^{\otimes n})}^{sp}$, and independent of other information chunks.

We now describe the decoder for the code corresponding to  $G'$. The decoder consists of $n'$ identical copies of SC decoders, where the $j$th decoder decides on $\hat{\mathbf{u}}_{j}$ using the channel output when $\mathbf{c}_{j}$ is transmitted.

\item{(Step 5) Error rate for the code corresponding to  $G'$:}
 Lemma \ref{lemma:SCDforBlock} shows that the block-wise error probability of the code corresponding to  $ (G_l^{\otimes n})^{sp}$ is smaller or equal to that of $ G_l^{\otimes n}$. For any $ \beta < E(G_l)$, the rate of the code, whose block error probability is bounded by $2^{-N^\beta} $, approaches $C$ as $n$ grows. 
    Using the union bound as in Lemma \ref{Lemma:newG_PeBound}, for any $ \beta < E(G_l)$, when $n'$ is chosen as in \eqref{eq:n'def}, the probability of error of the code corresponding to  $G'$ with the proposed decoder, denoted by $P_e^{SC}(G')$,  can be bounded by $P_e^{SC}(G') \leq 2^{-N^\beta} $ for all sufficiently large $n$.
 	\end{itemize}
\endproof

\textit{Proof of Lemma \ref{lemma:SCDforBlock}: }
We will show that, for an  $N_1\times N_2$ matrix $M$, when a column is split into two nonzero columns to form an  $N_1\times (N_2+1)$ matrix $M'$, the probability of error of the code with generator matrix $M$ is lower bounded by that with $M'$.
Without loss of generality, we may assume that the first column of $M$, denoted by $v_1$,  is split into two columns and become, say, the first two columns of $M'$, denoted by $v_1', v_1'' $. We may also assume that the first two elements of $v_1$ are both equal to $1$. 
The columns are associated by the following equation:\
\begin{equation}\label{eq:v1v1'}
v_1= 
\begin{pmatrix} 1 \\ 1 \\ \ast \\ \vdots \\ \ast \\ \star  \\ \vdots \\ \star \end{pmatrix} = 	
\begin{pmatrix} 1 \\ 0 \\  \ast \\ \vdots \\  \ast \\ 0  \\ \vdots \\ 0 \end{pmatrix} +
\begin{pmatrix} 0 \\ 1 \\ 0 \\ \vdots \\ 0 \\ \star  \\ \vdots \\ \star \end{pmatrix} = 	
v_1'+ v_1''.	 		
\end{equation}			
For $v_1'$, row operations can be applied to cancel out the nonzero entries in the positions occupied by $\ast$ using the $1$ at the first row. Similarly, the second row of $v_1''$ can be used in row operations to cancel out any nonzero entries in the positions occupied by $\star$. Let $E$ be the matrix of the concatenation of the row operations on $v_1'$ and $v_1''$. We have 
\begin{equation}\label{eq:v1v1'}
E v_1= 
\begin{pmatrix} 1 \\ 1 \\ 0 \\ \vdots \\ 0 \\ 0  \\ \vdots \\ 0 \end{pmatrix} = 	
\begin{pmatrix} 1 \\ 0 \\ 0 \\ \vdots \\ 0 \\ 0  \\ \vdots \\ 0 \end{pmatrix} +
\begin{pmatrix} 0 \\ 1 \\ 0 \\ \vdots \\ 0 \\ 0  \\ \vdots \\ 0 \end{pmatrix} = 	
E(v_1'+ v_1'') 		
\end{equation}
Since $E$ is composed of a sequence of invertible $N_1 \times N_1$ matrices, the row space of $M$ is the same as that of $E M$. 
Similarly, the row spaces of $M'$ and $E M'$ are equal. Hence the codes with GMs  $M$ and $E M$ are the same, and so are the codes with GMs $M'$ and $E M'$. 
First, we show the probabilities of error for codes with GMs $M$ and $EM$ are equal. 
For each sequence $\mathbf{u} =(u_1, \ldots, u_{N_1})$ there is a unique sequence  $\mathbf{ \tilde{u} } =(\tilde{u}_1, \ldots, \tilde{ u}_{N_1} )$ such that the codeword $\mathbf{x} = \mathbf{u}M$ is equal to $\tilde{\mathbf{x}} = \tilde{\mathbf{u}} (EM)$. Denote the bijective mapping by $f: \mathset{0, 1}^{N_1} \rightarrow \mathset{0, 1}^{N_1}$, $ f( \mathbf{u} ) = \mathbf{ \tilde{u} }$. 
Assume the channel outputs are denoted by $\mathbf{y} $ and $\mathbf{ \tilde{y} }$ when the inputs are $\mathbf{x} $ and $\mathbf{ \tilde{x} }$, respectively. The sequences $\mathbf{y} $ and $\mathbf{ \tilde{y} }$ are identically distributed. 
Assume now there is an SC decoder $\mathcal{D}_M $ that returns an estimate of $\mathbf{u} $ based on $\mathbf{y} $, with error probability $P_e(\mathbf{u} )$. For the code defined by GM being $EM$, one may construct a decoding algorithm as follows.  Given channel output  $\mathbf{ \tilde{y} }$, invoke the SC decoder $\mathcal{D}_M  $, which returns an estimate of $ \mathbf{u}  $ with error probability $P_e(\mathbf{u} )$. The algorithm then map  $\mathbf{u}  $ to $\mathbf{ \tilde{u} }$ using $f$. The above algorithm is an SC-based decoder with the same block error probability as that of $\mathcal{D}_M $. 
Conversely, one may show that, if there is an SC decoder for the code defined by GM being $EM$, an SC-based decoding algorithm for the code defined by $M$ can be found such that both have the same error probability.

We show in the following that the code with GM $E M'$, denoted by $C_{EM'} $, is at least as good as the code with GM $EM$, denoted by $C_{EM} $, in terms of  error probability  under SC decoding. 

Let  $U_1, \ldots, U_{N_1}$ be the information bits, $Y_1,Y_2$,$ \ldots$,$ Y_{N_2}$ be the channel output of $C_{EM} $ and $Y_1',Y_1'', Z_2 \ldots, Z_{N_2}$ be the channel output of $C_{EM'}$. 
Note that $(Y_2, \ldots, Y_{N_2})$ and  $(Z_2 \ldots, Z_{N_2})$ are identically distributed given the information bits. We assume SC decoding of the  $\hat{u_i}$'s based on the output in increasing order of the index $i$.

As mentioned in Step 1, to decide on $\hat{U}_i$, we may assume that the $(N_1-i+1)\times N_2$ submatrices of $EM$ and $EM'$ are used as GMs to encode information bits $U_i, \ldots, U_{N_1}$ and the codewords are transmitted through the channel. 
For $i=2$, the submatrices of $EM$ and $EM'$ are the same for the $(N_2-1)$ columns from the right. The first column of the submatrix of $EM'$ is a zero vector, and the first column of the submatrix of $EM$ is equal to the second column of that of $EM'$. 
For $i>2$, the submatrices of $EM$ and $EM'$ are the same for the $(N_2-1)$ columns from the right, and the columns corresponding to $Ev_1, Ev_1'$, and $Ev_1''$ are zero vectors.  
Therefore, the  bit-channel observed by $U_i$ is the same for both codes $C_{EM} $ and $C_{EM'}$ for $i\geq 2$.
Hence it suffices to show that probability of error for the estimate of $U_1$  with $C_{EM'} $  is at least as good as that with $C_{EM}.$

Assume the BMS channel $W_{BMS}$ has the output alphabet $\mathcal{Y} = \mathset{a_1, a_1', \ldots, a_p, a_p', s_1, \ldots, s_q}$, such that $W_{BMS}( a_i | 0) =  W_{BMS}( a_i' | 1) $, $W_{BMS}( a_i | 1) =  W_{BMS}( a_i' | 0) $ for all $1 \leq i \leq p$, and that $W_{BMS}( s_i | 0) =  W_{BMS}( s_i | 1) $ for all $1 \leq i \leq q$. 
Without loss of generality, assume $W_{BMS}( a_i | 0)  \geq W_{BMS}( a'_i | 0) $ for all $i$.
Denote by $\mathcal{Y}_a$, $\mathcal{Y}'_a$ and $\mathcal{Y}_s$ the sets 
$\mathcal{Y}_a = \mathset{a_1, \ldots, a_p}$, $\mathcal{Y}'_a = \mathset{a_1', \ldots, a_p'}$, and $\mathcal{Y}_s =\mathset{s_1, \ldots, s_q}$, and a mapping $g: \mathcal{Y} \rightarrow \mathcal{Y}$ defined by $g(a_i) = a'_i, g(a'_i) = a_i, g(s_j) = s_j$ for $1\leq i \leq p,  1 \leq j  \leq q$. 

Consider two bit-channels: 
$W$ with input $U_1$ and output $Y_1, \ldots, Y_{N_2}$, and 
$W'$ with input $U_1$ and output $Y_1',Y_1''$,$Z_1$,$\ldots$, $Z_{N_2}$. Let $\mathbf{y}$ and $\mathbf{y}_{-1}$ denote $\mathbf{y}_1^{N_2}$ and $\mathbf{y}_2^{N_2}$, respectively. 
The probability of error of the first channel 
\begin{align}
P_e(W)= 
\frac{1}{2}
\sum_{\mathbf{y}}
\min( &W(\mathbf{y}|U_1= 1), W(\mathbf{y}|U_1= 0)  )  \nonumber \\
=\frac{1}{2}\sum_{\mathbf{y}_{-1}} \sum_{y_1 \in \mathcal{Y}}
\min(&W(Y_1= y_1,Y_2^{N_2}=\mathbf{y}_{-1}|0),\nonumber
W(Y_1 = y_1, y_2^{N_2}=\mathbf{y}_{-1}|1)  ) \nonumber \\
=\frac{1}{2}\sum_{\mathbf{y}_{-1}} \Big( 
\sum_{y_1 \in \mathcal{Y}_a}
\min(&W(Y_1= y_1,Y_2^{N_2}=\mathbf{y}_{-1}|0),\nonumber
W(Y_1 = y_1, y_2^{N_2}=\mathbf{y}_{-1}|1)  ) \nonumber \\
+
\sum_{y_1 \in \mathcal{Y}'_a}
\min(&W(Y_1= y_1,Y_2^{N_2}=\mathbf{y}_{-1}|0),\nonumber
W(Y_1 = y_1, y_2^{N_2}=\mathbf{y}_{-1}|1)  ) \nonumber \\
+
 \sum_{y_1 \in \mathcal{Y}_s}
\min(&W(Y_1= y_1,Y_2^{N_2}=\mathbf{y}_{-1}|0),\nonumber
W(Y_1 = y_1, y_2^{N_2}=\mathbf{y}_{-1}|1)  )   \Big) \\
= \frac{1}{2}\sum_{\mathbf{y}_{-1}}  
\sum_{a \in \mathcal{Y}_a \cup \mathcal{Y}_s} \big[ \min(&A_{a\vert 0}, A_{a\vert 1}) + \min(A_{g(a)\vert 0}, A_{g(a)\vert 1}) \big] 
\label{eq:PeW_eq1}
\end{align} 
where $A_{a\vert j} \triangleq W(Y_1= a,Y_2^{N_2}=\mathbf{y}_{-1}| U_1 =j)$ for $a \in \mathcal{Y}_a \cup \mathcal{Y}'_a$, and 
$A_{s \vert j} \triangleq \frac12 W(Y_1= s,Y_2^{N_2}=\mathbf{y}_{-1}| U_1 =j)$ for $s \in \mathcal{Y}_s$. 
Defining     $B_{i,j}(\mathbf{y}_{-1})=\Pr{(Y_2^{N_2}=\mathbf{y}_{-1} | U_1= i, U_2= j)}$, for $i,j =0,1$,
the terms in \eqref{eq:PeW_eq1} can be written as ($\mathbf{y}_{-1}$ is omitted for simplicity)
\begin{align*}
    &A_{a|0}=\frac12 P_a B_{0,0} +\frac12 P_{g(a)}B_{0,1} , 
    &A_{a|1}=\frac12 P_{g(a)} B_{1,0} +\frac12 P_{a}B_{1,1} ,  \\
    &A_{g(a)  |0}=\frac12 P_{g(a)} B_{0,0} +\frac12 P_{a}B_{0,1} , 
    &A_{g(a) |1}=\frac12 P_a B_{1,0} +\frac12 P_{g(a)}B_{1,1}, \\
    &A_{s |0}=\frac12 P_s ( B_{0,0} +B_{0,1} ), 
    &A_{s |1}=\frac12 P_s (B_{1,0} +B_{1,1} ) , 
\end{align*}
where $P_a \triangleq W_{BMS}( a | 0) $ for $a \in  \mathcal{Y}_a \cup \mathcal{Y}'_a$, and $P_s \triangleq \frac12 W_{BMS}(s | 0) $ for $s \in \mathcal{Y}_s$.

Similarly, we can  write  $P_e(W')$ as 
\begin{align}
    P_e(W')
    =\frac{1}{2}\sum_{\mathbf{y}_{-1}} \Big( 
    \sum_{a \in \mathcal{Y}_a} \big[ \sum_{b \in \mathcal{Y}_a} 
      \min(&A_{a, b\vert 0}, A_{a, b\vert 1}) + \min(A_{g(a), b\vert 0}, A_{g(a), b\vert 1}) ] \nonumber\\+
      \min(&A_{a, g(b)\vert 0}, A_{a, g(b)\vert 1}) + \min(A_{g(a), g(b)\vert 0}, A_{g(a), g(b)\vert 1}) \nonumber \\+
    \sum_{s \in \mathcal{Y}_s} \min(&A_{a, s \vert 0}, A_{a, s \vert 1})+\min(A_{g(a), s \vert 0}, A_{g(a), s \vert 1}) \big] \nonumber \\
    + \sum_{t \in \mathcal{Y}_s} \big[ \sum_{b \in \mathcal{Y}_a} 
      \min(&A_{t, b\vert 0}, A_{t, b\vert 1}) + \min(A_{t, g(b)\vert 0}, A_{t, g(b)\vert 1}) + 
    \sum_{s \in \mathcal{Y}_s} \min(A_{t, s \vert 0}, A_{t, s \vert 1}) \big] \Big) \nonumber \\
    \begin{split}
    = \frac{1}{2}\sum_{\mathbf{y}_{-1}} \Big( 
    \sum_{a \in \mathcal{Y}_{as}} \big[ \sum_{b \in \mathcal{Y}_{as}} 
      \min(&A'_{a, b\vert 0}, A'_{a, b\vert 1}) + \min(A'_{g(a), b\vert 0}, A'_{g(a), b\vert 1}) ] \\+
      \min(&A'_{a, g(b)\vert 0}, A'_{a, g(b)\vert 1}) + \min(A'_{g(a), g(b)\vert 0}, A'_{g(a), g(b)\vert 1}) \big] ,
    \end{split}
    \label{eq:PeW'terms}
\end{align}
    where the set $\mathcal{Y}_{as} = \mathcal{Y}_a \cup \mathcal{Y}_s$, and the terms $A_{a, b\vert j} \triangleq W(Y'_1= a,Y''_1= b,Y_2^{N_2}=\mathbf{y}_{-1}| U_1 =j)$ for $a, b \in \mathcal{Y}$, $A'_{a, b\vert j} \triangleq A_{a, b\vert j} $ , $A'_{s, b\vert j} \triangleq \frac12 A_{s, b\vert j}$, $A'_{a, s'\vert j} \triangleq \frac12 A_{a, s' \vert j}$ and $A'_{s, s'\vert j} \triangleq \frac14 A_{s, s' \vert j}$, for $a, b \in \mathcal{Y}_a\cup \mathcal{Y}'_a$, $s, s' \in \mathcal{Y}_s$. 

Again, it is helpful to rewrite the $A'_{a, b\vert j}$ in terms of $P_a, P_{g(a)}, P_b, P_{g(b)}, $ and $B_{i,j}$, such as  $A'_{a, b\vert 0} = \frac12 P_a P_b B_{0,0} + \frac12 P_a P_{g(b)} B_{0,1}$ and   $A'_{a, s \vert 0} = \frac12 P_a P_s (B_{0,0}+ B_{0,1})$.

We show $P_e(W') \leq P_e(W)$ by showing that for each $\mathbf{y}_{-1}$ and $a \in \mathcal{Y}_{as}$, the summand in \eqref{eq:PeW'terms} are smaller or equal to that in \eqref{eq:PeW_eq1}.
We may simplify the problem by noting that, given $\mathbf{y}_{-1}$ and $a \in \mathcal{Y}_{as}$, if for each $b$ the sum of the four minimum terms in \eqref{eq:PeW'terms} can be upper bounded by 
\[
 (P_b + P_{g(b)} ) \left[ \min(A_{a\vert 0}, A_{a\vert 1}) + \min(A_{g(a)\vert 0}, A_{g(a)\vert 1}) \right],
\]
the sum over $b \in \mathcal{Y}_{as}$ may be considered as a weighted sum, which is then bounded by $\min(A_{a\vert 0}, A_{a\vert 1}) + \min(A_{g(a)\vert 0}, A_{g(a)\vert 1})$. This would yield the inequality we want to show. 
Hence we may assume $P_b + P_{g(b}) =1$ and show the following inequality
\begin{equation}
    \begin{split}
       \min(A'_{a, b\vert 0}, A'_{a, b\vert 1}) + \min(A'_{g(a), b\vert 0}, A'_{g(a), b\vert 1})  +
      \min(A'_{a, g(b)\vert 0}, A'_{a, g(b)\vert 1}) + \min(A'_{g(a), g(b)\vert 0}, A'_{g(a), g(b)\vert 1}) \\ 
      \leq 
      \min(A_{a\vert 0}, A_{a\vert 1}) + \min(A_{g(a)\vert 0}, A_{g(a)\vert 1}). 
\end{split}\label{eq:PeW'bound_Aexpression}    
\end{equation}
We can normalize both sides of \eqref{eq:PeW'bound_Aexpression} by $1/ (P_a + P_{g(a)})$, and the goal is to show the following inequality
\begin{equation}
    \begin{split}
    &\min\left( P_a( P_b B_{0,0} +(1- P_b) B_{0,1}), (1- P_a) ( P_b B_{1,0} +(1- P_b) B_{1,1}) \right) + \\
    &\min\left( P_a( P_b B_{0,1} +(1- P_b) B_{0,0}), (1- P_a) ( P_b B_{1,1} +(1- P_b) B_{1,0}) \right) + \\ 
    &\min\left( (1- P_a)( P_b B_{0,0} +(1- P_b) B_{0,1}),  P_a ( P_b B_{1,0} +(1- P_b) B_{1,1}) \right) + \\
    &\min\left( (1- P_a) ( P_b B_{0,1} +(1- P_b) B_{0,0}),  P_a ( P_b B_{1,1} +(1- P_b) B_{1,0}) \right)  \\
    \leq 
    &\min\left( P_a B_{0,0} + (1- P_a) B_{0,1}, (1- P_a) B_{1,0} + P_a B_{1,1} \right) + \\
    &\min\left( (1- P_a) B_{0,0} +  P_a B_{0,1}, P_a B_{1,0} + (1- P_a) B_{1,1} \right), 
\end{split}\label{eq:PeW'bound_Bij}
\end{equation}
where $P_a, P_b \in [0.5, 1]$, and $B_{i,j} \in [0, 1]$. The inequality \eqref{eq:PeW'bound_Bij} can be numerically verified.

\subsection{Proofs for Subsection \ref{subsection:mainL2}}

\textit{Proof of Proposition \ref{prop:wMCloginN}: }
The geometric mean column weight of $G$ equals to that of  $G_2^{\otimes n}$, which is $2^{n/2}= \sqrt{N}$. The sparsity benchmark, as specified in equation \eqref{eq:logN'formula}, is $\log(N') =  N^{(1-\delta)/2} +\log N= (N^\frac{1}{2})^{(1-\delta) +o(1)} $ for sufficiently large $n$.  
Thus, for large $n$, \vspace{-2mm}
\begin{align*}
w_{GM}(n, G_2) &=  \sqrt{N}  = [\log(N')]^{\frac{1}{1-\delta+o(1)}}  = [\log(N')]^{1+\delta'+o(1)},
\end{align*}  where $\delta' \triangleq \frac{\delta}{1-\delta}$.
Note that $\delta'\rightarrow 0$ as $\delta \rightarrow 0$, and $\delta>1- \frac{\beta}{E(G_2)}$ can be made arbitrarily small by choosing $\beta$ close enough to $E(G_2) =\frac{1}{2}$. Hence, the proposition holds for any $\delta' >0.$ 
\endproof
\vspace{2mm}

\textit{Proof of Lemma \ref{lem:wMClogN_most}:}
Note that the  weights of the columns of $G$ have the same distribution as the variable $2^{X_1+\ldots +X_n}$, where $X_i \sim Ber(1/2), i =1, 2, \ldots, n$, are i.i.d.
By the strong law of large numbers, the logarithm of the column weights concentrate around that of the geometric mean column weight. 
To be more precise, the strong law of large numbers implies 
$\lim_{n\rightarrow \infty}\frac1n \sum_{i=1}^n X_i = \frac12 $ almost surely, or equivalently, $\frac{\sum_{i=1}^n X_i - \frac{n}{2}}{n}  = o(1)$ for all sufficiently large $n$ with probability $1$. 
Hence 
\begin{align*}
    \frac{2^{\sum_{i=1}^n X_i}}{2^{n/2}} = 2^{n \, o(1)} = (2^{n/2})^{o(1)} = \left([\log(N')]^{1+\delta'+o(1)} \right)^{o(1)} = [\log(N')]^{o(1)} \; \mbox{ w.p.} 1,
\end{align*} where the second-last equality is due to Proposition\,\ref{prop:wMCloginN}. Now, we have
\[
2^{\sum_{i=1}^n X_i} = [\log(N')]^{1+\delta'+o(1)}\times [\log(N')]^{o(1)} \leq 
[\log(N')]^{1+\delta''} \; \mbox{ w.p.} 1,
\] for all sufficiently large $n$.
Thus  the ratio of columns with weights exceeding $[\log(N')]^{1+\delta''}$ is vanishing as $n$ grows large.
\endproof

\vspace{2mm}
\textit{Proof of Lemma \ref{lemma:RinXn}:}
Let $W$ denote the column weight of a randomly selected column of $G_2^{\otimes n}$. Then we have 
\begin{align*}    
\gamma    =& \left(\Pr ( 2 w_{u.b.} \geq W >w_{u.b.} )\times 1 \right) 
    + \left(\Pr ( 3 w_{u.b.} \geq W >2 \cdot w_{u.b.} )\times 2\right) \\
    &+\ldots
     +\left(\Pr( 2^n \geq W > k_{max} \cdot w_{u.b.})\times k_{max}\right)  \nonumber\\
    =& \Pr (W >w_{u.b.}) +\Pr (W >2 w_{u.b.}) 
    + \ldots +\Pr (W > k_{max}\cdot w_{u.b.}). \nonumber
\end{align*}
 Note that $\log W$ has the same distribution as $X(n) = X_1+X_2+ \ldots +X_n$, which is a Binomial$(n, \frac{1}{2})$ random variable. 
Hence, \begin{align*}
\gamma =& \Pr(2^{X(n)}  > w_{u.b.} )+ \ldots 
+\Pr (2^{X(n)} > k_{max}\cdot w_{u.b.}) \nonumber \\
= &\Pr(X(n)  > \log{(w_{u.b.}  )} )+ \ldots +\Pr (X(n) > \log{\left( k_{max}\cdot w_{u.b.} \right)}) \\
=&\sum_{k=1}^{k_{max}} \Pr(X(n)  > \log{(k\cdot w_{u.b.}  )} ).
\end{align*}
\endproof

\textit{Proof of Lemma \ref{lemma:RasSum_ai}:}
Recall that $X(n) \triangleq \sum_{i=1}^n X_i$ is an integer-valued random variable. Then the terms in \eqref{eq:RinXn} are grouped as 
\begin{align}
\gamma	
     =&\Pr{( X(n) >n_{lub})} +	\Pr{( X(n) >\log{2}+n_{lub})} +\Pr{( X(n) >\log{3}+n_{lub})} \nonumber \\
     &+\Pr{( X(n) >\log{4}+n_{lub})} +\ldots +\Pr{( X(n) >n-  n_{lub}+ n_{lub}) } \label{eq:R_lemmaPf_1}\\
= &(\Pr{( X(n) >n_{lub})} \times 2^0) +(\Pr{( X(n) >1+n_{lub})}\times 2^1) + (\Pr{( X(n) >2+n_{lub})}\times 2^2) +\ldots \nonumber \\
&+(\Pr{( X(n) >n-1)}\times 2^{(n-n_{lub}-1)}) \label{eq:R_lemmaPf_2}\\
 = & (\Pr{( X(n) \geq 1+n_{lub})} \times 2^0) +(\Pr{( X(n) \geq 2+n_{lub})}\times 2^1) + (\Pr{( X(n) \geq 3+n_{lub})}\times 2^2) +\ldots \nonumber \\
 &+ (\Pr{( X(n) \geq (n-n_{lub})+n_{lub})}\times 2^{(n-n_{lub}-1)})\nonumber\\
= & a_0+a_1+a_2 +\ldots +a_{n-n_{lub}-1},   \label{eq:weightCount_3}
\end{align} where \eqref{eq:R_lemmaPf_2} holds by noting that $X(n) > \log(j) +n_{lub}$ for some $2^m \leq j < 2^{m+1}$ if and only if $X(n) > m +n_{lub}$ for integer $m=  1,2, \ldots$.
\endproof

\vspace{2mm}
\textit{Proof of Lemma \ref{lemma:aiAsymp}:}
Using Sanov's Theorem (\cite[Thm 11.4.1]{cover2012elements}), the $a_i $ term  in Lemma \ref{lemma:RasSum_ai} is bounded as follows\footnote{Remark: In fact, even the polynomial term in the upper bound can be dropped since the set of distribution $E$, as defined in \cite{cover2012elements}, is convex.}: 
	 \begin{equation*}
	\frac{1}{(n+1)^2} 2^{-n D(P_i^*  || Q  )}\cdot 2^i \leq a_i \leq (n+1)^2  2^{-n D(  P_i^*  || Q   )}\cdot 2^i, 
	\end{equation*} where $P_i^*$ and $Q$ are the Ber$( \frac{i+1+n_{lub}}{n})$ and Ber$(\frac{1}{2})$ distributions, respectively.
	For $0\leq i \leq n-n_{lub}-1 $, we have the equation
	\[a_i \doteq 2^{-n D(P_i^*  || Q  )+i+1} = 2^{n (- D(P_i^*  || Q  ) +\alpha_i)  }.\]
Also, by \eqref{eq:wubDef}, 
	\[
	n_{lub}=
	\log{(w_{u.b.})}= \log{{N}^{ \frac{1}{2}+\epsilon' }} 
	= (\frac{1}{2}+\epsilon')n. \]
	
Hence, $P_i^*$ can be written as the Ber$(\frac{1}{2}+\epsilon'+\alpha_i)$ distribution. 
\endproof
\vspace{3mm}

\textit{Proof of Proposition \ref{Thm:ratelossAndEpsi'}:}
The ratio $\gamma$ is bounded by \begin{align}
\max_{i}a_i \leq \gamma \leq n\cdot \max_{i}a_i.
\label{eq:Rbounds}
\end{align}
Therefore, the  rate  loss $\gamma$ of the code corresponding to $G'$ instead of $G$ either approaches 0, when $\max_i \lambda(\epsilon', \alpha_i)<0, $ or infinity, when there's one $i$ such that $ \lambda(\epsilon', \alpha_i)>0$. 
The exponent can be analyzed as follows:
 	\begin{align*}
 	&\lambda(\epsilon', \alpha_i)= \alpha_i-(\frac{1}{2}+\epsilon'+\alpha_i )\cdot \log{(2 (\frac{1}{2}+\epsilon'+\alpha_i ) )} - (\frac{1}{2}-\epsilon'-\alpha_i )\cdot 
 	\log{(2 (\frac{1}{2}-\epsilon'-\alpha_i ) ) }\\
    &= \alpha_i- \frac{1}{2}\log{\left[  (1+2\epsilon'+2\alpha_i)(1-2\epsilon'-2\alpha_i) \right]} -(\epsilon'+\alpha_i)\log{\frac{1+2\epsilon'+2\alpha_i}{1-2\epsilon'-2\alpha_i} }.
 	\end{align*}
 	Consider $\lambda(\epsilon', \alpha)$ as a function of $\alpha$ over the interval $[\frac1n, \frac12 -\epsilon']$. Then its first and second derivatives with respect to $\alpha$ are as follows:
 	\begin{align}
 	&\frac{\partial \lambda(\epsilon', \alpha)}{\partial \alpha} = 1-\log{\frac{1+2\epsilon'+2\alpha}{1-2\epsilon'-2\alpha} }, \label{eq:lambdaDeri}
 	\end{align}
\vspace{-5mm} 	
    \begin{align}
 	\frac{\partial^2 \lambda(\epsilon', \alpha)}{\partial^2 \alpha}\nonumber = -\frac{\partial \log ( 1+2\epsilon'+2\alpha)}{\partial \alpha} +\frac{\partial \log( 1-2\epsilon'-2\alpha)}{\partial \alpha}
 	= -\frac{2}{\ln{2}}\big( \frac{1}{ 1+2\epsilon'+2\alpha}+ \frac{1}{ 1-2\epsilon'-2\alpha} \big) <0,
 	\end{align} for any $\frac{1}{n} \leq \alpha \leq \frac{1}{2}-\epsilon'$. Thus, for any fixed $\epsilon'$, $\lambda(\epsilon', \alpha)$ is a concave function of $\alpha$ and has maximum when \[\frac{\partial \lambda(\epsilon', \alpha)}{\partial \alpha} =0 . \]
 	From \eqref{eq:lambdaDeri}, the above equality holds if and only if $\alpha= \frac{1}{6}-\epsilon'$ and attains the maximum value when
 	\begin{equation}
 	\max_{\alpha\in [\frac1n, \frac12 -\epsilon']}\lambda(\epsilon', \alpha) =
 	\lambda(\epsilon', \frac{1}{6}-\epsilon')= \frac{1}{6}-\frac{2}{3}\log{\frac{4}{3}}-\frac{1}{3}\log{\frac{2}{3}} -\epsilon' = \epsilon^*- \epsilon'.
 	\label{maxLambda}
 	\end{equation}
 	If $\epsilon'> \epsilon^* $, $ \lambda(\epsilon', \alpha_i) \leq \max_{\alpha}\lambda(\epsilon', \alpha) = \epsilon^*- \epsilon' <0$ for any $0\leq i \leq n-n_{lub}-1 $. Hence, by \eqref{eq:aiApprox} and \eqref{eq:Rbounds}, $\gamma$, which is upper bounded by $ n (n+1)^2 \hspace{1mm}  2^{n(\epsilon^*- \epsilon')} \doteq 2^{n(\epsilon^*- \epsilon')} $, approaches $0$ exponentially fast.

 	Alternatively, if $\epsilon' < \epsilon^* $, by the continuity of $\lambda(\epsilon', \alpha)$ in $\alpha$, for sufficiently large $n$, there is $\alpha_i$ for some $0\leq i \leq n-n_{lub}-1$ such that $\lambda(\epsilon', \alpha_i) \geq \frac{\max_{\alpha}\lambda(\epsilon', \alpha)}{2} >0$. Then
 	$\gamma$, which is bounded from below by $\frac{1}{(n+1)^2} 2^{n \lambda(\epsilon', \alpha_i)} \doteq 2^{n \lambda(\epsilon', \alpha_i)}$ in this case, approaches infinity exponentially fast.
 \endproof
\vspace{2mm}

\textit{Proof of Corollary \ref{Coro:ratelosAndEpsi}:}
From \eqref{eq:epsilon'def}, we have
\begin{equation*}
    \epsilon'= (1+\epsilon ) \left( \frac{1-\delta}{2}  +o(1) \right)-\frac{1}{2} \,=\, \frac{\epsilon}{2}-\frac{\delta}{2}(1+\epsilon)+o(1).
\end{equation*} 
Since $\delta>0$ can be chosen arbitrarily small, the conditions in Proposition \ref{Thm:ratelossAndEpsi'} are expressed in terms of $\epsilon$ as follows:
\begin{align*}
    \gamma\to 0 \mbox{ exponentially fast  } \iff  \epsilon' >\epsilon^* \iff \epsilon > 2\epsilon^* \\
    \gamma\to \infty \mbox{ exponentially fast } \iff  \epsilon' < \epsilon^* \iff \epsilon < 2\epsilon^* 
\end{align*} \endproof

\textit{Proof of Theorem \ref{Coro:LDGMwithUniformBound}:}
Since the code corresponding to  $G'$ uses a submatrix of $G'$ as its GM, the column weights of this submatrix are upper bounded by $w_{u.b.}$ as well. By Proposition \ref{prop:codeWithG'} the probability of error of this code is upper bounded by that of the code corresponding to  $G$.
From Lemma \ref{Lemma:newG_PeBound}, the code corresponding to  $G$ is capacity achieving, and, from Corollary \ref{Coro:ratelosAndEpsi}, the rate loss $\gamma$ goes to $0$ as $n$ grows large. 
\endproof

\subsection{Proofs for Subsection \ref{sub:DRSalg}}

In order to understand the DRS algorithm's effect on $G= G_2^{\otimes n} $, we first study how the order of two special Kronecker product operations affects the number of output vectors.
We present the following Lemmas \ref{lem:LR=RL} and \ref{lem:DRSforTwoVectors} toward the proof of Lemma \ref{lem:rateLossIndepOfOrder}.

\begin{lemma} \label{lem:LR=RL}
Let a column vector $v$ and a column weight threshold $w_{u.b.}$ be given. Then the outputs of the DRS algorithm for
\begin{align*}
(v \otimes     [1,1]^t )
    \otimes [0,1]^t
\; \mbox{and} \;
(v \otimes     [0,1]^t )
    \otimes  [1,1]^t
\end{align*}    
contain the same number of vectors.
\end{lemma}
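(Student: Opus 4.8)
The plan is to reduce the statement to a single clean recursion for the \emph{number} of vectors produced by \textproc{DRS-Split}, and then induct on the length of $v$. Write $n_{\mathrm{out}}(\mathbf{x})$ for the number of vectors returned by \textproc{DRS-Split}$(w_{u.b.}, \mathbf{x})$. Inspecting the three branches taken when $w_H(\mathbf{x}) > w_{u.b.}$ — namely $\mathbf{x}_h = \zero$, $\mathbf{x}_t = \zero$, and neither zero — one checks that in every case the returned family is obtained by zero-padding the outputs $Y_h$ and $Y_t$ and taking their union, and that these pieces have pairwise disjoint supports (each $1$ of $\mathbf{x}$ lands in exactly one piece, so the outputs form a partition of the support of $\mathbf{x}$ into nonzero blocks of weight at most $w_{u.b.}$). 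Hence the three branches all contribute additively and, uniformly,
\begin{equation*}
n_{\mathrm{out}}(\mathbf{x}) = \begin{cases} 0, & w_H(\mathbf{x}) = 0, \\ 1, & 0 < w_H(\mathbf{x}) \le w_{u.b.}, \\ n_{\mathrm{out}}(\mathbf{x}_h) + n_{\mathrm{out}}(\mathbf{x}_t), & w_H(\mathbf{x}) > w_{u.b.}. \end{cases}
\end{equation*}
This single recursion is what the rest of the argument runs on.

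Next I would use associativity of the Kronecker product to write the two inputs in a symmetric form. Since $[1,1]^t \otimes [0,1]^t = (0,1,0,1)^t$ and $[0,1]^t \otimes [1,1]^t = (0,0,1,1)^t$, the two vectors are $P \triangleq v \otimes (0,1,0,1)^t$ and $Q \triangleq v \otimes (0,0,1,1)^t$. Both are arrays of length-$4$ blocks indexed by the entries of $v$, block $i$ being $(0,v_i,0,v_i)$ for $P$ and $(0,0,v_i,v_i)$ for $Q$; in particular $w_H(P) = w_H(Q) = 2\,w_H(v)$, so $P$ and $Q$ are simultaneously light or heavy. Set $f(v) \triangleq n_{\mathrm{out}}(P)$ and $g(v) \triangleq n_{\mathrm{out}}(Q)$; the goal is $f(v) = g(v)$, which I would prove by induction on $\log_2(\mathrm{length}(v))$.

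The inductive step rests on the observation that halving commutes with this block structure: writing $m = \mathrm{length}(v) \ge 2$, the first $2m$ entries of $P$ are exactly its first $m/2$ blocks, so $P_h = v_h \otimes (0,1,0,1)^t$ and $P_t = v_t \otimes (0,1,0,1)^t$ with $v_h, v_t$ the halves of $v$, and likewise $Q_h = v_h \otimes (0,0,1,1)^t$, $Q_t = v_t \otimes (0,0,1,1)^t$. When $P, Q$ are light the recursion gives $f(v) = g(v) \in \{0,1\}$ at once; when they are heavy it gives $f(v) = f(v_h) + f(v_t)$ and $g(v) = g(v_h) + g(v_t)$, and the induction hypothesis closes the step. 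It then remains to treat the base case $m = 1$: for $v = (0)$ both sides vanish, and for $v = (1)$ the only nontrivial threshold is $w_{u.b.} = 1$, where $P = (0,1,0,1)^t$ splits as $(0,1)\,|\,(0,1)$ giving $f = 2$, while $Q = (0,0,1,1)^t$ splits as $(0,0)\,|\,(1,1)$ with the heavy right half splitting further into $(1)\,|\,(1)$, giving $g = 0 + 2 = 2$. The main obstacle, and the only place any care is required, is establishing the uniform recursion for $n_{\mathrm{out}}$ across the algorithm's three heavy-case branches and confirming that halving respects the Kronecker block structure; once these are in hand the induction is routine and the single threshold-$1$ base computation finishes the proof.
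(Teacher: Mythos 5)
Your proof is correct, but it takes a genuinely different route from the paper's, and the divergence traces back to how the two of you read the displayed Kronecker products. You use the standard (textbook) order, under which $(v\otimes[1,1]^t)\otimes[0,1]^t = v\otimes(0,1,0,1)^t$ and $(v\otimes[0,1]^t)\otimes[1,1]^t = v\otimes(0,0,1,1)^t$: the appended factors live \emph{inside} each length-$4$ block, the two inputs differ only at the innermost level, and so you are forced to (i) isolate the uniform additive recursion $n_{\mathrm{out}}(\mathbf{x})=n_{\mathrm{out}}(\mathbf{x}_h)+n_{\mathrm{out}}(\mathbf{x}_t)$ for heavy $\mathbf{x}$, (ii) induct through every level of $v$, and (iii) settle the $v=(1)$, $w_{u.b.}=1$ base case by hand; all three steps check out. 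The paper instead reads $v\otimes[1,1]^t$ as $(v^t,v^t)^t$ and $v\otimes[0,1]^t$ as $(\zero^t,v^t)^t$ (i.e., $[1,1]^t\otimes v$ and $[0,1]^t\otimes v$ in your order), so its two vectors are $(\zero^t,\zero^t,v^t,v^t)^t$ and $(\zero^t,v^t,\zero^t,v^t)^t$: the appended factors form the \emph{outermost} block structure, the first two halving steps of \textproc{DRS-Split} peel them off directly, and in the heavy case both counts equal $2\,n_{DRS}(v)$ (both equal $1$ in the light case) with no induction at all. Note that under your reading the paper's key step ``for $v_{LR}$ the algorithm observes $\mathbf{x}_h=\zero$'' would be false, and under the paper's reading your block decomposition would be; each proof is valid only for its own convention, both versions of the statement are true, and the paper's outer-factor reading is the one used consistently in the companion Lemmas \ref{lem:DRSforTwoVectors} and \ref{lem:rateLossIndepOfOrder}. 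What your route buys is robustness and reusability: the additive recursion for $n_{\mathrm{out}}$, with the disjoint-support justification the paper leaves implicit, is established once and would equally drive a proof of Lemma \ref{lem:DRSforTwoVectors}; the cost is the extra induction and base-case work that the paper's choice of convention makes unnecessary.
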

\proof{} 
The input vectors can be denoted by:
\begin{align*}
(v \otimes     [1,1]^t)
    \otimes [0,1]^t
    \equiv v_{LR}
\; \mbox{ and } \;
(v \otimes     [0,1]^t )
    \otimes [1,1]^t
    \equiv v_{RL}.
\end{align*}    

We note that $ 2 w_H(v) =w_H(v_{LR}) = w_H(v_{RL}) $, and prove the lemma in two  cases:
\begin{enumerate}
    \item $2 w_H(v) \leq w_{u.b.}$:
        In this case, the algorithm will not split either $ v_{LR}$ or $v_{RL}$. Both outputs contain exactly one vector.
    
    \item $2 w_H(v) > w_{u.b.}$:
    Let $n_{DRS}(v)$ denote the number of column vectors the DRS algorithm returns when it is applied to $v$.
    
    For $ v_{LR}$, the DRS algorithm observes $\mathbf{x}_h=\zero$, hence the number of output vectors is the same as the size of \textproc{DRS-Split}($w_{u.b.}, (v^t, v^t)^t $) (see Section \ref{sub:DRSalg}). 
    With $2 w_H(v) > w_{u.b.}$, the size of \textproc{DRS-Split}($(v^t, v^t)^t $) is the sum of the sizes of  $Y_h =$ \textproc{DRS-Split}($w_{u.b.}, \mathbf{x}_h=v$) and $Y_t =$ \textproc{DRS-Split}($w_{u.b.}, \mathbf{x}_t=v$). By assumption,  $\abs{Y_h}=\abs{Y_t}= n_{DRS}(v)$, giving     $n_{DRS}(v_{LR})=2\, n_{DRS}(v) $.

    For $ v_{RL}$, the number of vectors in the DRS algorithm output is the sum of the sizes of two sets $Y_h =$ \textproc{DRS-Split}($w_{u.b.}, \mathbf{x}_h=(\zero^t, v^t)^t $) and $Y_t =$ \textproc{DRS-Split}($w_{u.b.}, \mathbf{x}_t=(\zero^t, v^t)^t$).
    It is easy to see that $\abs{Y_h}=\abs{Y_t}= \abs{
     \textproc{DRS-Split}(w_{u.b.}, v)
    }$, which equals $n_{DRS}(v)$.
    Thus, $n_{DRS}(v_{RL})=2\, n_{DRS}(v) $.

\end{enumerate}
\endproof

The next lemma shows the effect of the DRS algorithm from a different perspective. If there are two vectors with the same column weights, and numbers of vectors of the DRS  algorithm outputs are identical when they are the inputs, the properties will be preserved when they undergo some basic Kronecker product operations. 

\begin{lemma}\label{lem:DRSforTwoVectors}
    Let $u_1$ and $u_2$ be two vectors with equal Hamming weights. Assume, for a given $w_{u.b.}$, the DRS algorithm splits $u_1$ and $u_2$ into the same number of vectors. Then the DRS algorithm also returns the same number of vectors for $u_1 \otimes [1,1]^t$ and $u_2\otimes [1,1]^t$, as well as for $u_1 \otimes [0,1]^t$ and $u_2\otimes [0,1]^t$. 
\end{lemma}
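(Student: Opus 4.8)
The plan is to reduce the statement to a single clean claim: the number of vectors returned by the DRS algorithm when one more Kronecker factor is appended is a function only of the Hamming weight $w_H(u)$ and the output count $n_{DRS}(u)$ of the original vector, where I write $n_{DRS}(\mathbf{x})$ for the size of \textproc{DRS-Split}$(w_{u.b.}, \mathbf{x})$. First I would record the recursive description of $n_{DRS}$ read directly off the algorithm: $n_{DRS}(\mathbf{x}) = 0$ if $w_H(\mathbf{x}) = 0$; $n_{DRS}(\mathbf{x}) = 1$ if $0 < w_H(\mathbf{x}) \le w_{u.b.}$; and if $w_H(\mathbf{x}) > w_{u.b.}$ then, splitting $\mathbf{x}$ into its halves $\mathbf{x}_h, \mathbf{x}_t$, one has $n_{DRS}(\mathbf{x}) = n_{DRS}(\mathbf{x}_h) + n_{DRS}(\mathbf{x}_t)$. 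The three branches of the algorithm (first half zero, second half zero, neither zero) all collapse to this one formula once we adopt the convention $n_{DRS}(\mathbf{0}) = 0$, since the zero-half branch simply returns the recursion on the nonzero half.

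Next I would use the same convention as in the proof of Lemma \ref{lem:LR=RL}, under which the appended factor governs the top-level split, i.e.\ $u \otimes [1,1]^t$ has both halves equal to $u$ (the $(v^t,v^t)^t$ form used there) and $u \otimes [0,1]^t$ has zero first half and second half equal to $u$. The $[0,1]^t$ case is then immediate: since $w_H(u \otimes [0,1]^t) = w_H(u)$, when $w_H(u) \le w_{u.b.}$ both $n_{DRS}(u \otimes [0,1]^t)$ and $n_{DRS}(u)$ take the same base-case value, and when $w_H(u) > w_{u.b.}$ the top-level split has zero first half, so the recursion gives $n_{DRS}(u \otimes [0,1]^t) = n_{DRS}(u)$. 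Hence $n_{DRS}(u \otimes [0,1]^t) = n_{DRS}(u)$ unconditionally.

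For the $[1,1]^t$ case, observe that $w_H(u \otimes [1,1]^t) = 2 w_H(u)$, so whether the algorithm splits at the top level is decided purely by whether $2 w_H(u)$ exceeds $w_{u.b.}$. If $2 w_H(u) \le w_{u.b.}$, then $n_{DRS}(u \otimes [1,1]^t)$ is $1$ or $0$ according as $u$ is nonzero or zero, which coincides with $n_{DRS}(u)$ in that regime; if $2 w_H(u) > w_{u.b.}$, then $w_H(u) > 0$, both halves equal the nonzero vector $u$, and the recursion yields $n_{DRS}(u \otimes [1,1]^t) = 2\,n_{DRS}(u)$. In each subcase $w_H(u)$ selects the subcase and $n_{DRS}(u)$ fixes the value, so $n_{DRS}(u \otimes [1,1]^t)$ depends only on the pair $(w_H(u), n_{DRS}(u))$. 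The lemma follows at once: since $w_H(u_1) = w_H(u_2)$ and $n_{DRS}(u_1) = n_{DRS}(u_2)$ by hypothesis, the expressions above evaluate identically for $u_1$ and $u_2$, giving $n_{DRS}(u_1 \otimes [1,1]^t) = n_{DRS}(u_2 \otimes [1,1]^t)$ and $n_{DRS}(u_1 \otimes [0,1]^t) = n_{DRS}(u_2 \otimes [0,1]^t)$.

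The delicate points, and where I would be most careful, are fixing the Kronecker-factor convention so that an appended factor acts as the outermost (top-level) split, consistent with the use of $(v^t,v^t)^t$ in the proof of Lemma \ref{lem:LR=RL}, and tracking the threshold behaviour: confirming that the weight doubling induced by $[1,1]^t$ makes the split-or-not decision hinge only on $2 w_H(u)$ versus $w_{u.b.}$, and that the zero/nonzero base cases are absorbed uniformly through the convention $n_{DRS}(\mathbf{0}) = 0$. Once these are in place the argument needs no further induction, as the recursive formula for $n_{DRS}$ already supplies everything.
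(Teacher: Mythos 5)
Your proof is correct and takes essentially the same route as the paper's: a case split on whether the top-level weight ($2w_H(u)$ for the $[1,1]^t$ factor, $w_H(u)$ for the $[0,1]^t$ factor) exceeds $w_{u.b.}$, combined with the observation that the output size of the DRS recursion is the sum of the output sizes for the two halves, so that the count for $u_i\otimes[1,1]^t$ is $n_{DRS}(u_i)$ or $2\,n_{DRS}(u_i)$ depending only on $(w_H(u_i), n_{DRS}(u_i))$. You are somewhat more complete than the paper, which leaves the $[0,1]^t$ case as ``similarly'': your explicit recursion with the convention $n_{DRS}(\mathbf{0})=0$ and the resulting identity $n_{DRS}(u\otimes[0,1]^t)=n_{DRS}(u)$ tighten the argument without changing its substance.
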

\proof
We first discuss the case when $u_1 \otimes [1,1]^t$ and $u_2\otimes [1,1]^t$ are processed by the DRS algorithm. 
If $2w_H(u_1)= 2w_H(u_2) \leq w_{u.b.}$, no splitting is done. 
If  $2w_H(u_1)= 2w_H(u_2) > w_{u.b.}$,  the size of the DRS algorithm output for the input $u_1 \otimes [1,1]^t$ is the sum of the sizes of $Y_h =$ \textproc{DRS-Split}($w_{u.b.}, \mathbf{x}_h=u_1$) and $Y_t =$ \textproc{DRS-Split}($w_{u.b.}, \mathbf{x}_t=u_1$), both of which are $n_{DRS}(u_1)$. 
The size of the output for the input $u_2 \otimes [1,1]^t$ can be found in a similar way to be $2 n_{DRS}(u_2)$.
Note that $n_{DRS}(u_1)=n_{DRS}(u_2)$ by assumption. Therefore, the sizes of the outputs of the DRS algorithm, when $u_1 \otimes [1,1]^t$ and $u_2\otimes [1,1]^t$ are the inputs, are equal.

Similarly, one can easily show that when $u_1 \otimes [0,1]^t$ and $u_2\otimes [0,1]^t$ are processed by the DRS algorithm, the number of output columns are equal.
\endproof
\vspace{2mm}

\textit{Proof of Lemma \ref{lem:rateLossIndepOfOrder}:}
Suppose that there is an index $i$ such that $(s_i, s_{i+1} )= (+,-)$.  
Let $v^{(i+1)}$ and $(v^{(i+1)})'$ be defined by (\ref{eq:vi_recursion}) with sequences $(s_1, \ldots, s_{i-1}, s_i= +, s_{i+1} = -)$ and $(s_1, \ldots, s_{i-1}, s'_i= -, s'_{i+1} = +)$, respectively.
We note that 
\begin{align*}
    v^{(i+1)} = \left( v^{(i-1)} \otimes     
    [0,1]^t
    \right) \otimes 
    [1, 1]^t
    \; \mbox{and } \;
    (v^{(i+1)})' = \left( v^{(i-1)} \otimes 
    [1,1]^t
    \right) \otimes 
    [0,1]^t.
\end{align*}
Lemma \ref{lem:LR=RL} shows that the DRS algorithm splits $v^{(i+1)}$ and $(v^{(i+1)})'$ into the same number of columns.
Furthermore, Lemma \ref{lem:DRSforTwoVectors} shows that the number of output vectors of the DRS algorithm for 
  $  v^{(n)} = [v^{(i+1)}]^{(s_{i+2}, \ldots, s_{n})} $
and 
     $ (v^{(n)})' = [ (v^{(i+1)})' ]^{(s_{i+2}, \ldots, s_{n})}  $
are equal. 
    
Therefore, an occurrence of $(s_i, s_{i+1} )= (+,-)$ in a sequence can be replaced by $(s_i, s_{i+1} )= (-,+)$ without changing the number of output vectors of the DRS algorithm. 
Since any sequence $(s_1, s_2, \ldots, s_{n} )$  with $n_1$ minus signs and $n_2 $ plus signs can be permuted into ${(s'_1, s'_2, \ldots s'_{n})}, $ where $s'_i= -$ for $i\leq n_1$ and $s'_i= +$ for $i> n_1$, by repeatedly replacing any occurrence of $(+,-)$ by $(-,+)$, the above arguments show 
$n_{DRS}(v^{(n)}) =n_{DRS}(v^{(s'_1, s'_2, \ldots s'_{n})})$ always holds.
Hence, the size of DRS algorithm output for $v^{(n)}$ depends only on the values $n_1$ and $n_2$.  
\endproof

\vspace{2mm}

\textit{Proof for Proposition \ref{prop:DRSrateloss}:}
First note that there is a bijection between $\mathset{-,+}^n$ and the columns of $G_2^{\otimes n}$ as follows.  
For each $\mathbf{s}= (s_1, \ldots, s_n)\in \mathset{-,+}^n$, there is exactly one column of $G_2^{\otimes n}$ in the form $1^{(\mathbf{s})}$ (see equation \eqref{eq:vi_recursion} and the paragraph following it, where we use $v= [1] \in \Ftwo$).
The term $\gamma$ can be characterized as follows:
\[
\gamma = \Big[ \frac{1}{N}\sum_{\mathbf{s}\in \mathset{-,+}^n} n_{DRS}(1^{(\mathbf{s})}) \Big] -1.
\]
By Lemma \ref{lem:rateLossIndepOfOrder}, the terms in the summation can be grouped according to the number of minus and plus signs in the sequence. Hence, 
\[
\gamma = \Big[\frac{1}{N}\sum_{i=0}^n \binom{n}{i} n_{DRS}(1^{(s_1= -, \ldots, s_i= -, s_{i+1}= +, \ldots, s_n= + ) } ) \Big] -1.
\]
Let $u_i$ denote the vector $1^{(s_1, \ldots, s_n)}$ with $s_l =-$ for $l\leq i$ and $s_l= +$ for $l> i$. 
Without loss of generality, let $n\lambda \in \N $.
For $i\leq n\lambda$, the Hamming weight of $u_i$ is $2^i \leq 2^{n\lambda} = w_{u.b.}$. Hence, $n_{DRS}(u_i) =1$. 
For $i > n\lambda$,  $u_i$ is split into $2^{i-n\lambda}$ vectors, each of which having weight equal to $2^{n\lambda}$. Therefore, $n_{DRS}(u_i)= 2^{i-n\lambda}$. 

The term $\gamma$ can be written as follows:
\begin{align}
    \gamma &= \sum_{i=0}^{n\lambda} \frac{1}{N}\binom{n}{i} +  \sum_{i=n\lambda+1}^n \frac{1}{N}\binom{n}{i}2^{i-n\lambda} -1  
    = \sum_{i=n\lambda+1}^n a_i , \label{eq:DRS_R_ai}
    \; 
\end{align} where $ a_i \triangleq \frac{1}{N}\binom{n}{i}(2^{i-n\lambda}-1)$.
Now, let $\alpha= i/n$. Since $i > n\lambda$ for each summand $a_i$, we consider $\alpha >\lambda >\frac{1}{2}$ in the following calculations. 
The term $a_i$ can be written as 
\begin{align}
    a_i = a_{n\alpha} = 2^{-n}\binom{n}{n\alpha} 2^{n\alpha-n\lambda +o(1)} 
    =  2^{-n}2^{n h_b(\alpha) +o(1)} 2^{n\alpha-n\lambda +o(1)} 
    = 2^{n\cdot f(\alpha, \lambda) +o(1)}, \label{eq:DRS_ai_f} \; 
\end{align}
where the third equality is due to an asymptotic approximation of the binomial coefficient, and $f(\alpha,\lambda) \triangleq h_b(\alpha)+\alpha-\lambda -1$. 

 Consider $f(\alpha,\lambda)$ as a function of $\alpha$ over the interval $[0,1]$. We find its first and second derivatives with respect to $\alpha$ as follows:
\begin{align}\label{eq:fDeri}
     &\frac{\partial f(\alpha,\lambda)}{\partial \alpha} = 1-\log{\frac{\alpha}{1-\alpha} }, \\
     &\frac{\partial^2 f(\alpha,\lambda)}{\partial^2 \alpha}\nonumber = 
     -\frac{1}{\ln{2}}\big( \frac{1}{ \alpha}+ \frac{1}{ 1-\alpha} \big) <0, 
     \mbox{ for any $0< \alpha <1$.}
\end{align}  
 Thus, for any fixed $\lambda$, $f(\alpha,\lambda)$ is a concave function of $\alpha$ and has local maximum when 
 $\frac{\partial f(\alpha,\lambda)}{\partial \alpha} =0 . $
From \eqref{eq:fDeri}, the equality holds if and only if  $\alpha= \frac{2}{3}$, and the maximum is 
 \begin{equation}
 \sup_{\alpha\in (0,1)}f(\alpha,\lambda) =
 	h_b\left(\frac{2}{3}\right)+\frac{2}{3}-\lambda -1
 = \lambda^*- \lambda .
 \label{eq:max_f_overalpha}
 \end{equation}
Also, when $\alpha =0$, $f(0,\lambda) = -\lambda -1$, and when  $\alpha =1$, $f(1,\lambda) = -\lambda$.  
Hence, \[
\sup_{\alpha\in [0,1]}f(\alpha,\lambda) = \lambda^*- \lambda. \]
 When $\lambda> \lambda^* $, we know $f(\frac{i}{n} ,\lambda) \leq \sup_{\alpha}f(\alpha,\lambda)  <0 $ for all integers $0 \leq i \leq n$, and equation \eqref{eq:DRS_ai_f} implies that  $a_i \to 0$ exponentially fast for each $i$. 
 Equation \eqref{eq:DRS_R_ai} then shows that $\gamma$ also vanishes exponentially fast in $n$.
 \endproof
 \vspace{3mm}

\subsection{Proofs for Subsection \ref{subsection:LCdec_BEC}}\label{appendix:proof_LCdec_BEC}

\textit{Proof for Lemma \ref{lem:DRS_Z_BEC}:} 
We show the claim by proving the following:
when we encode the source bits according to $G'$, the bit-channels observed by the source bits are BECs and that the erasure probabilities are less than or equal to those when $G$ is used.  
We use proof by induction on  $n$.
For ease of notation, we use $M'$ to denote \textproc{DRS}$(M)$ for a given matrix $M$ in this proof.

For $n =1$, if $G_2= G_2'$, we naturally have $Z_{G_2'}^{(s_1)} = Z_{G_2}^{(s_1)}$ for $s_1\in \mathset{-,+}$.
If $G_2 \neq G_2'$, the latter must be
$\begin{bmatrix}
		1 & 0 &0 \\
		0 & 1 & 1\\
\end{bmatrix}$, corresponding to the encoding block diagram in Figure\,\ref{fig:SplitG2}, where the solid black circles indicate a \textit{split} of the XOR operation, i.e., the two operands of the original XOR operation are transmitted through two copies of channel $W$. 
\begin{figure}[h]
    \centering
\includegraphics[width=0.3\textwidth]{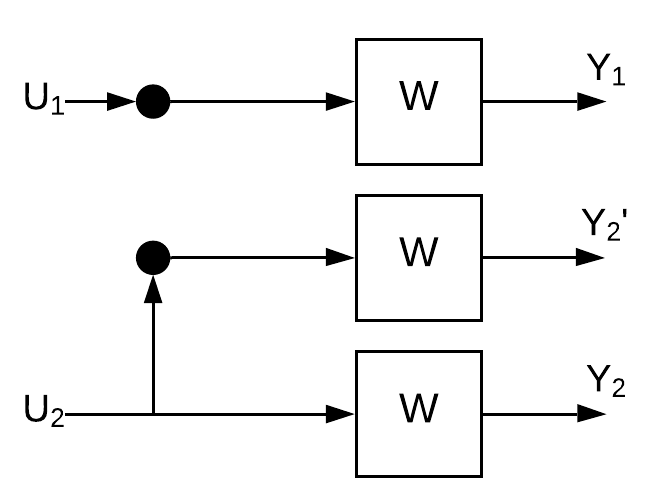}
\caption{Encoding Block for $G_2'$}\label{fig:SplitG2}

 \vspace{-10mm}
\end{figure}
The bit-channels observed by $U_1$ and $U_2$, denoted by $W^\boxminus$ and $W^\boxplus$,  are BECs with erasure probability $\epsilon$ and $\epsilon^2$, respectively. Note that the Bhattacharyya parameters satisfy the following: 
\begin{align*}
    Z_{G_2'}^{(-)}&= Z(W^\boxminus) = 1- \epsilon  \leq Z_{G_2}^{(-)} = Z(W^-) = 1-2\epsilon+\epsilon^2,  \\
    Z_{G_2'}^{(+)}&= Z(W^\boxplus) = 1- \epsilon^2 = Z_{G_2}^{(+)} = Z(W^+).  
\end{align*}

Suppose that, for a fixed $w_{u.b.}$, the claim holds for all $n\leq m$ for some integer $m\geq 1$.
Let $B_m$ denote the encoding block corresponding to the generator matrix $(G_2^{\otimes m})'$, with inputs $U_1, \ldots, U_{2^m}$ and encoded bits  $X_1, \ldots, X_{f(m)}$, where $f(m)$ is the number of columns in $(G_2^{\otimes m})'$. 
Using the matrix notation, the relation between the 
input bits and encoded bits is:
\begin{equation*}\label{eq:WmMatrixNotation}
    (U_1, \ldots, U_{2^m}) (G_2^{\otimes m})' = (X_1, \ldots, X_{f(m)}). 
\end{equation*}
    
For $n= m+1$, the matrix $(G_2^{\otimes m+1})'$ is associated with $(G_2^{\otimes m})'$ as follows:
\begin{equation}\label{eq:DRS_G_recursion}
    (G_2^{\otimes m+1})' = \textproc{DRS}\Big(
\begin{bmatrix}
		(G_2^{\otimes m})' & \zero \\
		(G_2^{\otimes m})' & (G_2^{\otimes m})'
\end{bmatrix}\Big).
\end{equation}
Since $(G_2^{\otimes m})'$ consists of the outputs of the DRS algorithm, the columns in the right half of the input matrix in equation \eqref{eq:DRS_G_recursion} remain unaltered in the output. 
For the columns in the left half, they are of the form $[v^t, v^t]^t$ for some column $v$ of $(G_2^{\otimes m})'$.
If $2 w_H(v) > w_{u.b.}$, the outputs of the DRS algorithm are $[\zero^t, v^t]^t$ and $[v^t, \zero^t]^t$ because the vector $v$ must have weight no larger than the threshold. 
If  $2 w_H(v) \leq w_{u.b.}$, the algorithm leaves the vector unchanged.
We may represent the encoding block $B_{m+1}$ as in Figure\,\ref{fig:BmBlock}, where it is assumed that the $j$-th column of the input matrix in \eqref{eq:DRS_G_recursion} is halved by the DRS algorithm.

\begin{figure}
     \centering
     \begin{subfigure}[b]{0.7\textwidth}
         \centering
         \includegraphics[trim=2cm 2cm 5.5cm 2cm, clip,  width= \textwidth]{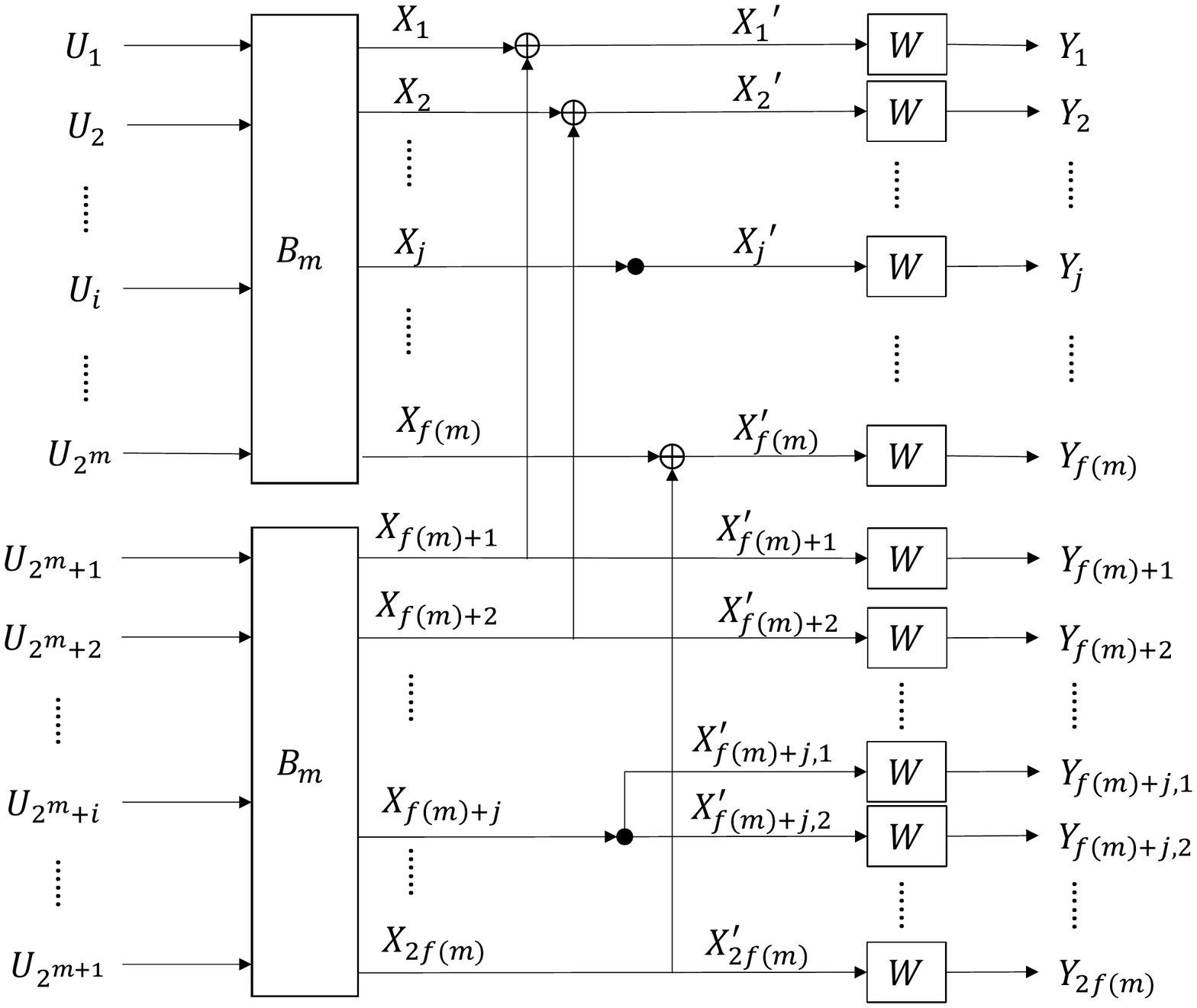}
         \caption{Encoding Block $B_{m+1}$}
        \label{fig:BmBlock}
     \end{subfigure}
    \vspace{5mm}
    
     \begin{subfigure}[b]{0.8\textwidth}
         \centering
         \includegraphics[trim=2cm 2cm 2cm 2cm, clip, width= \textwidth]{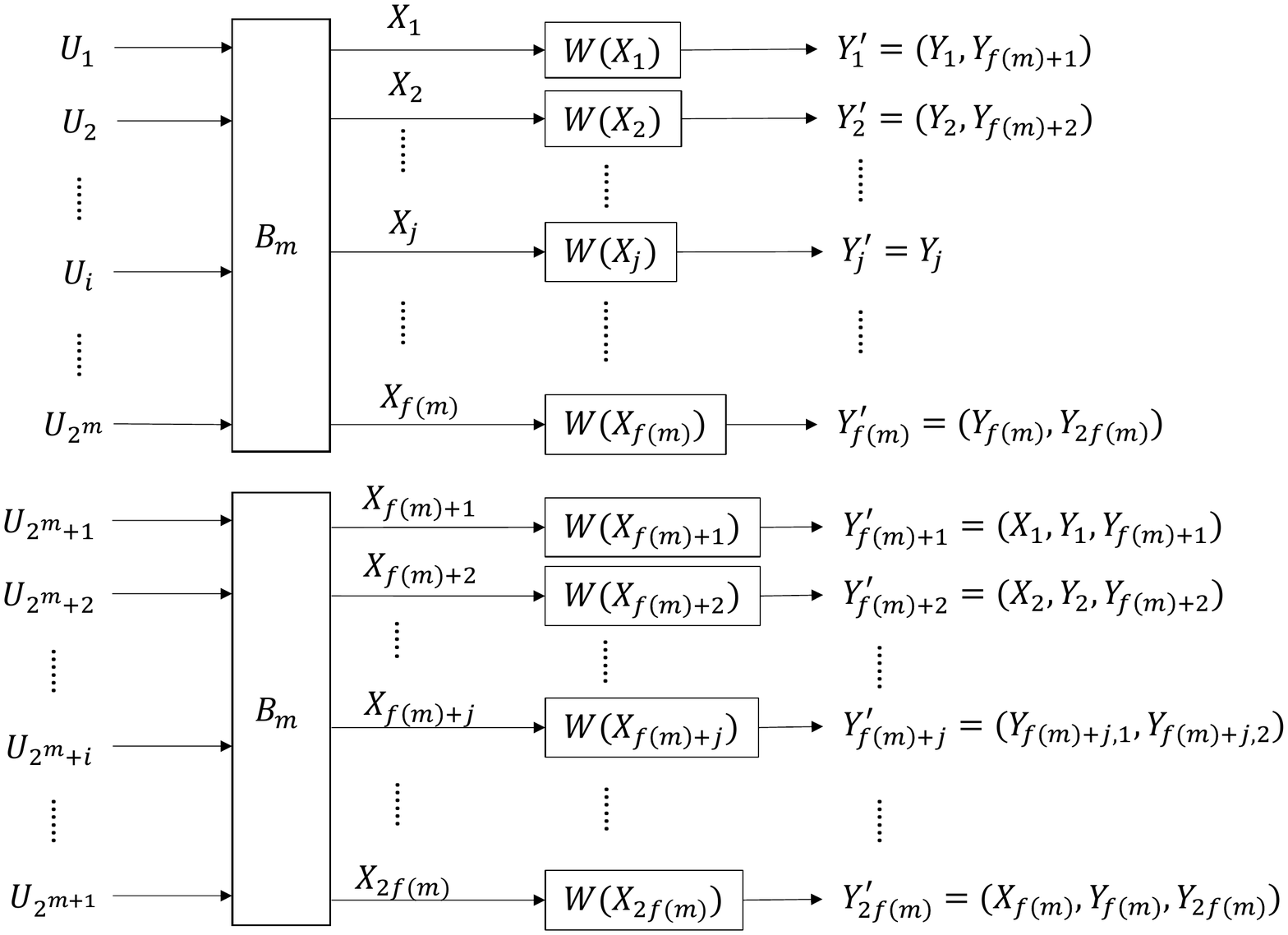}
        \caption{Equivalent Encoding Block $B_{m+1}$}
    \label{fig:BmBlock_Equvi}
     \end{subfigure}
\caption{}
        \label{fig:BmDiagrams}
\end{figure}

The erasure probabilities for the bit-channels observed by $X_i$, denoted here as $W(X_i)$, are less than or equal to $2\epsilon-\epsilon^2$ for $1\leq i \leq f(m)$, and are equal to $\epsilon^2$ for $f(m)+1 \leq i \leq 2 f(m)$, respectively.
Hence we may replace the XOR operations to the right of the $X_i$'s as well as the transmission over $W$'s by BECs $W(X_1), \ldots,W(X_{f{(m)}}),$ $W(X_{f{(m)}+1}), \ldots, W(X_{2f{(m)}})$, as in Figure\,\ref{fig:BmBlock_Equvi}.

One may observe that the erasure probability for the bit-channel observed by $U_i$ is a non-decreasing function of those of $W(X_1), \ldots, W(X_{f{(m)}})$ for  $i \leq 2^m$, and of $W(X_{f{(m)}+1}), \ldots,$  $W(X_{2f{(m)}})$ for  $i >2^m$. 
So for $i \leq 2^m$, we have  
\begin{align*}
    Z_{(G_2^{\otimes m+1})'}\left(U_i \,| \, Z(W)= \epsilon \right) 
    &\leq Z\left[ W(U_i)\, |\,  Z(W(X_j))= 2\epsilon- \epsilon^2, \mbox{ for } 1\leq j \leq f(m) \right]  \\
    &= Z_{(G_2^{\otimes m})'} \left(U_{i} \,|\, Z(W)= 2\epsilon- \epsilon^2 \right)\\
    &\leq Z_{G_2^{\otimes m}} \left(U_{i}\, | \, Z(W)= 2\epsilon- \epsilon^2 \right) \\
    &= Z_{G_2^{\otimes m+1}} \left(U_i \,| \,  Z(W)= \epsilon \right), 
\end{align*}
    where the first inequality is due to $Z(W(X_j))\leq  2\epsilon- \epsilon^2  \mbox{ for } 1\leq j \leq f(m)$ and the second inequality follows from the hypothesis of the induction. 
    
    Similarly, for  $i >2^m$, we have 
\begin{align*}
    Z_{(G_2^{\otimes m+1})'}\left(U_i  \,| \, Z(W)= \epsilon  \right) 
    &= Z\left[ W(U_i) | Z(W(X_j))= 2\epsilon- \epsilon^2 ,\mbox{ for } f(m)+1\leq j \leq 2f(m)  \right]    \\
    &= Z_{(G_2^{\otimes m})'} \left(U_{i-f(m)}  \,| \, Z(W)= 2\epsilon- \epsilon^2  \right)\\
    &\leq Z_{G_2^{\otimes m}} \left(U_{i-f(m)}  \,| \, Z(W)= 2\epsilon- \epsilon^2  \right) \\
    &= Z_{G_2^{\otimes m+1}} \left(U_i  \,| \, Z(W)= \epsilon  \right). 
\end{align*}    
Hence the inequality holds when $n= m+1$ as well. 
\endproof

\subsection{Proofs for Subsection \ref{subsection:LCdec_BMS}}\label{appendix:proof_LCdec_BMS}

\textit{Proof of Proposition \ref{prop:ADRScompl}:}
First note that each XOR operation at the $j$-th recursion can be associated with exactly one vector $\mathbf{s} =(s_1, s_2, \ldots,  s_n) \in \mathset{-,+}^n$ and $s_j = -$. 
For example, assume that the number of minus signs in $\mathbf{s}$, denoted as $m(\mathbf{s})$, is larger than $n_{lub} =\log w_{u.b.} = n\lambda$, and let $\tau =\tau(\mathbf{s})$ be the index such that $m(s_{\tau}, s_{\tau+1},\ldots, s_n)= n_{lub} $ and $s_{\tau} =-$. 
Then for each index $i$ in the set $\mathset{k: 1 \leq k < \tau, s_k =-}$, there is a bijection between the pair $(\mathbf{s}, i)$ and an XOR operation at the $i$-recursion which is split and modified in the A-DRS scheme. 
Hence, the extra complexity of the SC decoder for the A-DRS scheme, compared to that of the SC decoder for the code based on $G_2^{\otimes n}$, is given by
\begin{align}\label{eq:ADRScomp_bound}
    \sum_{l=1}^{n-n_{lub}+1} & \abs{\mathset{\mathbf{s} \in \mathset{-,+}^n: \tau(\mathbf{s} ) > l , s_l =- } }  2(2^{l+1} -2)c \nonumber \\
    = &\sum_{l=1}^{n-n_{lub}+1} \sum_{k=l+1}^{n-n_{lub}+1} \abs{\mathset{\mathbf{s} \in \mathset{-,+}^n: \tau(\mathbf{s} ) =k , s_l =-  } }  2(2^{l+1} -2)c \nonumber \\
    =  &\sum_{k=1}^{n-n_{lub}+1} \binom{n-k+1}{n_{lub} } 2^{k-2} \sum_{l=1}^{k-1}    2(2^{l+1} -2)c \nonumber \\
    \leq & 4c\sum_{k=1}^{n-n_{lub}+1} \binom{n-k+1}{ n_{lub}} 2^{k-2}  \sum_{l=1}^{k-1}   2^{l}\nonumber  \\
    = & 4c\sum_{k=1}^{n-n_{lub}+1} \binom{n-k+1}{ n_{lub}}  2^{k-2}  (2^{k}-2) \nonumber \\
    \leq &4c\sum_{k=0}^{n-n_{lub}} \binom{n-k}{ n_{lub}}  2^{2k}.
\end{align}
Now, let  $\alpha = \frac{k}{n} \in [0, 1-\lambda]$. Using Stirling's approximation we have
\begin{align*}
    &\binom{n-k}{ n_{lub}}  2^{2k} =  \binom{n(1-\alpha)}{ n\lambda }  2^{2\alpha n} \approx 2^{n(1-\alpha) h_b(\frac{\lambda }{1-\alpha}) +2\alpha n},
    \end{align*} for all sufficiently large $n$.
It suffices to assume that $\lambda < \frac{3}{4}$. For $\lambda \geq \frac34$, note that fewer XOR operations are split and modified, and that the resulting additional decoding complexity is not larger than when $\lambda < \frac34$ is used. 
Now let $f(\alpha, \lambda) = (1-\alpha) h_b(\frac{\lambda}{1-\alpha}) +2\alpha$.
We find its maximum, for a given $\lambda$, by solving 
\begin{align*}
0 &=\frac{\partial}{\partial \alpha}f(\alpha, \lambda)\\
&=\frac{\partial}{\partial \alpha}\left[ -(1-\alpha) (\frac{\lambda}{1-\alpha} \log \frac{\lambda}{1-\alpha})- 
(1-\alpha)(1-\frac{\lambda}{1-\alpha}) \log (1-\frac{\lambda}{1-\alpha}) +2\alpha
\right] \\
&=\frac{1}{\ln 2} \left[
-\ln(1-\alpha)+ \ln(1-\alpha -\lambda)
\right] +2,
\end{align*}
which is true if and only if $\alpha = 1- \frac{4}{3}\lambda$.
And note that \[
\frac{\partial^2}{\partial \alpha^2}f(\alpha, \lambda) = \frac{1}{\ln 2}\left[ \frac{1}{1-\alpha} - \frac{1}{1-\alpha -\lambda} \right] <0 
\] for all $\alpha  \in [0, 1-\lambda]$.
The maximum of the function is then given by 
\[
f(1- \frac{4}{3}\lambda, \lambda) = 2 -  \lambda \log3.
\]
Using the union bound, the sum in \eqref{eq:ADRScomp_bound} can be bounded by $
4c n 2^{n( 2 - \lambda \log 3 )}
$, and the ratio of the sum to  $N= 2^n$, denoted by $\gamma_C$, is bounded from above as
$
\gamma_C \leq 4c n 2^{n( 1 -\lambda \log 3 )}.
$
Since the exponent $n( 1 - \lambda \log 3)$ goes to negative infinity as $n$ grows when $\lambda > \lambda^{\dagger} = 1/\log3 \approx 0.631$, the ratio $\gamma_C$ vanishes exponentially in $n$ when  $\lambda >\lambda^{\dagger}$. 
The proposition follows by noting that the SC decoding complexity for the code based on $G_2^{\otimes n}$ is $N\log N$.
\endproof

\vspace{2mm}

\textit{Proof of Proposition\,\ref{Prop:BMSrateloss}:}
Similar to the proof of Proposition\,\ref{prop:ADRScompl}, the number of additional channels due to the A-DRS scheme modification is given by 
\begin{align}\label{eq:ADRSrate_bound}
    \sum_{l=1}^{n-n_{lub}+1} & \abs{\mathset{\mathbf{s} \in \mathset{-,+}^n: \tau(\mathbf{s} ) > l , s_l =- } }  2^l \nonumber \\
    = &\sum_{l=1}^{n-n_{lub}+1} \sum_{k=l+1}^{n-n_{lub}+1} \abs{\mathset{\mathbf{s} \in \mathset{-,+}^n: \tau(\mathbf{s} ) =k , s_l =-  } }  2^l \nonumber \\
    \leq & \sum_{k=1}^{n-n_{lub}+1} \binom{n-k+1}{ n_{lub}} 2^{k-2}  \sum_{l=1}^{k-1}   2^{l}\nonumber  \\
    \leq &\sum_{k=0}^{n-n_{lub}} \binom{n-k}{ n_{lub}}  2^{2k}.
\end{align}
By the argument in the proof of Proposition\,\ref{prop:ADRScompl}, the sum in \eqref{eq:ADRSrate_bound} can be upper bounded by 
$ n 2^{n( 2 - \lambda \log 3 )}$, and the ratio of the sum to  $N= 2^n$, denoted by $\gamma$, is bounded from above as
$ \gamma  \leq  n 2^{n( 1 -\lambda \log 3 )}.$
Since the exponent $n( 1 - \lambda \log 3)$ goes to negative infinity as $n$ grows when $\lambda > \lambda^{\dagger} = 1/\log_2 3 \approx 0.631$, the ratio $\gamma $ vanishes exponentially in $n$ when  $\lambda >\lambda^{\dagger}$. 
\endproof

\vspace{3mm}

\subsection{Proofs for Section \ref{sec:L>2}}

\textit{Proof of Lemma \ref{lemma:SparsityOrderGeneralL}:}
Note that 
\begin{align*}
    w_{GM}(n,G_l) &=  [(w_1\times w_2 \times \ldots \times w_l)^\frac{1}{l}]^n = GM^n = (l^n)^{\log_l{GM} } = N^{\log_l{GM} },
\end{align*}
where $GM= GM(w_1, w_2 , \ldots , w_l)$. Using \eqref{eq:logN'formula},  we have $[\log{N'}]^{\frac{1}{E(G_l)}} \leq   N \leq [\log{N'}]^{\frac{1}{{(1-\delta)E(G_l)}}}$ for all sufficiently large $n$. 
Therefore, 
\begin{align*}
&[\log{N'}]^{  \frac{\log_l{GM}}{E(G_l)} }   \leq
w_{GM}(n,G_l) \leq 
[\log{N'}]^{\frac{\log_l{GM}}{{(1-\delta)E(G_l)}}  }.
\end{align*}
Using the definition of the sparsity order of the geometric mean column weight, $\lambda_{GM}$, can be bounded as:
\begin{equation}\label{eq:lambdaMCbound}
 \frac{\log_l{GM}} {E(G_l)}  \leq 
\lambda_{GM}(n,G_l) \leq 
\frac{\log_l{GM}}{{(1-\delta)E(G_l)}}.
\end{equation}
Writing  $GM$ and  $E(G_l)$ in terms of   $w_i$'s and $D_i$'s, we have
\[
 \frac{\log_l{GM}} {E(G_l)} 
 = \frac{ \sum_{i=1}^l \log_l w_i } {\sum_{i=1}^l \log_l D_i }. \label{eq:orderMC_bd}
\]
So equation \eqref{eq:lambdaMCbound} can be written as:
 \[
  \frac{ \sum_{i=1}^l \log_l w_i } {\sum_{i=1}^l \log_l D_i } \leq
\lambda_{GM}(n,G_l) \leq 
\frac{1}{1-\delta}
 \frac{ \sum_{i=1}^l \log_l w_i } {\sum_{i=1}^l \log_l D_i }.
\]
\endproof

\textit{Proof of Theorem  \ref{Thm:lambdaMC_lgeneral}:}
Let 
\[
G_l=
\left[
\begin{array}{c|c}
1 & \textbf{0}_{1,l-1} \\
\hline
\textbf{1}_{l-1,1} & I_{l-1}
\end{array}
\right]
\] be an $l\times l$ matrix. 

The geometric mean of column weights, $GM(w_1, \ldots, w_l )$, is $ GM(l, 1, \ldots, 1)= l^\frac{1}{l}$. The partial distances $\{D_i\}_{i=1}^l$ of $G_l$ are
$$
D_i=
\begin{cases}
2, \mbox{ for } i\geq 2\\
1, \mbox{ for } i =1.
\end{cases}
$$

It can be observed that $ \lim_{ l \to \infty }\frac{\log_l{GM}} {E(G_l)} = 0 $. Hence, for any fixed $\delta $, there is some $l^*$ such that $\frac{1}{(1-\delta)}\frac{\log_{l^*}{GM}}{{E(G_{l^*})}} <r $. 
By Lemma \ref{lemma:SparsityOrderGeneralL}, we have $\lambda_{GM}(n,G_l) < r$, for sufficiently large $n$.
\endproof

\bibliographystyle{IEEEtran}
\bibliography{IEEEabrv}

\begin{thebibliography}{10}
\providecommand{\url}[1]{#1}
\csname url@samestyle\endcsname
\providecommand{\newblock}{\relax}
\providecommand{\bibinfo}[2]{#2}
\providecommand{\BIBentrySTDinterwordspacing}{\spaceskip=0pt\relax}
\providecommand{\BIBentryALTinterwordstretchfactor}{4}
\providecommand{\BIBentryALTinterwordspacing}{\spaceskip=\fontdimen2\font plus
\BIBentryALTinterwordstretchfactor\fontdimen3\font minus
  \fontdimen4\font\relax}
\providecommand{\BIBforeignlanguage}[2]{{%
\expandafter\ifx\csname l@#1\endcsname\relax
\typeout{** WARNING: IEEEtran.bst: No hyphenation pattern has been}%
\typeout{** loaded for the language `#1'. Using the pattern for}%
\typeout{** the default language instead.}%
\else
\language=\csname l@#1\endcsname
\fi
#2}}
\providecommand{\BIBdecl}{\relax}
\BIBdecl

\bibitem{gallager1962low}
R.~Gallager, ``Low-density parity-check codes,'' \emph{IRE Transactions on
  information theory}, vol.~8, no.~1, pp. 21--28, 1962.

\bibitem{arikan2009channel}
E.~Arikan, ``Channel polarization: A method for constructing capacity-achieving
  codes for symmetric binary-input memoryless channels,'' \emph{IEEE Trans.
  Inf. Theory}, vol.~55, no.~7, pp. 3051--3073, 2009.

\bibitem{karger2011iterative}
D.~R. Karger, S.~Oh, and D.~Shah, ``Iterative learning for reliable
  crowdsourcing systems,'' in \emph{Advances in neural information processing
  systems}, 2011, pp. 1953--1961.

\bibitem{vempaty2014reliable}
A.~Vempaty, L.~R. Varshney, and P.~K. Varshney, ``Reliable crowdsourcing for
  multi-class labeling using coding theory,'' \emph{IEEE Journal of Selected
  Topics in Signal Processing}, vol.~8, no.~4, pp. 667--679, 2014.

\bibitem{dimakis2010network}
A.~G. Dimakis, P.~B. Godfrey, Y.~Wu, M.~J. Wainwright, and K.~Ramchandran,
  ``Network coding for distributed storage systems,'' \emph{IEEE transactions
  on information theory}, vol.~56, no.~9, pp. 4539--4551, 2010.

\bibitem{lee2018speeding}
K.~Lee, M.~Lam, R.~Pedarsani, D.~Papailiopoulos, and K.~Ramchandran, ``Speeding
  up distributed machine learning using codes,'' \emph{IEEE Transactions on
  Information Theory}, vol.~64, no.~3, pp. 1514--1529, 2018.

\bibitem{SGDviaLDGM2019}
S.~{Horii}, T.~{Yoshida}, M.~{Kobayashi}, and T.~{Matsushima}, ``Distributed
  stochastic gradient descent using ldgm codes,'' in \emph{2019 IEEE
  International Symposium on Information Theory (ISIT)}, 2019, pp. 1417--1421.

\bibitem{mazumdar2017semisupervised}
A.~Mazumdar and S.~Pal, ``Semisupervised clustering, and-queries and locally
  encodable source coding,'' in \emph{Advances in Neural Information Processing
  Systems}, 2017, pp. 6489--6499.

\bibitem{pang2019coding}
C.-J. Pang, H.~Mahdavifar, and S.~S. Pradhan, ``Coding for crowdsourced
  classification with xor queries,'' \emph{Proceedings of IEEE Information
  Theory Workshop (ITW)}, 2019.

\bibitem{mackay1999good}
D.~J. MacKay, ``Good error-correcting codes based on very sparse matrices,''
  \emph{IEEE transactions on Information Theory}, vol.~45, no.~2, pp. 399--431,
  1999.

\bibitem{mackay1995good}
D.~J. MacKay and R.~M. Neal, ``Good codes based on very sparse matrices,'' in
  \emph{IMA International Conference on Cryptography and Coding}.\hskip 1em
  plus 0.5em minus 0.4em\relax Springer, 1995, pp. 100--111.

\bibitem{zhong2005approaching}
W.~Zhong, H.~Chai, and J.~Garcia-Frias, ``Approaching the shannon limit through
  parallel concatenation of regular {LDGM} codes,'' in \emph{Proceedings.
  International Symposium on Information Theory, 2005. ISIT 2005.}\hskip 1em
  plus 0.5em minus 0.4em\relax IEEE, 2005, pp. 1753--1757.

\bibitem{garcia2003approaching}
J.~Garcia-Frias and W.~Zhong, ``Approaching shannon performance by iterative
  decoding of linear codes with low-density generator matrix,'' \emph{IEEE
  Communications Letters}, vol.~7, no.~6, pp. 266--268, 2003.

\bibitem{zhong2005ldgm}
W.~Zhong and J.~Garcia-Frias, ``{LDGM} codes for channel coding and joint
  source-channel coding of correlated sources,'' \emph{EURASIP Journal on
  Applied Signal Processing}, vol. 2005, pp. 942--953, 2005.

\bibitem{LDGM_capAchieving2011}
A.~M. Kakhaki, H.~K. Abadi, P.~Pad, H.~Saeedi, F.~Marvasti, and K.~Alishahi,
  ``Capacity achieving linear codes with random binary sparse generating
  matrices over the binary symmetric channel,'' in \emph{2012 IEEE
  International Symposium on Information Theory Proceedings}, 2012, pp.
  621--625.

\bibitem{mahdavifar2017scaling}
H.~Mahdavifar, ``Scaling exponent of sparse random linear codes over binary
  erasure channels,'' in \emph{2017 IEEE International Symposium on Information
  Theory (ISIT)}.\hskip 1em plus 0.5em minus 0.4em\relax IEEE, 2017, pp.
  689--693.

\bibitem{lin2018coding}
W.~Lin, S.~Cai, B.~Wei, and X.~Ma, ``Coding theorem for systematic {LDGM} codes
  under list decoding,'' in \emph{2018 IEEE Information Theory Workshop
  (ITW)}.\hskip 1em plus 0.5em minus 0.4em\relax IEEE, 2018, pp. 1--5.

\bibitem{PhysRevA.102.012423}
\BIBentryALTinterwordspacing
P.~Fuentes, J.~Etxezarreta~Martinez, P.~M. Crespo, and J.~Garcia-Frias,
  ``Approach for the construction of non-calderbank-steane-shor
  low-density-generator-matrix--based quantum codes,'' \emph{Phys. Rev. A},
  vol. 102, p. 012423, Jul 2020. [Online]. Available:
  \url{https://link.aps.org/doi/10.1103/PhysRevA.102.012423}
\BIBentrySTDinterwordspacing

\bibitem{PhysRevA.103.022617}
\BIBentryALTinterwordspacing
P.~Fuentes, J.~E. Martinez, P.~M. Crespo, and J.~Garcia-Frias, ``Design of
  low-density-generator-matrix--based quantum codes for asymmetric quantum
  channels,'' \emph{Phys. Rev. A}, vol. 103, p. 022617, Feb 2021. [Online].
  Available: \url{https://link.aps.org/doi/10.1103/PhysRevA.103.022617}
\BIBentrySTDinterwordspacing

\bibitem{horii2019distributed}
S.~Horii, T.~Yoshida, M.~Kobayashi, and T.~Matsushima, ``Distributed stochastic
  gradient descent using ldgm codes,'' in \emph{2019 IEEE International
  Symposium on Information Theory (ISIT)}.\hskip 1em plus 0.5em minus
  0.4em\relax IEEE, 2019, pp. 1417--1421.

\bibitem{korada2010polar}
S.~B. Korada, E.~Sasoglu, and R.~Urbanke, ``Polar codes: Characterization of
  exponent, bounds, and constructions,'' \emph{IEEE Transactions on Information
  Theory}, vol.~56, no.~12, pp. 6253--6264, 2010.

\bibitem{Arikan2}
E.~Ar{\i}kan, ``Source polarization,'' \emph{Proceedings of IEEE International
  Symposium on Information Theory (ISIT)}, pp. 899--903, 2010.

\bibitem{abbe2011polarization}
E.~Abbe, ``Polarization and randomness extraction,'' \emph{Proceedings of IEEE
  International Symposium on Information Theory (ISIT)}, pp. 184--188, 2011.

\bibitem{mondelli2015achieving}
M.~Mondelli, S.~H. Hassani, I.~Sason, and R.~L. Urbanke, ``Achieving {Marton's}
  region for broadcast channels using polar codes,'' \emph{IEEE Transactions on
  Information Theory}, vol.~61, no.~2, pp. 783--800, 2015.

\bibitem{goela2015polar}
N.~Goela, E.~Abbe, and M.~Gastpar, ``Polar codes for broadcast channels,''
  \emph{IEEE Transactions on Information Theory}, vol.~61, no.~2, pp. 758--782,
  2015.

\bibitem{STY}
E.~\c{S}a\c{s}o\u{g}lu, E.~Telatar, and E.~Yeh, ``Polar codes for the two-user
  binary-input multiple-access channel,'' \emph{IEEE Transactions on
  Information Theory}, vol.~59, no.~10, pp. 6583--6592, 2013.

\bibitem{MELK}
H.~Mahdavifar, M.~El-Khamy, J.~Lee, and I.~Kang, ``Achieving the uniform rate
  region of general multiple access channels by polar coding,'' \emph{IEEE
  Transactions on Communications}, vol.~64, no.~2, pp. 467--478, 2016.

\bibitem{MV}
H.~Mahdavifar and A.~Vardy, ``Achieving the secrecy capacity of wiretap
  channels using polar codes,'' \emph{IEEE Transactions on Information Theory},
  vol.~57, no.~10, pp. 6428--6443, 2011.

\bibitem{andersson2010nested}
M.~Andersson, V.~Rathi, R.~Thobaben, J.~Kliewer, and M.~Skoglund, ``Nested
  polar codes for wiretap and relay channels,'' \emph{IEEE Communications
  Letters}, vol.~14, no.~8, pp. 752--754, 2010.

\bibitem{mahdavifar2015polar}
H.~Mahdavifar, M.~El-Khamy, J.~Lee, and I.~Kang, ``Polar coding for
  bit-interleaved coded modulation,'' \emph{IEEE Transactions on Vehicular
  Technology}, vol.~65, no.~5, pp. 3115--3127, 2015.

\bibitem{altuug2010moderate}
Y.~Altu{\u{g}} and A.~B. Wagner, ``Moderate deviation analysis of channel
  coding: Discrete memoryless case,'' in \emph{2010 IEEE International
  Symposium on Information Theory}.\hskip 1em plus 0.5em minus 0.4em\relax
  IEEE, 2010, pp. 265--269.

\bibitem{gallager1968information}
R.~G. Gallager, \emph{Information theory and reliable communication}.\hskip 1em
  plus 0.5em minus 0.4em\relax Springer, 1968, vol.~2.

\bibitem{mondelli2016unified}
M.~Mondelli, S.~H. Hassani, and R.~L. Urbanke, ``Unified scaling of polar
  codes: Error exponent, scaling exponent, moderate deviations, and error
  floors,'' \emph{IEEE Transactions on Information Theory}, vol.~62, no.~12,
  pp. 6698--6712, 2016.

\bibitem{ArikanRatePolarization09}
E.~{Arikan} and E.~{Telatar}, ``On the rate of channel polarization,'' in
  \emph{2009 IEEE International Symposium on Information Theory}, June 2009,
  pp. 1493--1495.

\bibitem{moser2019information}
S.~M. Moser, ``Information theory: Lecture notes,'' 2019.

\bibitem{cover2012elements}
T.~M. Cover and J.~A. Thomas, \emph{Elements of information theory}.\hskip 1em
  plus 0.5em minus 0.4em\relax John Wiley \& Sons, 2012.

\end{thebibliography}

\end{document}